\documentclass[twocolumn, pra,superscriptaddress, amsmath,amssymb, aps,prr]{revtex4-2}

\usepackage{amsmath,amssymb,amsfonts,relsize,mathtools,times}%

\usepackage{amsthm}%
\usepackage{mathrsfs}%
\usepackage[title]{appendix}%
\usepackage{xcolor}%
\usepackage{textcomp}%
\usepackage{subfigure}

\usepackage{booktabs}%
\usepackage[linesnumbered,ruled,vlined]{algorithm2e}
\usepackage{algpseudocode}%
\usepackage{listings}%
\usepackage{bm}
\usepackage{bbold}
\usepackage{color}
\usepackage[english]{babel}
\usepackage{lipsum}
\usepackage[normalem]{ulem}
\usepackage{grffile}
\usepackage{braket}
\usepackage{appendix}
\usepackage[colorlinks, linkcolor=blue, anchorcolor=blue, citecolor=blue, urlcolor=blue]{hyperref}

\newcommand{\bfZ}{\mathbb{Z}}

\newcommand{\I}{\mathrm{i}}

\newcommand{\poly}{\mathrm{poly}}

\newcommand{\dyadic}[1]{\mathbf{#1}}

\newcommand{\sket}[1]{\left.\left|{#1}\right>\!\right>}
\newcommand{\sbra}[1]{\left<\!\left<{#1}\right|\right.}

\newcommand{\eqmoment}{\mathbb{E}_{\ket{\phi^{{\rm req}}_\dyadic{A}}\sim \mathcal{D}\left(\mathcal{S}^{{\rm req}}_n\right)}}
\newcommand{\eqmomentop}{\mathbb{E}_{\ket{\phi^{{\rm req}}_\dyadic{A}}\sim \mathcal{D}\left(\mathcal{S}^{{\rm req}}_n\right)}\ket{\phi^{{\rm req}}_\dyadic{A}}\bra{\phi^{{\rm req}}_\dyadic{A}}}
\newcommand{\momentop}{\mathbb{E}_{\ket{\phi^{{\rm eq}}_\dyadic{A}}\sim\mathcal{D}\left(\mathcal{S}^{{\rm eq}}_n\right)}\ket{\phi^{{\rm eq}}_\dyadic{A}}\bra{\phi^{{\rm eq}}_\dyadic{A}}}
\newcommand{\momentopp}{\mathbb{E}_{\ket{\phi^{{\rm (r)eq}}_\dyadic{A}}\sim\mathcal{D}\left(\mathcal{S}^{{\rm (r)eq}}_n\right)}\ket{\phi^{{\rm (r)eq}}_\dyadic{A}}\bra{\phi^{{\rm (r)eq}}_\dyadic{A}}}
\newcommand{\moment}{\mathbb{E}_{\ket{\phi^{{\rm eq}}_\dyadic{A}}\sim \mathcal{D}\left(\mathcal{S}^{{\rm eq}}_n\right)}}

\newcommand{\tr}[1]{\mathrm{tr}\!\left(#1\right)}
\newcommand{\rank}{{\rm rank}}

\newtheorem{lemma}{Lemma}
\newtheorem{theorem}{Theorem}
\newtheorem{proposition}[theorem]{Proposition}%

\begin{document}

\preprint{APS/123-QED}

\title{Resource-efficient shadow tomography using equatorial stabilizer measurements}


\author{Guedong Park}
	\affiliation{IRC NextQuantum, Department of Physics and Astronomy, Seoul National University, Seoul, 08826, Republic of Korea}
	\author{Yong Siah Teo}
        \email{ys\_teo@snu.ac.kr}
	\affiliation{IRC NextQuantum, Department of Physics and Astronomy, Seoul National University, Seoul, 08826, Republic of Korea}
	
	
	\author{Hyunseok Jeong}
        \email{h.jeong37@gmail.com}
	\affiliation{IRC NextQuantum, Department of Physics and Astronomy, Seoul National University, Seoul, 08826, Republic of Korea}
	


\begin{abstract}

We propose a resource-efficient shadow-tomography scheme using equatorial-stabilizer measurements generated from subsets of Clifford unitaries. For $n$-qubit systems, equatorial-stabilizer-based shadow-tomography schemes can estimate $M$ observables (up to an additive error $\varepsilon$) using $\mathcal{O}(\log(M),\mathrm{poly}(n),1/\varepsilon^2)$ sampling copies for a large class of observables, including those with traceless parts possessing polynomially-bounded Frobenius norms. For arbitrary quantum-state observables with a constant Frobenius norm, sampling complexity becomes $n$-independent. Our scheme only requires an $n$-depth controlled-$Z$~(CZ) circuit [$\mathcal{O}(n^2)$ CZ~gates] and Pauli measurements per sampling copy. Alternatively, our scheme is realizable with $2n$-depth circuits comprising $n^2$ nearest-neighboring CNOT gates, exhibiting a smaller maximal gate count relative to previously-known randomized-Clifford-based proposals. We numerically confirm our theoretically-derived shadow-tomographic sampling complexities with random pure states and multiqubit graph states. Finally, we demonstrate that equatorial-stabilizer-based shadow~tomography is more noise-tolerant than randomized-Clifford-based schemes in terms of fidelity estimation for Greenberger--Horne--Zeilinger (GHZ) state and W~state.

\end{abstract}

\maketitle

\section{Introduction}
	Quantum algorithms, which are algorithms based on quantum-mechanical principles, have been shown to outperform classical algorithms for many computational tasks \cite{shor1999, deutsch1992,gavinsky2011}. For these purposes, certifying the quality of the prepared (noisy) quantum state is an important prerequisite. Quantum tomography \cite{adriano2002,lnp:2004uq,donnel2016,nielsen2021} is a general method to reconstruct a typically unknown noisy $n$-qubit quantum state $\rho$ (or a channel). This, however, requires exponentially many copies of~$\rho$ in~$n$ for an accurate reconstruction.
	
	On the other hand, if we are to estimate certain classes of physical properties of unknown states, \emph{shadow tomography} \cite{aronson2018} enables us to do so with a polynomially-large number of samples. The well-known randomized Pauli tomography, or ``six-Paulis'' tomography~\cite{huang2020,acharya2021}, is useful for estimating the expectation values of Hamiltonians with low-weighted Pauli observables, learning quantum channels~\cite{huang2023} and measuring non-stabilizerness (or magic)~\cite{leone2022,oliviero2022}. This procedure uses a simple single-layer circuit structure, and is therefore ineffective in estimating highly-entangled properties of large quantum systems. The second protocol, randomized-Clifford shadow tomography~\cite{huang2020}, is better for estimating the fidelity between the unknown input and entangled target states for large qubit number~$n$. This advantage originates from the fact that Clifford operations are unitary $3$-designs \cite{Webb:2016Clifford,Zhu:2017multiqubit,kueng2015,Pucha2017}. However, for large~$n$, this protocol utilizes at most $\frac{7n^2}{2}$ neighboring CNOT~gates \cite{maslov2023,bravyi2021}. Therefore, shadow tomography with arbitrary Clifford gates could lead to noisy measurement results in real experiments as the large number of noisy gates accumulates physical noise. 
	
	To cope with such problems, recent works showed that target observables, on average over suitable ensembles of input states, possess good shadow norms even with neighboring gates of shallow depths~\cite{bertoni2024,akhtar2023,hu2023}.  However, it is uncertain whether arbitrary state inputs reside in the average case, and such shallow shadow~\cite{bertoni2024} preserves the shadow norm scaling and thus the efficiency of input-state copy number. It is known that generalized measurements~\cite{acharya2021,grier2022,nguyen2022} can reduce the gate count or sampling complexity. However, noise-tolerant implementation of such non-Clifford measurements is believed to be challenging~\cite{bravyi2005,wang2023,bravyi2012}. There exist alternatives with reduced Clifford-gate resources \cite{flammia2011,silva2011}. Nevertheless, applicable input states and target observables that lead to $\poly(n)$-shadow norms are restrictive. Hence, a complete analysis of an optimal tomography scheme for broad classes of observables remains an open problem.
	
	In this work, we introduce resource-efficient schemes for shadow tomography, catered to observable-expectation-value estimations, that offer a lower gate count and circuit depth while still being applicable to a large observable class. The key point is that it is not necessary to utilize the entire set of stabilizer states \cite{aaronson2004,bravyi2016i} comprising all Clifford-rotated bases. Instead, our proposed schemes employ a smaller stabilizer subset, namely the \emph{equatorial stabilizer positive operator-valued measure}~(ESPOVM) \cite{bravyi2019} together with the computational basis. Additionally, it is sufficient to perform shadow tomography with an even smaller subset, namely the \emph{real ESPOVM} or~RESPOVM, when the target observable is real with respect to the computational basis. Our scheme only involves a uniform sampling of commuting CZ gates as the only two-qubit gate resources. 
	
	Next, we theoretically prove shadow-tomographic capabilities for (R)ESPOVM schemes by deriving upper bounds of sampling-copy complexity for the observable-expectation-value estimation, which is directly related to \emph{shadow norm}~\cite{huang2020,bertoni2024}. Specifically, we note that a sampling complexity of $\mathcal{O}(\log(M),\mathrm{poly}(n),1/\varepsilon^2)$ suffices to simultaneously estimate (up to an additive error~$\varepsilon$) $M$ complex observables belonging to a broad class, which \emph{sufficiently} include observables possessing polynomially-bounded Frobenius norms for their traceless parts. For arbitrary \emph{quantum-state} observables (\emph{fidelity estimation}), the sampling complexity is reduced to simply $\mathcal{O}(\log(M),1/\varepsilon^2)$, which is constant in $n$ and on par with randomized-Clifford tomography~\cite{huang2020}. These results are promising when compared with a recent results~\cite{zhang2024,wang2024} which utilizes smaller subsets of CZ-circuit-based measurements consisting of mutually-unbiased bases~(MUB)~\cite{Durt:2010mutually}, for estimation of observable expectation values, but has an exponentially large bound of sampling copies to estimate arbitrary state observables. 
	
    Furthermore, we shall show that our scheme can be implemented by a more simplified circuit structure, and attain improved noise tolerance. These results are derived from the fact that it involves a uniform sampling of commuting CZ gates as the only two-qubit gate resources. This enables our scheme to require only $n$-depth of CZ gates or $2n$-depth, hence at most $n^2<\frac{7n^2}{2}$ nearest neighboring~(NN) CNOT~gates~\cite{maslov2018}. Such a scaling factor optimization renders our scheme to become more robust to gate noise. Ref.~\cite{bertoni2024} analyzed the trade-off among the sampling, computational complexities, and required circuit-depth scaling, from constant- to linear-depth. On the other hand, our work aims to reduce the required depth while preserving the efficiency of sampling and computational complexity for arbitrary observables. Recent work~\cite{schuster2025} showed that small-biased shadow tomography is possible even with log-scaled depth. However, we observed that our circuit depth is still much lower in the intermediate qubit number regime, where many practical cases of fidelity estimation reside. 
    
    In addition, we show that the estimator of (R)ESPOVM shadow tomography itself can preserve its output value from the various measurement bit-flip errors. This observation suggests another crucial structural consideration for Clifford circuits in terms of noise tolerance, aside from reducing the gate complexity. If neighboring gates have negligible noise, the state-injection-based implementation of (R)ESPOVM measurements to tailor the arbitrary noise into the dephasing noise reinforces such maintenance to the noise in long-ranged CZ gates. We provide its numerical evidence in the fidelity estimation results of Greenberger--Horne--Zeilinger (GHZ) state~\cite{greenberger2007} and W states~\cite{dur2000}.

\section{Preliminaries}
	
	\subsection{Clifford group and equatorial stabilizer POVM}\label{sec:Clifford_group_and_equatorial_stabilizer}
	
	Before presenting our main results, we define and explain several concepts and terminologies that shall be used throughout the article. We consider a source that allows one to prepare multiple copies of an $n$-qubit quantum state~$\rho$. The corresponding observable $O$ shall be a Hermitian operator and is therefore symmetric whenever its matrix elements in the computational basis are all real, at which we call $O$ is \emph{real}. We define the \emph{Clifford group} $\mathrm{Cl}_n$ as the set of operators \cite{gottesman1998,aaronson2004},
	\begin{equation}
		\mathrm{Cl}_n=\left\{U\,|\,\forall E\in \mathcal{P}_n,\,UEU^{\dagger}\in \mathcal{P}_n\right\},
	\end{equation}
	where $\mathcal{P}_{n}$ is the \emph{Pauli group} generated by tensor products of the standard Pauli operators $X$, $Y$, $Z$ and the identity~$I$. 
	Next, we define the \emph{stabilizer(-states)} set $\mathcal{S}_n$ as the following set of quantum states:
	\begin{equation}
		\mathcal{S}_n= \left\{\ket{\psi}|\ket{\psi}=U\ket{0^{\otimes n}}\;{\rm for\; some}\; U\in \mathrm{Cl}_n\right\},
	\end{equation}
	where the element $\ket{\psi}\in\mathcal{S}_n$ is a \emph{(pure) stabilizer state}.
	
	We are now ready to look into subclasses of the stabilizer set~$\mathcal{S}_n$. Specifically, the complex \emph{(pure) equatorial stabilizer (-states) set} \cite{bravyi2019} $\mathcal{S}^{{\rm eq}}_n$ and its real counterpart $\mathcal{S}^{{\rm req}}_n$ are respectively given by
\begin{widetext}
	\begin{align}
		&\mathcal{S}^{{\rm eq}}_n=\left\{\ket{\phi^{{\rm eq}}_\dyadic{A}}\equiv\frac{1}{\sqrt{2^n}}\sum_{\bm{x}\in\mathbb{Z}^n_2} \I^{\bm{x}^\top\dyadic{A}\bm{x}}\ket{\bm{x}}\bigg|\,\dyadic{A}=(a_{ij})_{i,j\in [n]}\;{\rm where}\;a_{ij}=a_{ji},\,a_{ij}\in\begin{cases}
			\bfZ_4 & \text{if }i=j\\
			\bfZ_2 & \text{if }i>j
		\end{cases}\right\},\label{main:eq1}\\
		&\mathcal{S}^{{\rm req}}_n=\left\{\ket{\phi^{{\rm req}}_\dyadic{A}}\equiv\frac{1}{\sqrt{2^n}}\sum_{\bm{x}\in\mathbb{Z}^n_2} (-1)^{\bm{x}^\top\dyadic{A}\bm{x}}\ket{\bm{x}}\bigg|\,\dyadic{A}=(a_{ij})_{i,j\in [n]}\;{\rm where}\;a_{ij}\in\begin{cases}
			\bfZ_2 & \text{if }i\le j\\
			\{0\} & \text{if }i>j
		\end{cases}\right\}.\label{main:eq2}
	\end{align}
\end{widetext}
	 From Eqs.~\eqref{main:eq1} and~\eqref{main:eq2}, we know that the number of complex equatorial stabilizer states is $\left|\mathcal{S}^{{\rm eq}}_n\right|=2^{\frac{n^2+3n}{2}}$, and that of real equatorial stabilizer states is $\left|\mathcal{S}^{{\rm req}}_n\right|=2^{\frac{n^2+n}{2}}$. As the names suggest, these are subsets of $\mathcal{S}_n$. The $\mathcal{S}^{{\rm eq}}_n$ was originally used for a faster classical simulation \cite{bravyi2019,bravyi2016i}. More explicitly, the original norm estimation technique~\cite{bravyi2016i}, which is a key step of classical simulation, involves the uniform sampling of stabilizer states and taking the inner product with the target magic state. However, if we restrict the sampling set to the subset $\mathcal{S}^{{\rm eq}}_n\subsetneq \mathcal{S}_n$, we may boost the speeds of sampling and inner product calculation~\cite{bravyi2019}.
	
	In this work, we shall turn these two stabilizer subclasses into POVMs for shadow tomography, and consequently reveal the key advantages that are related to the exploitation of their smaller set sizes relative to $\mathcal{S}_n$. That $\mathcal{S}^{{\rm eq}}_n$ and $\mathcal{S}^{{\rm req}}_n$ may be readily transformed into POVMs is straightforward~\cite{supple}, as the corresponding sets of positive operators $\left\{\frac{2^n}{\left|\mathcal{S}^{\rm eq}_n\right|}\ket{\phi^{{\rm eq}}_\dyadic{A}}\bra{\phi^{{\rm eq}}_\dyadic{A}}\right\}$ and $\left\{\frac{2^n}{\left|\mathcal{S}^{\rm req}_n\right|}\ket{\phi^{{\rm req}}_\dyadic{A}}\bra{\phi^{{\rm req}}_\dyadic{A}}\right\}$ house elements that sum to the identity (see Sec.~\ref{sec:technical} for the proof). We coin them \emph{(real) equatorial stabilizer POVMs}, or (R)ESPOVMs for short.

	\subsection{Random Clifford tomography and classical shadows}\label{sec:random_clifford_tomography}
	
	One of the typical methods for estimating the expectation value of a given observable $O$ is randomized-Clifford shadow tomography~\cite{huang2020}. 
	We briefly explain how the random Clifford operation may be used to estimate $\braket{O}=\tr{\rho\,O}$ (see Ref.~\cite{huang2020} for details). For this purpose, the algorithm using $N$ sampling-copy number is stated below: 
	
	\begin{enumerate}
		\item Take $\rho$ as an input and uniformly choose a Clifford unitary $U$ from $\mathrm{Cl}_n$.
		\item Take $\rho\leftarrow U\rho\,U^{\dagger}$.
		\item Measure in the $Z$~basis and obtain the outcome $\bm{p}\in\bfZ^n_2$.	
		\item Compute 
        \begin{align}
            \widehat{O}&=\tr{O\mathcal{M}^{-1}(U^{\dag}\ket{\bm{p}}\bra{\bm{p}}U)}\nonumber\\&=(2^n+1)\bra{\bm{p}}UOU^{\dag}\ket{\bm{p}}-\tr{O}.
        \end{align}
        Repeat the aforementioned steps $N$ times and take the average of $N$~$\widehat{O}$'s to obtain the estimator $\widehat{\left<O\right>}$ for $\left<O\right>$.
	\end{enumerate}
	
	Here, $\mathcal{M}^{-1}(O)\equiv (2^n+1)O-\tr{O}I$ for an arbitrary operator $O$ is known as the \emph{classical shadow}~\cite{huang2020}. Reference~\cite{huang2020} showed that such randomized-Clifford tomography achieves $\varepsilon$-accurate estimation of $\braket{O}$ with the sampling-complexity upper bound $\mathcal{O}(3\tr{O^2}/\varepsilon^2)$. Hence, this scheme is efficient when $O$ is a target pure state (\emph{fidelity estimation}) of which $\tr{O^2}$ is a constant, albeit with the assumption that every $\widehat{O}$ is efficiently computable.
	Reference~\cite{bravyi2021} showed that we can efficiently sample a Clifford unitary $U$ from $\mathrm{Cl}_n$ in $\mathcal{O}(n^2)$-time. In addition, the sampled form contains the \textsc{hf}$'$\textsc{---sw---h---hf} sections (where small uppercase letters here refer to gate layers as opposed to the regular uppercase ones that denote gate labels, and \textsc{hf} and \textsc{hf}$'$ are \emph{Hadamard-free sections}) where \textsc{hf}-section is a circuit containing layers of \textsc{cnot$^\downarrow$}---\textsc{cz}---\textsc{s}---\textsc{pauli}. Here, \textsc{h}, \textsc{sw}, \textsc{cnot$^\downarrow$}, \textsc{cz} and \textsc{s} respectively refer to \emph{pure layers} of Hadamard~gates, SWAP~gates, CNOT~gates with control qubits higher than the target ones when visualized in a standard circuit diagram, CZ~gates and $S$~gates. Naturally, \textsc{sw} contains many CNOT gates. 

    Direct fidelity-estimation scheme~\cite{flammia2011} is another conventional way for estimating~fidelities. While this method invokes a lower gate count and circuit depth than our schemes, the estimable region of observables is more restricted than both the randomized-Clifford and our (R)ESPOVM schemes. As such, only the randomized-Clifford tomography schemes shall be of interest to be compared with our work.

    \begin{algorithm}[t]
    	\caption{(R)ESPOVM shadow tomography algorithm (with the MOM estimation technique)}
    	\label{main:algo1}
    	\KwIn{$n$-qubit unknown quantum state $\rho$, total input-state copies $N = N'K \in \mathbb{N}$ (with even $N'$ and $K \in \mathbb{N}$), observable $O$, and the (R)ESPOVM scheme}
    	\KwOut{Estimate $m$}
    	
    	\For{$z \in [K]$}{
    		$m_z \gets 0$; \quad $m' \gets 0$\;
    		
    		\For{$k \in [N'/2]$}{
    			Prepare $\rho$ as the input state\;
    			
    			Randomly choose $\ket{\phi^{\mathrm{(r)eq}}_{\mathbf{A}}} \in \mathcal{S}^{\mathrm{(r)eq}}_n$, where $\mathbf{A} = (a_{ij})$\;
    			
    			\For{$i < j \in [n]$}{
    				\If{$a_{ij} = 1$}{
    					Apply $\mathrm{CZ}_{ij}$: $\rho \gets \mathrm{CZ}_{ij} \rho \mathrm{CZ}_{ij}$\;
    				}
    			}
    			
    			\For{$i \in [n]$}{
    				\If{ESPOVM is used}{
    					\uIf{$a_{ii} \equiv 1 \pmod{4}$}{Apply $S^\dagger$ on qubit $i$\;}
    					\uElseIf{$a_{ii} \equiv 2 \pmod{4}$}{Apply $Z$ on qubit $i$\;}
    					\uElseIf{$a_{ii} \equiv 3 \pmod{4}$}{Apply $S$ on qubit $i$\;}
    				}
    				\ElseIf{$a_{ii} \equiv 1 \pmod{2}$ (RESPOVM)}{
    					Apply $Z$ on qubit $i$\;
    				}
    				Apply $H$ on qubit $i$\;
    			}
    			
    			Measure to obtain bit-string $\bm{p} \in \mathbb{Z}^n_2$\;
    			
    			\For{$i \in [n]$\label{algo1:start_for}}{
    				\uIf{ESPOVM is used}{
    					$a_{ii} \gets a_{ii} + 2p_i \pmod{4}$\;
    				}
    				\ElseIf{RESPOVM is used}{
    					$a_{ii} \gets a_{ii} + p_i \pmod{2}$\;
    				}
    			}\label{algo1:end_for}
    			
    			Prepare $\rho$ again and measure in Pauli-$Z$ basis to obtain $\bm{p'} \in \mathbb{Z}^n_2$\;
    			
    			\uIf{ESPOVM is used}{
    				\[
    				\begin{aligned}
    					m' \gets m' + {} & 2^n \bra{\phi^{\mathrm{eq}}_{\mathbf{A}}} O \ket{\phi^{\mathrm{eq}}_{\mathbf{A}}} \\
    					& + \bra{\bm{p'}} O \ket{\bm{p'}} - \mathrm{tr}(O)
    				\end{aligned}
    				\]
    			}
    			\ElseIf{RESPOVM is used}{
    				\[
    				\begin{aligned}
    					m' \gets m' + {} & 2^{n-1} \bra{\phi^{\mathrm{req}}_{\mathbf{A}}} O \ket{\phi^{\mathrm{req}}_{\mathbf{A}}} \\
    					& + \bra{\bm{p'}} O \ket{\bm{p'}} - \tfrac{1}{2} \mathrm{tr}(O)
    				\end{aligned}
    				\]
    			}
    		}
    		
    		$m_z \gets m' / (N'/2)$\;
    	}
    	
    	$m \gets \mathrm{median}(m_1, m_2, \ldots, m_K)$\;
    \end{algorithm}
	
	\section{Equatorial-stabilizer shadow tomography}
	\begin{figure*}[t]
		\centering
		\includegraphics[width=\textwidth]{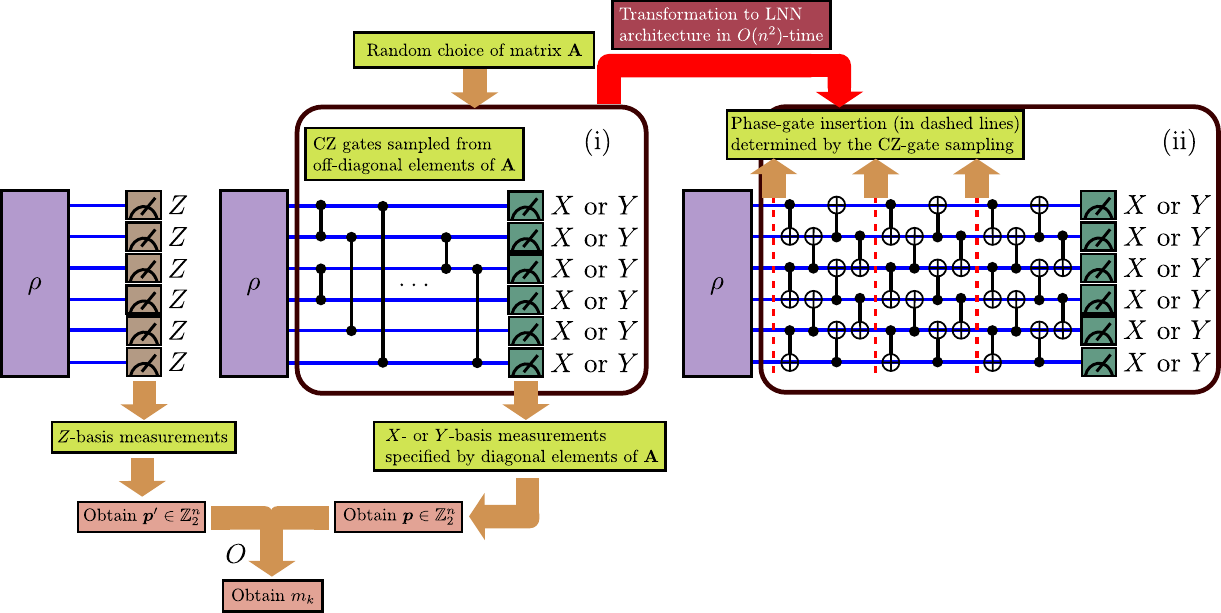}
		
		\caption{Schematic for 6-qubit (R)ESPOVM shadow tomography to estimate~$\left<O\right>=\tr{\rho\,O}$ of an unknown state~$\rho$ and target observable~$O$. Sampled CZ circuits can be transformed to a linear nearest-neighboring (LNN) architecture \cite{maslov2018}~[circuit~(ii)]. Also in~(ii), the inserted phase gates are single-qubit gates and the measurement is done in the same manner as in~(i). Any one of the two architectures~(i) and~(ii) can be executed to obtain the measurement bit-string column~$\bm{p}\in\bfZ^n_2$. Together with the bit-string column~$\bm{p}'$ acquired from the computational-basis measurement, the observable expectation value can be reconstructed after repeating $N$ rounds of $(\bm{p},\bm{p}')$ collection.}
		\label{fig:fig1}
	\end{figure*}
	
	We now introduce the theoretical framework of the shadow tomography scheme that uses (real)~equatorial stabilizer POVMs. We remember that we want to estimate the target value is $\tr{\rho\,O}$. We can show that it can be rewritten as, 
    \begin{align}
        {\rm tr}(\rho\, O)&=\sum_{\ket{\phi^{{\rm eq}}_\dyadic{A}}\in \mathcal{S}^{{\rm eq}}_n}\frac{2^{2n}}{\left|\mathcal{S}^{{\rm eq}}_n\right|}\bra{\phi^{{\rm eq}}_{\dyadic{A}}}\rho\ket{\phi^{{\rm eq}}_{\dyadic{A}}}\bra{\phi^{{\rm eq}}_\dyadic{A}}O\ket{\phi^{{\rm eq}}_{\dyadic{A}}}\nonumber\\&+\sum_{\bm{x}\in\bfZ^n_2}\bra{\bm{x}}\rho\ket{\bm{x}}\bra{\bm{x}}O\ket{\bm{x}}-{\rm tr}(O)\,.\label{eq:equatorial_tomography}
    \end{align}
    If $O$ is a real matrix with respect to the computational basis,
    \begin{align}
        {\rm tr}(\rho\, O)&=\sum_{\ket{\phi^{{\rm req}}_\dyadic{A}}\in \mathcal{S}^{{\rm req}}_{n}}\frac{2^{2n-1}}{\left|\mathcal{S}^{{\rm req}}_n\right|}\bra{\phi^{{\rm req}}_{\dyadic{A}}}\rho\ket{\phi^{{\rm req}}_{\dyadic{A}}}\bra{\phi^{{\rm req}}_{\dyadic{A}}}O\ket{\phi^{{\rm req}}_{\dyadic{A}}}\nonumber\\&+\sum_{\bm{x}\in\bfZ^n_2}\bra{\bm{x}}\rho\ket{\bm{x}}\bra{\bm{x}}O\ket{\bm{x}}-\frac{1}{2}{\rm tr}(O).\label{eq:real_equatorial_tomography}
    \end{align}
    A detailed proof can be seen in Sec.~\ref{sec:technical}. In Sec.~\ref{sec:Clifford_group_and_equatorial_stabilizer}, we explained that $\left\{\frac{2^n}{\left|\mathcal{S}^{\rm eq}_n\right|}\ket{\phi^{{\rm eq}}_\dyadic{A}}\bra{\phi^{{\rm eq}}_\dyadic{A}}\right\}$ and $\left\{\frac{2^n}{\left|\mathcal{S}^{\rm req}_n\right|}\ket{\phi^{{\rm req}}_\dyadic{A}}\bra{\phi^{{\rm req}}_\dyadic{A}}\right\}$ are POVM. As a result, we obtain the following equality, 
    \begin{align}
        \sum_{\dyadic{A}}\frac{2^n}{\left|\mathcal{S}^{\rm eq}_n\right|}\braket{\phi^{{\rm eq}}_\dyadic{A}|\rho|\phi^{{\rm eq}}_\dyadic{A}}= \sum_{\dyadic{A}}\frac{2^n}{\left|\mathcal{S}^{\rm req}_n\right|}\braket{\phi^{{\rm req}}_\dyadic{A}|\rho|\phi^{{\rm req}}_\dyadic{A}}=1.
    \end{align}
    Therefore, we can sample the $\dyadic{A}$, which represents a specific (real) equatorial stabilizer state, from the probability distributions $\left\{\frac{2^n}{\left|\mathcal{S}^{\rm eq}_n\right|}\braket{\phi^{{\rm eq}}_\dyadic{A}|\rho|\phi^{{\rm eq}}_\dyadic{A}}\right\}$ and $\left\{\frac{2^n}{\left|\mathcal{S}^{\rm req}_n\right|}\braket{\phi^{{\rm req}}_\dyadic{A}|\rho|\phi^{{\rm r
    eq}}_\dyadic{A}}\right\}$. 

    Now we present the estimation scheme for the ESPOVM. The RESPOVM follows a similar manner and will be shown in Sec.~\ref{sec:technical}. We prepare two copies of $\rho$ [$N$ (even) copies in total]. The first copy is measured to obtain the outcome $\dyadic{A}$ by the distribution  $\left\{\frac{2^n}{\left|\mathcal{S}^{\rm eq}_n\right|}\braket{\phi^{{\rm eq}}_\dyadic{A}|\rho|\phi^{{\rm eq}}_\dyadic{A}}\right\}$, and the second copy is measured by the computational basis to obtain the outcome $\bm{p'}\in \bfZ^n_2$. Next, we take the estimator $\widehat{O}$ as, 
    \begin{align}
        \widehat{O}=2^n\braket{\phi^{{\rm eq}}_\dyadic{A}|O|\phi^{{\rm eq}}_\dyadic{A}}+\braket{\bm{\bm{p'}}|O|\bm{p'}}-\tr{O}.
    \end{align}
    We repeat such measurements many times to have many copies of $\widehat{O}$'s and take the average to obtain the desired estimation $\widehat{\braket{O}}$. To further reduce the sampling copies, we can apply the median of means (MOM) estimation technique~\cite{lerasle2019lecture,Blair:1985problem,Jerrum:1986random} by simply repeating the scheme and take the median of the collection of estimation values. The complete systematic procedure for (R)ESPOVM shadow tomography is shown in Algo.~\ref{main:algo1} whose schematic illustration is given in Fig.~\ref{fig:fig1}. The theoretical validation of such algorithms, including the one of RESPOVM which is derived by a similar logic, will be shown in Sec.~\ref{sec:technical}.
    
    Let us supply the details for the main steps.
    From the perspective of classical shadows~\cite{huang2020,aronson2018}, we note that our (R)ESPOVM tomography operates with the following "one-to-two-copy" measurement channel $\mathcal{M}$
    \begin{align}\label{eq:measurement_channel}
        &\mathcal{M}(\rho)\nonumber\\&=\left(\sum_{\dyadic{A}}\frac{2^n}{\mathcal{S}_n^{\rm eq}}\braket{\phi^{\rm eq}_{\dyadic
        A}|\rho|\phi^{\rm eq}_{\dyadic{A}}}\ket{\phi^{\rm eq}_{\dyadic
        A}}\bra{\phi^{\rm eq}_{\dyadic
        A}},\sum_{\bm{x}}\braket{\bm{x}|\rho|\bm{x}}\ket{\bm{x}}\bra{\bm{x}}\right).
    \end{align}
    The measurement channel for RESPOVM similarly follows. Hence, the scheme outputs the following respective classical shadows (inverted channel map $\mathcal{M}^{-1}$~\cite{huang2020}): 
		\begin{align}\label{eq:classical_shadow}
			&\mathcal{M}^{-1}(O_1,O_2)\nonumber\\&=
			\begin{cases}
				2^n O_1+O_2-I&(\small{{\rm ESPOVM+Comp.basis}})\,,\\
				2^{n-1}O_1+O_2-\frac{I}{2}&(\small{\rm RESPOVM+Comp. basis}).
			\end{cases}
		\end{align}
        Such a "one-to-two-copy" measurement channel again implies that we need two copies of $\rho$ for the single measurement step. We can note that indeed, $\mathcal{M}^{-1}\circ\mathcal{M}(O)=O$ (see Sec.~\ref{sec:technical} for details) for arbitrary (real part for RESPOVM) Hermitian operator $O$.

    The reason why we need the computational basis measurement is that the ESPOVM is \emph{not} informationally complete~(IC), whereas the union of the ESPOVM and computational basis is IC \cite{supple,scott2006}. In other words, along with the computational basis measurements, we are capable of reconstructing an arbitrary $\rho$ and, thus, general observable expectation values. 
    So, two copies of $\rho$ are required for a single measurement trial, which would give us a bit-string column~$\bm{p}$ from the ESPOVM and another bit-string column~$\bm{p}'$ from the computational basis. However, in Sec.~\ref{subsec:STprop}, we shall see that only a single copy of the input state is necessary for each trial if~$O$ is a cluster state. 

    From Algo.~\ref{main:algo1}, we see that ESPOVM measurement requires $S,\,H $ gates before the computational basis measurements, but RESPOVM does not need $S$ gates. Equivalently, the ESPOVM needs $X,Y$-basis measurements while RESPOVM only needs $X$-basis measurement. While both structures need similar two-copy resources for the single copy measurement, we shall see in the next section that in many cases, the RESPOVM requires much fewer sampling-copy numbers than ESPOVM to achieve the same estimation accuracy.

		We summarize the requirements for implementing Algo.~\ref{main:algo1} for (R)ESPOVM shadow tomography in
		\begin{theorem}\label{main:thm0}
			{\bf [(R)ESPOVM implementation]} On an $n$-qubit state~$\rho$, the implementation of $n$-qubit ESPOVM shadow tomography requires CZ-gate circuits of depths of at most~$n$ by Vizing's theorem~\cite{maslov2022}, and $n$~single-qubit Pauli~(X,Y,Z) measurements. For RESPOVM shadow tomography, the $Y$-basis measurement is not needed. 
		\end{theorem}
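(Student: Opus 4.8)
The plan is to establish Theorem~\ref{main:thm0} in two parts: first the circuit-depth bound for the CZ layer, and then the count and type of single-qubit measurements. The essential observation is that, by Eqs.~\eqref{main:eq1} and Algo.~\ref{main:algo1}, every sampled ESPOVM element $\ket{\phi^{\rm eq}_\dyadic{A}}$ is prepared from $\ket{0^{\otimes n}}$ by applying (i) a collection of commuting two-qubit $\mathrm{CZ}_{ij}$ gates indexed by the off-diagonal entries $a_{ij}=1$ ($i<j$), (ii) single-qubit phase gates ($S$, $S^\dagger$, or $Z$) dictated by the diagonal entries $a_{ii}\in\bfZ_4$, and (iii) a layer of Hadamards. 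Since only the CZ gates are two-qubit, the entire circuit depth is governed by how efficiently the CZ gates can be parallelized.

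First I would phrase the CZ-gate parallelization as an edge-coloring problem. The off-diagonal support $\{(i,j):a_{ij}=1\}$ defines a simple graph $G$ on $n$ vertices; two CZ gates can be applied in the same time step precisely when they act on disjoint qubit pairs, i.e.\ when the corresponding edges share no vertex. A valid partition of the CZ gates into parallel layers is therefore exactly a proper edge coloring of $G$, and the minimal depth equals the chromatic index $\chi'(G)$. By Vizing's theorem~\cite{maslov2022}, $\chi'(G)\le\Delta(G)+1$ where $\Delta(G)$ is the maximum degree; since $G$ is a subgraph of $K_n$ we have $\Delta(G)\le n-1$, giving $\chi'(G)\le n$. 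Hence the CZ-circuit depth is at most $n$, as claimed. I would also note that the diagonal phase gates and the final Hadamard layer each contribute only $O(1)$ single-qubit depth and so do not affect the leading $n$-scaling.

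Next I would count the measurements. After the state-preparation unitary, the measurement is performed in the computational ($Z$) basis on all $n$ qubits, which is equivalent to rotating each qubit by the inverse of its single-qubit preparation gate and then measuring. Reading off from Algo.~\ref{main:algo1}, in the ESPOVM case qubit $i$ is acted on by $H$ preceded by an element of $\{S^\dagger,Z,S,I\}$ before a $Z$-measurement; absorbing these into the measurement basis shows each qubit is measured in one of the $X$, $Y$, or $Z$ bases, for a total of $n$ single-qubit Pauli measurements. For RESPOVM, Eq.~\eqref{main:eq2} forces the diagonal entries into $\bfZ_2$, so only $Z$ or $I$ precedes the Hadamard; no $S$ gate ever appears, and the effective measurement bases are restricted to $X$ and $Z$ only. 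Thus the $Y$-basis measurement is never required in the RESPOVM scheme, completing the second claim.

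The main obstacle is making the depth bound tight and correctly attributed: one must justify that the commuting CZ gates can indeed be scheduled according to an edge coloring (which relies on the mutual commutativity of CZ gates so that reordering within and across layers is harmless), and then invoke Vizing's theorem with the correct degree bound. A minor subtlety worth checking is that the worst case $\Delta(G)=n-1$ can actually force depth close to $n$ (so the bound is not loose by a constant), and that the single-qubit phase and Hadamard layers, though present, do not increase the stated CZ-depth of $n$ since they are not two-qubit gates. Everything else is a direct reading of the gate sequence in Algo.~\ref{main:algo1}.
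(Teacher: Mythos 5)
Your proof is correct and follows essentially the same route as the paper: commute the single-qubit phase and Hadamard gates into the final measurement so that each qubit is read out in an $X$, $Y$, or $Z$ basis (with $Y$ absent for RESPOVM since the diagonal entries of $\dyadic{A}$ lie in $\bfZ_2$), leaving only the commuting CZ gates, whose depth is bounded by $n$ via Vizing's theorem. The only difference is that you spell out the edge-coloring argument ($\chi'(G)\le\Delta(G)+1\le n$) that the paper delegates to the citation, which is a welcome but not substantively different elaboration.
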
\vspace{2ex}
		
		\section{Sampling-copy complexity}
		\label{subsec:STprop}
		
		Shadow tomography aims to estimate state properties with a sampling complexity that grows at most polynomially quickly with the qubit number $n$. To be more precise, let us specify that a single property estimator is accurate up to a maximal additive error of $\varepsilon>0$. Then, the number of sampling copies $N$ should scale as at most $\mathcal{O}(\mathrm{poly}(n)/\varepsilon^2)$. We shall prove that our (R)ESPOVM algorithms also possess upper bounds of similar scaling behaviors with randomized-Clifford tomography. 

        Before doing so, we first explain the relation between required sampling copies and the variance of the estimator. We denote $\widehat{O}^{{\rm eq}}_{\dyadic{A}}\equiv2^{n}\bra{\phi^{{\rm eq}}_{\dyadic{A}}}O\ket{\phi^{{\rm eq}}_{\dyadic{A}}}$,  $\widehat{O}^{{\rm req}}_{\dyadic{A}}\equiv2^{n-1}\bra{\phi^{{\rm req}}_{\dyadic{A}}}O\ket{\phi^{{\rm req}}_{\dyadic{A}}}$, and $\widehat{O}^{{\rm bin}}_{\bm{p'}}\equiv\bra{\bm{p'}}O\ket{\bm{p'}}$ for one observable $O$ that are measured and obtained in each step of the shadow-tomography procedure, where the matrix~$\dyadic{A}$ and bit string~$\bm{p'}$ are sampled in accordance with Algo.~\ref{main:algo1}. The first two random numbers $\widehat{O}^{{\rm eq}}_\dyadic{A}$ and $\widehat{O}^{{\rm req}}_\dyadic{A}$ refer to the respective outcome estimators for ESPOVM and RESPOVM, and $\widehat{O}^{{\rm bin}}_{\bm{p'}}$ is the outcome of a $Z$-basis measurement. Upon defining $\mathbb{E}(\widehat{O}_{\dyadic{A},\bm{p'}})$ as the analytic average value of $\widehat{O}_{\dyadic{A},\bm{p'}}$, which is defined as 
		\begin{align}
			\widehat{O}_{\dyadic{A},\bm{p'}}\equiv\begin{cases}\widehat{O}^{{\rm eq}}_{\dyadic{A}}+\widehat{O}^{{\rm bin}}_{\bm{p'}}-\tr{O}&({\rm ESPOVM})\\
				\widehat{O}^{{\rm req}}_{\dyadic{A}}+\widehat{O}^{{\rm bin}}_{\bm{p'}}-\frac{1}{2}\tr{O}&({\rm RESPOVM}),
			\end{cases}
		\end{align} the estimation variance ${\rm Var}(\widehat{O}_{\dyadic{A},\bm{p'}})\equiv \mathbb{E}(\widehat{O}_{\dyadic{A},\bm{p'}}^2)-(\mathbb{E}(\widehat{O}_{\dyadic{A},\bm{p'}}))^2$ of our two shadow-tomography schemes is shown as
		\begin{align}\label{main:eq_vareq}
			{\rm Var}^{{\rm (r)eq}}(\widehat{O}_{\dyadic{A},\bm{p'}})&={\rm Var}(\widehat{O}^{{\rm (r)eq}}_{\dyadic{A}})+{\rm Var}(\widehat{O}^{{\rm bin}}_{\bm{p'}})\nonumber\\&={\rm Var}(\widehat{O_0}^{{\rm (r)eq}}_{\dyadic{A}})+{\rm Var}(\widehat{O_0}^{{\rm bin}}_{\bm{p'}}),
		\end{align}
         where $O_0\equiv O-\tr{O}\frac{I}{2^n}$ (traceless part),
		where the first equality is because $(\dyadic{A},\bm{p'})$ sampling is independent of each other.
		The variance terms on the right side, after maximizing over the input state~$\rho$, are equivalent to the respective squared shadow norms~\cite{huang2020} of the (R)ESPOVM and $Z$-basis measurement.
        To estimate $\braket{O}$, we take the sample mean $\widehat{\braket{O}}$ over the $N$ numbers of sampled values $\widehat{O}_{\dyadic{A},\bm{p'}}$. 
        
	The sample variance of $\widehat{\left<O\right>}$ for estimating a single property $\left<O\right>$ is given by
        
		\begin{equation}
			\mathrm{Var}\!\left(\widehat{\left<O\right>}\right)=\begin{cases}
				\dfrac{{\rm Var}^{{\rm eq}}(\widehat{O}_{\dyadic{A},\bm{p'}})}{N/2} & (\text{ESPOVM, comp. basis})\,,\\[2ex]
				\dfrac{{\rm Var}^{{\rm req}}(\widehat{O}_{\dyadic{A},\bm{p'}})}{N/2} & (\text{RESPOVM, comp. basis})\,.
			\end{cases}
		\end{equation}
		Note that since $\widehat{\left<O\right>}$ is unbiased, $\mathrm{Var}\!\left(\widehat{\left<O\right>}\right)$ is identical to the mean-squared error of $\widehat{\left<O\right>}$ that measures the estimation accuracy of $\widehat{\left<O\right>}$ with respect to $\left<O\right>$:
		\begin{equation}
			\mathrm{Var}\!\left(\widehat{\left<O\right>}\right)=\mathbb{E}\!\left(\left(\widehat{\left<O\right>}-\left<O\right>\right)^2\right)\,.
			\label{eq:MSE_defn}
		\end{equation}
        Therefore, Chevyshev inequality implies that ${\rm Var}^{{\rm (r)eq}}(\widehat{O}_{\dyadic{A},\bm{p'}})$ determines the required number of sampling copies $\mathcal{O}\left(\frac{{\rm Var}^{{\rm (r)eq}}(\widehat{O}_{\dyadic{A},\bm{p'}})}{\epsilon^2\delta}\right)$ to achieve the desired accuracy within the failure probability $\delta$. It is known that with the aid of MOM technique~\cite{Jerrum:1986random,huang2020}, we can reduce such complexity to $\mathcal{O}\left(\frac{{\rm Var}^{{\rm (r)eq}}(\widehat{O}_{\dyadic{A},\bm{p'}})}{\epsilon^2}\log(\delta^{-1})\right)$. 

        The dominant term determining the scaling of ${\rm Var}^{{\rm (r)eq}}(\widehat{O}_{\dyadic{A},\bm{p'}})$ is ${\rm Var}(\widehat{O_0}^{{\rm (r)eq}}_{\dyadic{A}})$, and its upper bound $\mathbb{E}((\widehat{O_0}^{{\rm (r)eq}}_\dyadic{A})^2)$ is rewritten by,
        \begin{align}
			\mathbb{E}((\widehat{O_0}^{{\rm (r)eq}}_\dyadic{A})^2)\nonumber=&\,\frac{2^{\gamma}}{\left|\mathcal{S}^{{\rm (r)eq}}_n\right|}\sum_{\ket{\phi^{{\rm (r)eq}}_\dyadic{A}}\in\mathcal{S}^{{\rm (r)eq}}_n}\bra{\phi^{{\rm (r)eq}}_\dyadic{A}}\rho\ket{\phi^{{\rm (r)eq}}_{\dyadic{A}}}\\
			&\times\bra{\phi^{{\rm (r)eq}}_\dyadic{A}}O_0\ket{\phi^{{\rm (r)eq}}_{\dyadic{A}}}\bra{\phi^{{\rm (r)eq}}_\dyadic{A}}O_0\ket{\phi^{{\rm (r)eq}}_{\dyadic{A}}}\label{eq:exp_square_sub}\\
			\nonumber=&\,2^{\gamma}\,{\rm tr}\Big((\rho\otimes O_0 \otimes O_0)\\
			&\times\momentopp^{\otimes 3}\Big)\,,\label{eq:exp_square}\\
			\gamma=&\,\begin{cases}
				3n & \text{for eq}\,,\\
				3n-2 & \text{for req}\,.
			\end{cases}\nonumber
	\end{align}
        Here, $\mathbb{E}_{\ket{\phi_{\dyadic{A}^{\rm (r)eq}}}\sim \mathcal{D}(\mathcal{S}_{n}^{\rm (r)eq})}$ means the analytic average over uniformly chosen (real) equatorial stabilizer states. We see that $\eqmomentop^{\otimes 3}$ is an important factor for the variance scaling. We clarify its structure in the following lemma, which will be proved in Sec.~\ref{proof_lem1}.

        \begin{lemma}\label{method:lem3}
			\begin{align}
				&\,\eqmomentop^{\otimes 3}\nonumber\\
				=&\,\frac{1}{8^{n}}\sum_{\left(\bm{x},\bm{y},\bm{z},\bm{w},\bm{s},\bm{t}\right)\in \mathcal{K}_1(\bfZ^n_2) }\ket{\bm{x}\bm{y}\bm{z}}\bra{\bm{w}\bm{s}\bm{t}},\\
				&\,\momentop^{\otimes 3}\nonumber\\
				=&\,\frac{1}{8^{n}}\sum_{\left(\bm{x},\bm{y},\bm{z},\bm{w},\bm{s},\bm{t}\right) \in \mathcal{K}_2(\bfZ^n_2) }\ket{\bm{x}\bm{y}\bm{z}}\bra{\bm{w}\bm{s}\bm{t}},
			\end{align}
            where
		\begin{align}
			&\mathcal{K}_1(\bfZ^n_2)\equiv\big\{\bm{v}=(\bm{x},\bm{y},\bm{z},\bm{w},\bm{s},\bm{t})|\bm{x},\bm{y},\bm{z},\bm{w},\bm{s},\bm{t}\in\bfZ^{n}_2\nonumber\\
			&\text{
            $\exists$ a partition of $\left\{1,2,\ldots,6\right\}$ into $3$ pairs}\; \text{$\{(i_1,j_1),(i_2,j_2),$} \nonumber\\&\text{$(i_3,j_3)\}$\;such that $\bm{v}_{i_k}=\bm{v}_{j_k}$ for $k=1,2,3$}\big\},\nonumber\\
			&\mathcal{K}_2(\bfZ^n_2)\equiv\big\{\bm{v}=(\bm{x},\bm{y},\bm{z},\bm{w},\bm{s},\bm{t})|\bm{v}\in\mathcal{K}_1(\bfZ^n_2)\;\text{, and if}\nonumber\\
 			&\left\{\bm{x},\bm{y},\bm{z}\right\} \;\text{and}\; \left\{\bm{w},\bm{s},\bm{t}\right\}\;\text{share a common element, say $\bm{u}$,}\nonumber\\
 			&\text{and the other 2 variables in $\left\{\bm{x},\bm{y},\bm{z}\right\}\backslash\left\{\bm{u}\right\}$ or $\left\{\bm{w},\bm{s},\bm{t}\right\}\backslash\left\{\bm{u}\right\}$}\nonumber\\ 
 			&\text{equal, say, $\bm{u}'$, then the remaining 2 variables also equal to~$\bm{u}'$.}\big\}.
		\end{align}
		\end{lemma}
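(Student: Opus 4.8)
The plan is to evaluate both third moments by a direct character-sum average over the defining symmetric matrix $\dyadic{A}$, and then to translate the resulting support conditions into the combinatorial sets $\mathcal{K}_1$ and $\mathcal{K}_2$. I would first expand the threefold tensor power of each projector as a double sum over a ket multi-index $(\bm{x},\bm{y},\bm{z})$ and a bra multi-index $(\bm{w},\bm{s},\bm{t})$, isolating the $\dyadic{A}$-dependent phase. Splitting $\bm{x}^\top\dyadic{A}\bm{x}$ into its diagonal part $\sum_i a_{ii}x_i$ and its off-diagonal part, and using that the entries of $\dyadic{A}$ are independent and uniform, the $\dyadic{A}$-average factorizes over the entries: each diagonal entry yields a Gauss sum (modulo $2$ for the real family, modulo $4$ for the complex family) and each off-diagonal entry yields a modulo-$2$ Gauss sum, all of which equal indicator functions. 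For the real family this leaves the conditions that for every coordinate $i$ an even number of the six vectors carry a $1$ there, and that for every pair $i<j$ an even number of the six vectors carry a $1$ in both positions; for the complex family the diagonal sum runs modulo $4$ over $\{0,\dots,3\}$, so it sharpens to the equality $x_i+y_i+z_i=w_i+s_i+t_i$ for every $i$, together with $x_ix_j+y_iy_j+z_iz_j\equiv w_iw_j+s_is_j+t_it_j \pmod 2$ for every $i<j$. In both cases the surviving prefactor is $2^{-3n}=8^{-n}$, matching the lemma.

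It then remains to identify the two support conditions with $\mathcal{K}_1$ and $\mathcal{K}_2$. For the real case I would read the six vectors as the rows of a $6\times n$ matrix and pass to its columns $\bm{c}_1,\dots,\bm{c}_n\in\bfZ^6_2$. The two conditions say precisely that the columns are self-orthogonal and pairwise orthogonal under the standard bilinear form on $\bfZ^6_2$, i.e. that their span is a totally isotropic subspace; any such subspace extends to a maximal, three-dimensional one, that is, to a self-dual code of length $6$. The key observation is that the $15$ pairings $P$ of $\{1,\dots,6\}$ yield $15$ distinct self-dual codes $W_P=\{\bm{u}:u_a=u_b\ \text{for}\ (a,b)\in P\}$, while there are exactly $15$ self-dual binary codes of length $6$; hence every self-dual code is some $W_P$. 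Consequently the columns lie in some $W_P$, which is equivalent to the six rows agreeing along each pair of $P$, i.e. to a partition of the rows into three equal pairs — exactly membership in $\mathcal{K}_1$. The reverse inclusion is immediate, since paired rows make every column constant on each pair and hence self- and pairwise orthogonal.

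For the complex case I would first note that conditions (I) and (II) are strictly stronger than the real ones (equality implies evenness; matching modulo $2$ implies an even total), so an admissible sextuple already lies in $\mathcal{K}_1$ and decomposes into three equal pairs. I would then classify these by the even multiplicity pattern of the six values, necessarily $(6)$, $(4,2)$, or $(2,2,2)$, and read off the ket and bra triples. The coordinatewise equality (I) forces the two triples to coincide as multisets: in every pattern, a split with unequal triples makes the two diagonal sums differ in some coordinate, since $x_i+y_i+z_i=w_i+s_i+t_i$ fails whenever the repeated vectors of the two triples disagree; conversely equal multisets make (I) and (II) automatic, each being a symmetric function of the triple. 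Finally I would check that ``ket multiset $=$ bra multiset'' is exactly $\mathcal{K}_2$: in $\mathcal{K}_1$ the two triples always share a value $\bm{u}$, and when the other two entries of one triple are equal, multiset equality forces the remaining two to take that same value, which is precisely the defining implication of $\mathcal{K}_2$, whereas each non-multiset-equal $(4,2)$ or $(2,2,2)$ split violates exactly this implication.

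The main obstacle is the combinatorial identification rather than the averaging. In the real case it rests on the coincidence that at length $6$ the number of pairings equals the number of self-dual codes, so one must verify the $15=15$ matching (or, alternatively, argue directly that a three-dimensional totally isotropic subspace of $\bfZ^6_2$ is always of the form $W_P$). In the complex case the delicate point is establishing that the moment equality (I) already collapses every non-diagonal pairing pattern, so that the support is exactly the multiset-symmetric locus encoded by $\mathcal{K}_2$; this requires care because $\mathcal{K}_2$ is phrased as a conditional and must be checked against each admissible multiplicity pattern.
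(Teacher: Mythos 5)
Your proposal is correct, and its first half---averaging the phase entry-by-entry over $\dyadic{A}$, so that each diagonal entry yields a mod-$2$ (real case) or mod-$4$ (complex case) character sum and each off-diagonal entry a mod-$2$ character sum, producing indicator functions, the prefactor $8^{-n}$, and a surviving coefficient of $1$---is exactly the computation the paper performs. Where you genuinely diverge is in converting the resulting support conditions into $\mathcal{K}_1$ and $\mathcal{K}_2$. The paper argues coordinate-by-coordinate: it locates the first position where the six strings disagree, substitutes the linear condition into the quadratic one to reduce to the second-moment analysis, and forces the strings to pair up; the complex case is then settled by a further case split on which pairing occurred, with several subcases dispatched as ``similar.'' You instead view the six strings as rows of a $6\times n$ binary matrix, note that the support conditions say precisely that the columns span a totally isotropic subspace of $\bfZ_2^6$, extend to a self-dual code of length $6$, and use the coincidence that the $15$ pairing codes $W_P$ are distinct and exhaust the $15$ self-dual codes of that length to obtain $\mathcal{K}_1$ in one stroke; for $\mathcal{K}_2$ you replace the paper's conditional phrasing by the equivalent, cleaner criterion that the ket and bra triples agree as multisets, and check it against the three even multiplicity patterns $(6)$, $(4,2)$, $(2,2,2)$. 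I verified that the integer equality $x_i+y_i+z_i=w_i+s_i+t_i$ does eliminate exactly the unbalanced splits in each pattern and that multiset equality makes the quadratic condition automatic, so the identification is sound. Your route buys a structural explanation of why the third moment closes on pairings (the $15=15$ coincidence is specific to three copies) and a less error-prone case analysis; its one external ingredient is the classification of self-dual binary codes of length $6$ (or the direct extension argument you sketch), which should be proved or cited. The paper's route is longer but entirely elementary and self-contained.
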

        
        For example, $\left\{\left((1,1),(0,0),(0,0),(1,1),(0,0),(0,0)\right)\right.\\\left.,\left((1,0),(1,1),(0,0),(1,0),(1,1),(0,0)\right),((1,1),(0,1),\right.\\\left.(0,1),(1,1),(1,0),(1,0))\right\}\subset \mathcal{K}_1(\bfZ^2_2)$, and $\left\{((1,1),(0,0),\right.\\\left.(0,0),(1,1),(0,0),(0,0)),((1,0),(1,1),(0,0),(1,0),(1,1),\right.\\\left.(0,0))\right\}\subset\mathcal{K}_2(\bfZ^2_2)$, but $((1,1),(0,1),(0,1),(1,1),(1,0),\\(1,0))\notin\mathcal{K}_2(\bfZ^2_2)$.

        We substitute the above lemma to Eq.~\eqref{eq:exp_square} then we prove that $\mathbb{E}((\widehat{O_0}^{{\rm (r)eq}}_\dyadic{A})^2)$ is bounded by $\mathcal{O}\left(\tr{O_{0}^2}\right)$. In conclusion, the variance of our (R)ESPOVM estimator achieves similar scaling with one of the random Clifford measurements~\cite{huang2020} and a similar sampling-copy complexity. Moreover, taking average of $\mathbb{E}((\widehat{O_0}^{{\rm (r)eq}}_\dyadic{A})^2)$ over uniformly random input and output states, we can show that the average of RESPOVM is exactly half of one of ESPOVM and even half of one of randomized Clifford tomography. We encapsulate the following results as a theorem whose a detailed derivation will be shown in 
        Sec.~\ref{subsec:est-var} and Sec.~\ref{sec:proof_other_lemmas}. 
		\begin{theorem}\label{main:thm1}
			{\bf[Sampling-complexity efficiency of (R)ESPOVM shadow tomography]} Suppose that one is able to prepare multiple copies of an unknown input quantum state $\rho$, and that one is given $M\ge1 $ observables $O_1$, $O_2$, \ldots, $O_M$ (for RESPOVM shadow tomography, all $O_j$'s are real).\\
			\noindent
			(i) With (R)ESPOVM shadow-tomography scheme, we can estimate each $\tr{\rho\,O_j}$ to within an additive $\varepsilon$-error margin with a success probability $1-\delta$ if we repeat this scheme over $N=\mathcal{O}\left(\frac{\max_{1\le j\le M}\left\{\tr{O^2_{j0}}\right\}}{\varepsilon^2}\,\log\!\left(\frac{2M}{\delta}\right)\right)$ copies, where $O_{j0}$ is the traceless part, $O_j-\frac{I}{2^n}\tr{O_j}$.\\
			\noindent
			(ii) If $2^n\gg1$, the averaged sampling-copy number, to achieve the estimation within an additive $\varepsilon$-error margin and a success probability $1-\delta$, over uniformly-distributed complex pure states~$\rho$ and real pure states~$\sigma$ approaches $\frac{136\;(68\;{\rm resp.})}{\varepsilon^2}\,\log\!\left(\frac{2M}{\delta}\right)$ for (R)ESPOVM.\\
			\noindent
			(iii) If $2^n\gg 1$, the averaged sampling-copy number to achieve the estimation within an additive $\varepsilon$-error margin and a success probability $1-\delta$, averaged over uniformly-distributed complex pure states $\rho$ and $\sigma$, approaches $\frac{136}{\varepsilon^2}\,\log\!\left(\frac{2M}{\delta}\right)$ for ESPOVM.
		\end{theorem}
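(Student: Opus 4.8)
The plan is to convert the estimator variance into a trace against the third-moment operator supplied by Lemma~\ref{method:lem3}, bound that trace by $\mathcal{O}(\tr{O_0^2})$ for part~(i), and then evaluate it exactly under Haar averaging for parts~(ii) and~(iii). Throughout I would lean on the decomposition in Eq.~\eqref{main:eq_vareq} and on the Chebyshev-plus-MOM complexity recorded after Eq.~\eqref{eq:MSE_defn}.

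For part~(i) I would bound each variance by its second moment, so it suffices to control $\mathbb{E}((\widehat{O_0}^{{\rm (r)eq}}_\dyadic{A})^2)$ and $\mathbb{E}((\widehat{O_0}^{{\rm bin}}_{\bm{p'}})^2)$. The binomial term equals $\sum_{\bm{p'}}\bra{\bm{p'}}\rho\ket{\bm{p'}}\left(\bra{\bm{p'}}O_0\ket{\bm{p'}}\right)^2\le\|O_0\|_\infty^2\le\tr{O_0^2}$. For the equatorial term I would substitute Lemma~\ref{method:lem3} into Eq.~\eqref{eq:exp_square}: each basis dyad $\ket{\bm{x}\bm{y}\bm{z}}\bra{\bm{w}\bm{s}\bm{t}}$ contributes $\bra{\bm{w}}\rho\ket{\bm{x}}\bra{\bm{s}}O_0\ket{\bm{y}}\bra{\bm{t}}O_0\ket{\bm{z}}$, and I would organize the sum over $\mathcal{K}_1$ (or $\mathcal{K}_2$) by the equality pattern of $(\bm{x},\bm{y},\bm{z},\bm{w},\bm{s},\bm{t})$. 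Since $\tr{O_0}=0$, every pattern that isolates a lone $O_0$ factor vanishes, leaving only contractions that evaluate to $\tr{O_0^2}$, $\tr{\rho O_0^2}$, or diagonal sums like $\sum_{\bm{a}}\bra{\bm{a}}\rho\ket{\bm{a}}\bra{\bm{a}}O_0^2\ket{\bm{a}}$; with the $2^{\gamma}/8^n$ prefactor and $\tr{\rho O_0^2}\le\|O_0\|_\infty^2\le\tr{O_0^2}$ each is $\mathcal{O}(\tr{O_0^2})$. Hence $\mathrm{Var}^{{\rm (r)eq}}(\widehat{O}_{\dyadic{A},\bm{p'}})=\mathcal{O}(\tr{O_0^2})$, and feeding this into the MOM bound with a union bound over the $M$ observables (the factor $2$ being the usual two-sided cost) gives $N=\mathcal{O}(\max_j\tr{O_{j0}^2}\,\varepsilon^{-2}\log(2M/\delta))$.

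For parts~(ii) and~(iii), where $O=\sigma$ is a random pure state and $O_0=\sigma-I/2^n$, I would use that $\mathbb{E}((\widehat{O_0}^{{\rm (r)eq}}_\dyadic{A})^2)$ is linear in $\rho\otimes O_0\otimes O_0$, so averaging commutes with the trace and the averaged second moment becomes $2^{\gamma}\tr{(\mathbb{E}[\rho]\otimes\mathbb{E}[O_0\otimes O_0])\,\momentopp^{\otimes 3}}$ with $\mathbb{E}[\rho]=I/2^n$. I would then insert the Haar moments, $\mathbb{E}[\sigma\otimes\sigma]=(I+S)/[2^n(2^n+1)]$ in the complex case ($S$ the swap) and the analogous real-ensemble form $(I+S+\ket{\Omega}\bra{\Omega})/[2^n(2^n+2)]$ with $\ket{\Omega}=\sum_{\bm{x}}\ket{\bm{x}\bm{x}}$, and contract against the explicit $\mathcal{K}_1/\mathcal{K}_2$ sums. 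As $2^n\gg1$ only the distinct-value three-pair terms survive at $\mathcal{O}(1)$. The structural key is that the real ensemble's extra $\ket{\Omega}\bra{\Omega}$ term maps onto the within-triple pairing $\{\bm{y}=\bm{z},\,\bm{s}=\bm{t}\}$, which belongs to $\mathcal{K}_1$ (RESPOVM) but is excluded from $\mathcal{K}_2$ (ESPOVM); together with the $2^{3n}$ versus $2^{3n-2}$ prefactors this makes the averaged RESPOVM second moment exactly half the ESPOVM one and leaves the ESPOVM value identical for real and complex $\sigma$ (hence the same $136$ in (ii) and (iii)). Propagating these $\mathcal{O}(1)$ averaged variances through the MOM bound and charging the two input copies per round produces $136$ and $68$.

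The hard part will be the equality-pattern bookkeeping of part~(i): because $\mathcal{K}_1$ and $\mathcal{K}_2$ are defined by the \emph{existence} of a valid pairing, summing naively over perfect matchings overcounts every sextuple whose values coincide beyond the generic case, so one needs an inclusion--exclusion over the partition lattice (the $4{+}2$ and $6$ block patterns) and must verify that each correction remains dominated by $\tr{O_0^2}$. For parts~(ii) and~(iii) the delicate point shifts to tracking the $2^n$-powers and the real-versus-complex moment coefficients precisely enough to recover the exact constants rather than their scaling, and to confirming that the subleading block patterns contribute only $\mathcal{O}(1/2^n)$ after averaging.
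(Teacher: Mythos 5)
Your overall strategy for part (i) mirrors the paper's: reduce to the second moment of the traceless part, insert the third-moment operator of Lemma~\ref{method:lem3} into Eq.~\eqref{eq:exp_square}, exploit $\tr{O_0}=0$ to kill lone-factor contractions, bound what remains by $\mathcal{O}(\tr{O_0^2})$, and feed the resulting variance into the median-of-means union bound. The ``hard part'' you flag --- the inclusion--exclusion over coincidence patterns of $(\bm{x},\bm{y},\bm{z},\bm{w},\bm{s},\bm{t})$ --- is exactly what the paper carries out explicitly via the operators $\mathcal{C}^{\rm(i)}$--$\mathcal{C}^{\rm(v)}$ in Eqs.~\eqref{method:esc} and~\eqref{method:resc} together with the term-by-term bounds of Lemma~\ref{method:lem4}; note that the surviving contractions are not only $\tr{O_0^2}$, $\tr{\rho\,O_0^2}$ and diagonal sums but also transpose-coupled traces such as $\tr{\rho\,O_0O_0^{\top}}$ and sums like $\sum_{\bm{y}}\bra{\bm{y}}\sigma O_0^{\top}\ket{\bm{y}}\bra{\bm{y}}O_0\ket{\bm{y}}$, the latter requiring the Cauchy--Schwarz chain of Eq.~\eqref{proof_lem4_long} --- all still dominated by $\tr{O_0^2}$, so your plan goes through. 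For parts (ii) and (iii) you take a genuinely different computational route: you contract the averaged $\rho\otimes\sigma\otimes\sigma$ against the moment operator using the orthogonal-ensemble second-moment formula $\bigl(I+\tau+\ket{\Omega}\bra{\Omega}\bigr)/[2^n(2^n+2)]$ with $\ket{\Omega}=\sum_{\bm{x}}\ket{\bm{x}\bm{x}}$, whereas the paper parametrizes the real target state on the hypersphere and evaluates $\int c_1^4$ and $\int c_1^2c_2^2$ directly via beta functions and Stirling's formula; your route is cleaner and reaches the same asymptotics, and your structural observation that the $\ket{\Omega}\bra{\Omega}$ term is what distinguishes $\mathcal{K}_1$ from $\mathcal{K}_2$ is consistent with the paper's result. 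One correction: it is the averaged \emph{variance} of RESPOVM ($1/2$) that is half the ESPOVM one ($1$), not the second moment --- the averaged second moments are $3/4$ and $2$ respectively, and the squared conditional means ($1/4$ and $1$, Eq.~\eqref{sup:thm2_req4}) must be computed and subtracted separately; the constants $68$ and $136$ then follow from the variances, not from a factor-of-two relation between second moments.
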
\vspace{2ex}
		
		\begin{figure*}[t]
			\centering
			\includegraphics[width=\textwidth]{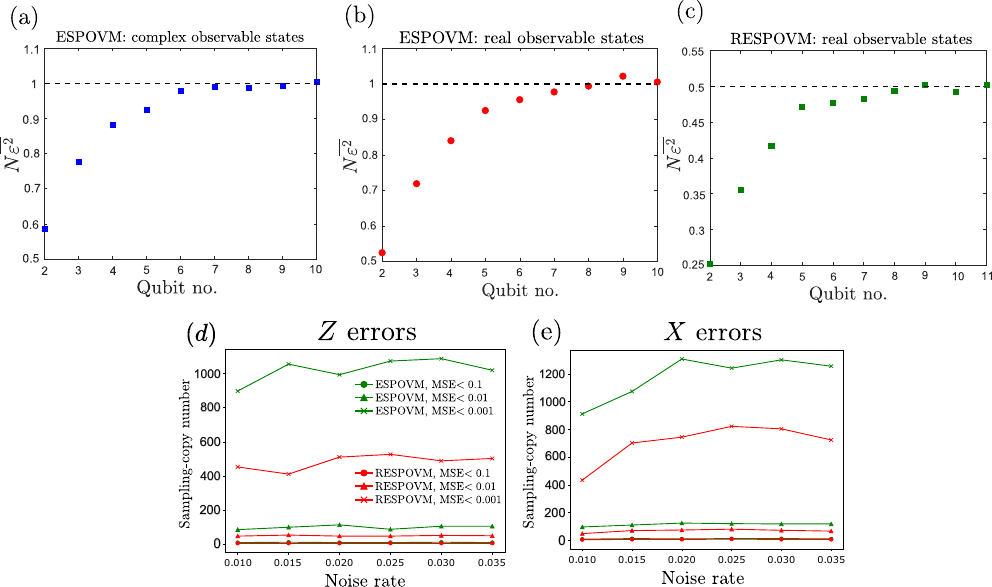}
			
			\caption{(a--c) Sampling-copy numbers $N$ multiplied with squared error $\overline{\epsilon^2}$ averaged on 200 experiments of different $n$-qubit (pure) input and target state observables. We set $N=2000$ for (a,b) and $N=1000$ for (c), also for each experiment, we put 50 copies with given $N$ to obtain the $\epsilon^2$ which is an averaged value of squared errors between the ideal expectation value of target and estimation, for one experiment. For~(a), we averaged the $\epsilon^2$ over uniform distributions of complex input and target states. For~(b) and~(c), the average is over uniform distributions of complex input and real target states. The horizontal dashed lines represent theoretically asymptotic values~\cite{supple} of $N\overline{\epsilon^2}$. (d,e)~Sampling-copy numbers $N$ required to achieve the respective mean squared error thresholds (averaged over 400 experiments) for the fidelity estimation between a $50$-qubit noisily-prepared GHZ state and pure GHZ state. We consider the i.i.d.~(d)~$Z$- and (e)~$X$-error channels acting on every qubit with an error probability $\eta_\mathrm{prep}$.}
			
			\label{fig:fig2}
		\end{figure*}
		
		Theorem~\ref{main:thm1} (i) implies that our scheme is efficient when the Frobenius norm of given observable is at most polynomial with the qubit number~$n$. Most notably, if $O$ is also a quantum state~$\sigma$, then $\tr{O_0^2}$ becomes constant, hence the required sampling-copy number~$N$ is independent of~$n$. When the target state is real, Thm~\ref{main:thm1}~(ii) and~(iii) imply that the average sampling-copy bound of RESPOVM is half of one of ESPOVM. Furthermore, we know that random target state $\ket{\psi}$ is completely anti-concentrated~\cite{dalzell2022}. That is, $\mathbb{E}_{\ket{\psi}}\sum_{\bm{x}\in \bfZ^n_2}|\braket{\bm{x}|\psi}|^4=\mathcal{O}(2^{-n})$. Markov's inequality implies that with high probability $|\braket{\bm x|\psi}|^2\le \sqrt{\frac{\poly(n)}{2^n}}$ for all $\bm{x}\in \bfZ^n_2$ and then We can neglect the estimator of the second computational basis measurement part for large $n$, saving half of the copies. Hence, we can expect RESPOVM shadow tomography to be more useful for real observable estimation.
		
		\begin{figure*}[t]
			\centering
			\includegraphics[width=2\columnwidth]{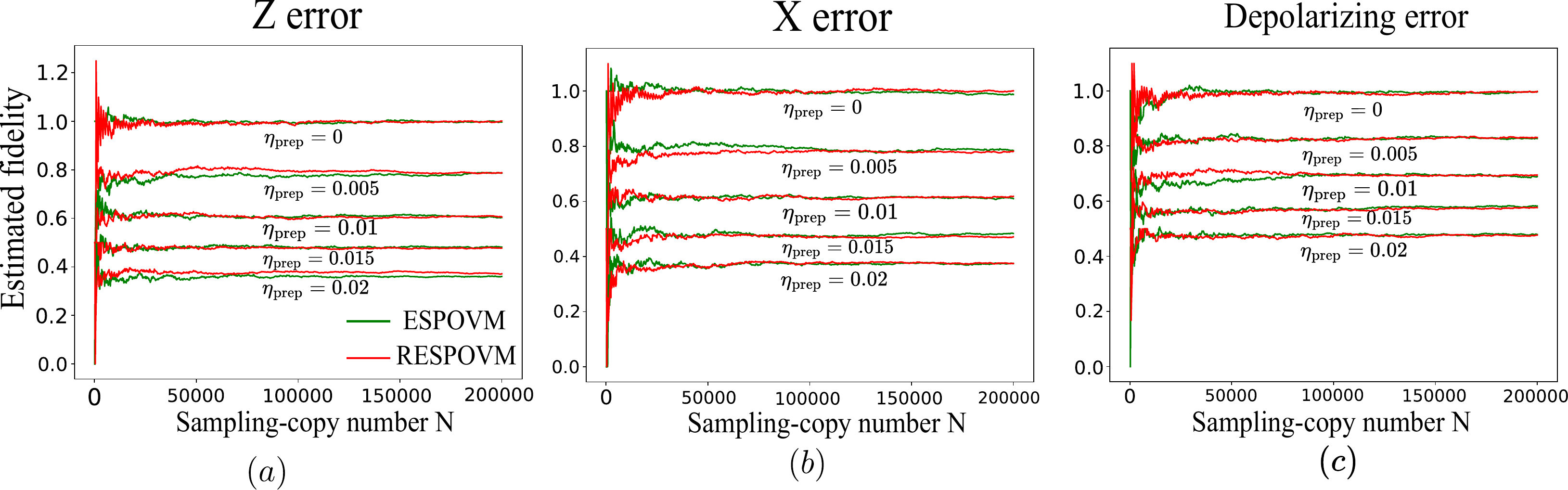}
			\caption{The estimated fidelity with the sampling-copy number~$N$ in fidelity estimation between noisily-prepared $7\times7$-qubit graph states and their noiseless counterparts $\ket{\psi}$. The (a)~$Z$, (b)~$X$ and (c)~depolarizing preparation-noise channels are i.i.d. to the individual qubits. Each estimated value of a given $N$ is obtained from the median of all $\widehat{\left<\ket{\psi}\bra{\psi}\right>}$ estimated using the (R)ESPOVM scheme over $100$ experiments ($N/100$ copies for each experiment).}
			\label{fig:fig3}
		\end{figure*}
		
		To showcase the performance of (R)ESPOVM shadow tomography, we examine scenarios in observable-expectation value estimation between arbitrary quantum states. Figure~\ref{fig:fig2}(a--c) illustrate three exemplifying cases: ESPOVM shadow tomography for uniformly-chosen random complex input and target states, ESPOVM shadow tomography for uniformly-chosen complex input and real target states, and RESPOVM shadow tomography for the same types of random states as in the second case. We observe that within $N=2000$, averaged squared error over randomly chosen states becomes inversely proportional to $N$. These results are consistent with the points of Thm.~\ref{main:thm1}. Moreover, the saturated sampling copy number to achieve $\overline{\epsilon^2}\le 0.001$ is much lower than the upper bound shown in Thm.~\ref{main:thm1}~(ii,iii). Furthermore, for real target states, the saturated value of $N$ for RESPOVM is exactly half of that for ESPOVM. Figure~\ref{fig:fig2}(d,e) demonstrates the success of (R)ESPOVM shadow tomography in estimating the fidelity between 50-qubit noisily-prepared GHZ states~\cite{briegel2001} and pure GHZ states with reasonable~$N$~values, where one also observes that RESPOVM needs significantly fewer sampling copies than ESPOVM to achieve the same accuracy. Figure~\ref{fig:fig3} brings (R)ESPOVM shadow tomography to the test through estimation of fidelity between noisy~7$\times$7 graph states~\cite{anders2006} and its pure counterparts that has CZ-gate connections to all neighboring qubits. Here, $N$ is also the number of measurement trials because every computational basis measurement outcome gives the same $\widehat{O}^{{\rm bin}}_{\bm{p'}}\equiv\bra{\bm{p'}}O\ket{\bm{p'}}=\frac{1}{2^n}$, so that such a second measurement part may be omitted.
		
		Alternatively, fidelity estimations of stabilizer states may be carried out using simpler specialized quantum circuits reported in~Ref.~\cite{flammia2011}. Nevertheless, we emphasize that \emph{all} target states achieve the same asymptotic sampling-complexity bound that is constant in~$n$ with (R)ESPOVM shadow-tomography. This means that our technique clearly applies to many other interesting states, including the Dicke states~\cite{Brtschi2019}, for which $N=\mathcal{O}(n^{2k})$ using the method in~Ref.~\cite{flammia2011}.

		\section{Resources for and noise tolerance of (R)ESPOVM}
		
		\label{eq:subsec:low-gate-count}
		
		Recalling Thm.~\ref{main:thm0}, we note that (R)ESPOVM shadow tomography requires only CZ~gates as two-qubit gate resources and hence it needs at most $\mathcal{O}(n^2)$-time. In the language of quantum circuits, Vizing's theorem~\cite{berge1991,misra1992,maslov2022} implies that an arbitrary CZ circuit requires a circuit depth of at most~$n$ when only CZ gates are employed. Whereas, the randomized Clifford tomography, given that \textsc{hf} section (see Sec.~\ref{sec:random_clifford_tomography} for its definition) can be regarded as classical post-processing after the measurement, needs at most $2n$-depth. 
		
		Furthermore, with just nearest-neighboring (NN) two-qubit gates, CZ circuits with the qubit reversal swapping needs only $2n$ depth~\cite{maslov2018,supple} of CNOT and S~gates. Since we can regard the qubit reversal as the reversal of measurement outcomes, (R)ESPOVM needs only $2n$ depth, which is lower than the worst-case circuit depth of~$7n$ for the implementation of randomized-Clifford tomography~\cite{huang2020,bravyi2021,maslov2023,supple}. In Sec.~\ref{subsec:gatecount}, we further reduced the depth of randomized Clifford tomography to $3n$. This is possible since we are using only non-adaptive Clifford circuits where all CNOT gates have a control qubit on the top. These arguments are gathered into the following statements:\\
		
		\begin{figure*}[t]
			\centering
			\includegraphics[width=1.8\columnwidth]{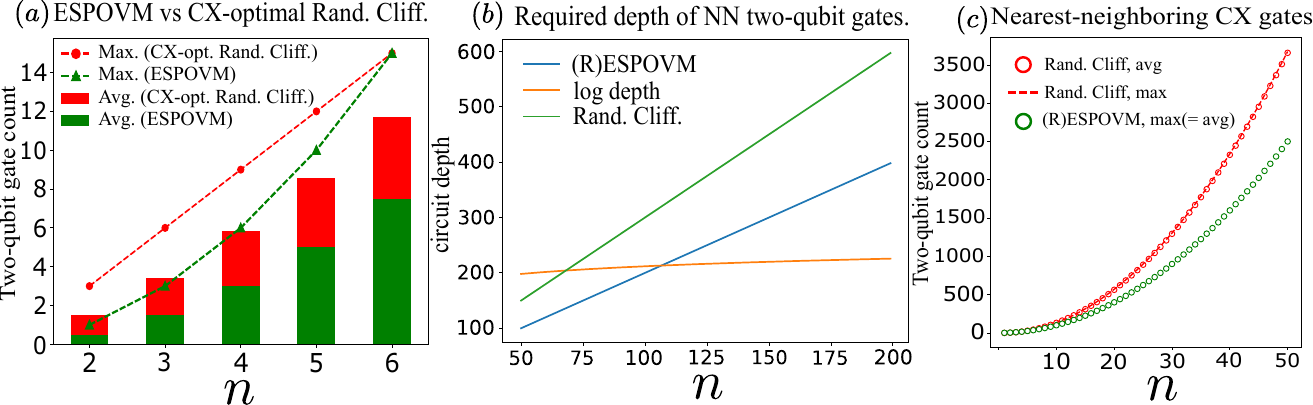}
			
			\caption{(a)~The average number of long-ranged two-qubit gates to implement (R)ESPOVM and randomized-Clifford circuits for low-qubit systems. The latter is obtained \emph{via} Tab.~6 of Ref.~\cite{bravyi2022} and is the average gate count over all equivalence classes. (b) The maximal required Clifford circuit depth for shadow tomography using neighboring gates. Using the knowledge of Ref.~\cite{maslov2023} (refer to Sec.~\ref{subsec:gatecount} in ''Methods'' for a detailed elaboration), the maximal depth of random-Clifford tomography (blue) is $3n$. The orange line indicates the depth for  \emph{approximative} $3$-design method~\cite{schuster2025,duncan2020,maslov2023} with some bias, $10\cdot2\ln\left(\frac{9n}{\rm bias}\right)$. Here, we assign the allowed bias to $0.01$. (c)~The maximum and average numbers of NN CNOT~gates~(over 4000 gate samples) to implement a $2n$-depth~(R)ESPOVM and $3n$-depth neighboring implementation of randomized-Clifford circuits~\cite{maslov2023}.}
			\label{fig:gatecounts}
		\end{figure*}
		
		\begin{theorem}\label{main:thm2}
			{\bf[Gate-count and depth efficiency of (R)ESPOVM shadow tomography]} \cite{maslov2018}~Both ESPOVM and RESPOVM shadow-tomography algorithms can be implemented by circuits of  
            depth at most $2n$ using neighboring Clifford gates, while randomized Clifford needs at most $3n$-depth neighboring Clifford gates. The implementation of linear nearest-neighboring~(LNN) architecture takes $\mathcal{O}(n^2)$-time for each sampling~copy.  
		\end{theorem}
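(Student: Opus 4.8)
The plan is to treat the two-qubit depth of each scheme separately and then bound the per-copy compilation cost. First I would observe that, from Algo.~\ref{main:algo1}, each (R)ESPOVM sampling circuit factorizes into (i) a diagonal CZ circuit whose edge set is fixed by the off-diagonal entries $a_{ij}$ of $\mathbf{A}$, (ii) a layer of single-qubit phase gates ($S^\dagger$, $Z$, $S$, or their RESPOVM restriction to $Z$) fixed by the diagonal entries $a_{ii}$, and (iii) a final Hadamard layer preceding the computational-basis measurement. Since the single-qubit layers (ii) and (iii) are parallelizable and contribute only $O(1)$ to the depth, the nontrivial cost is the two-qubit depth of the CZ circuit in (i).

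Next I would invoke the linear-nearest-neighboring (LNN) construction of Ref.~\cite{maslov2018}. The key fact is that all CZ gates mutually commute, so the CZ circuit is specified only by which of the $\binom{n}{2}$ qubit pairs receive a gate. I would merge this circuit with an odd--even transposition (swap) network that reverses the qubit order: over $n$ alternating rounds of NN swaps every pair of qubits becomes adjacent exactly once, and I would insert the required CZ at the unique round in which its two qubits meet. Each round then consists of one NN CZ layer fused with one NN swap layer, which compiles into depth-$2$ of neighboring Clifford (CNOT/$S$) gates, for a total of at most $2n$. Crucially, the net qubit reversal produced by the swap network is free: because the circuit is immediately followed by a computational-basis measurement, the reversal merely permutes the output bit string $\bm{p}$, a correction absorbed into classical post-processing (and already reflected in the outcome-update loop, lines~\ref{algo1:start_for}--\ref{algo1:end_for} of Algo.~\ref{main:algo1}). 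This establishes the $\le 2n$ depth for both ESPOVM and RESPOVM, the RESPOVM case simply omitting the $Y$-basis (i.e.\ $S$) single-qubit gates.

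For the randomized-Clifford comparison I would start from the canonical decomposition of Ref.~\cite{bravyi2021} into \textsc{hf}$'$--\textsc{sw}--\textsc{h}--\textsc{hf} sections and use two simplifications specific to shadow tomography: the final Hadamard-free \textsc{hf} section can be deferred past the $Z$-basis measurement into classical post-processing, and the sampled circuits are non-adaptive with every CNOT controlled from the upper qubit (an upper-triangular linear-reversible part). I would then apply the LNN scheduling of Ref.~\cite{maslov2023} to this structured CNOT+CZ+phase part, which packs into $3n$ neighboring layers; the detailed accounting is carried out in Sec.~\ref{subsec:gatecount}. Finally, the $\mathcal{O}(n^2)$-time claim follows by counting: sampling $\mathbf{A}$ costs $O(n^2)$ random bits, the merged swap/CZ network uses $n$ rounds of $O(n)$ NN gates each, and the outcome relabeling together with the phase bookkeeping are $O(n^2)$, so the entire per-copy compilation and execution is $\mathcal{O}(n^2)$.

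The main obstacle I anticipate is pinning down the depth \emph{constant} rather than a generic $O(n)$ bound: one must verify that each swap-network round, with its embedded CZ, genuinely decomposes into only two neighboring CNOT/$S$ layers, which requires the explicit fused CZ--swap gadget and a careful argument that adjacent rounds do not collide. A secondary subtlety is justifying that the final-section deferral for randomized Clifford preserves the output statistics exactly while keeping every CNOT upper-triangular, so that the $3n$ bound is genuinely attained for the class of circuits actually sampled.
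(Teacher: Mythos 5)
Your reduction of the problem to the two-qubit depth of the CZ block, your treatment of the qubit reversal as classical relabeling of $\bm{p}$, and your handling of the randomized-Clifford $3n$ bound (deferring the final Hadamard-free section to post-processing) all track the paper. But the central step---the depth-$2n$ claim for the CZ block---has a genuine gap, and it is exactly the one you flag at the end. In your odd--even transposition network, a round in which some adjacent pair receives a CZ can be fused into a $\mathrm{CZ}\cdot\mathrm{SWAP}$ gadget (locally equivalent to iSWAP, hence $2$ nearest-neighboring CNOTs), but a pair that receives \emph{no} CZ still needs a bare SWAP, which costs $3$ CNOTs; you cannot drop that SWAP without destroying the meeting schedule, and you cannot cancel an unwanted CZ later because each pair is adjacent only once. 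So for a generic CZ circuit your construction gives depth up to $3n$, not $2n$, and the constant you are trying to pin down genuinely fails.

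The paper avoids this by not using a swap network at all. Following Ref.~\cite{maslov2018}, it builds a fixed $2n$-depth ladder of nearest-neighboring CNOTs, $C=C_1C_2$ applied $m+1$ times for $n=2m+1$, whose net linear action is the qubit reversal but whose intermediate wire contents are the interval XORs $[j,k]=x_j\oplus x_{j+1}\oplus\cdots\oplus x_k$; every one of the $\binom{n}{2}+n$ such sums appears on some wire at some round. The CZ phases are then realized entirely by \emph{single-qubit} phase gates inserted between CNOT layers, using the $\bfZ_4$ identity $(-1)^{x_\mu x_\nu}=\I^{3x_\mu+3x_\nu+(x_\mu\oplus x_\nu)}$ and its expansion into powers $\I^{u_a y_a}$ and $\I^{u_{a,b}(y_a\oplus y_b)}$ with $y_i=[1,i]$ (Sec.~\ref{subsec:gatecount}). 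Since diagonal single-qubit gates add nothing to the two-qubit depth, every round costs exactly two CNOT layers regardless of which CZs are present, which is what secures the uniform $2n$ bound; the $\mathcal{O}(n^2)$ time then follows because each of the at most $\frac{n^2-n}{2}$ CZ gates contributes only a constant-sized set of phase-gate insertions. To repair your proof you would either need to adopt this interval-sum construction or supply a fused gadget that implements a bare adjacent transposition in two CNOT layers, which does not exist.
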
\vspace{2ex}
		
		This fact, along with Thm.~\ref{main:thm0} and~\ref{main:thm1}, argues that (R)ESPOVM algorithm is not a simple trade-off of Clifford circuit depth and sampling-copy number. Because scale-factor reduction of depth while preserving sampling copy scale gives significant improvement of gate noise threshold and error mitigation capability~\cite{tsubouchi2023}, it is also more compatible with noisy intermediate-scale quantum (NISQ) algorithms~\cite{larose2022}.
		
		\begin{figure}[t]
			\centering
			\includegraphics[width=\columnwidth]{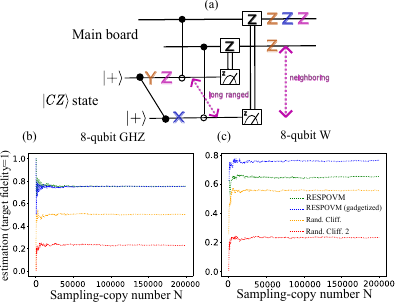}	
			\caption{(a) State-injection technique that transforms an arbitrary Pauli noise on the long-ranged gate into dephasing noise. (b,c) CZ-based RESPOVM shadow-tomography performances in the estimation of fidelity between two pure $8$-qubit GHZ (~W states resp.) states with an increasing number of input-state copies~$N$. Here, we give the depolarizing noise, with $0.05$ error rate, to each long-ranged gate (except the neighboring gates, which may also connect ($1,n$)-qubits). The RESPOVM (gadgetized) means that for long-ranged CZ gates, we gave the dephasing error with $0.05$ error rate (uniform $(I,X,Y,Z)$ operation to each qubit, with a given error rate). This setting is equivalent to the above state-injection technique with noiseless neighboring gates. The Rand. Cliff. indicates the sampling and decomposition scheme via Ref.~\cite{bravyi2021} in which noise is attributed only to the single \textsc{hf}-section. This is because classically post-processing of the measurement outcomes is sufficient to implement the other \textsc{hf}-section. The Rand. Cliff. $2$ method follows the sampling of Ref.~\cite{aaronson2004}.}
	       \label{fig:cz_gadgetization_avg_fidelity}
		\end{figure}
		
		Figure~\ref{fig:gatecounts} shows that our scheme significantly improved over conventional randomized-Clifford tomography regarding the required circuit depth and gate counts. The maximal gate count of our scheme matches with the random Clifford when $n=6$. However, such CX-optimal random Clifford implementation is inefficient for larger $n$ because a super-exponentially sized database of circuit templates is required~\cite{bravyi2022}. We also compared with a recent result of Ref.~\cite{schuster2025} using a log-depth random Clifford circuit to \emph{approximately} realize a unitary $3$ design with some bias. This is because the unitary $3$-design property is a sufficient condition for efficient shadow tomography with a similar scaling of the sampling complexity. Even though the depth-scaling of our scheme is not logarithmic, it requires a larger depth than our scheme and even the random-Clifford scheme when $n$ is of the order of tens. The reason is that Ref.~\cite{schuster2025} uses bilayers of alternating randomized-Clifford blocks, each of which has a qubit size occupying the whole circuit given that $n$ is not large. In this case, the depth of the circuit is $10$ times the size of a random Clifford block because the first layer needs the $7$ times~\cite{duncan2020,maslov2023}, and the second needs the $3$ times ($\because$ Thm.~\ref{main:thm2}) of the size of a block~\cite{maslov2023}. 
        
        The presence of multiple two-qubit gates in a circuit-based quantum-algorithm implementation, be it randomized-Clifford or (R)ESPOVM shadow tomography, enforces a restriction on the individual gate-noise threshold if one is to achieve a given total error rate of the output state the observable measurement is performed. In the previous section, we demonstrated that our (R)ESPOVM scheme requires a lower circuit depth than randomized Clifford sampling. Therefore, we expect that the required noise threshold for each two-qubit gate is affordably larger for (R)ESPOVMs. 

        Furthermore, our measurement scheme based on (R)ESPOVMs also leads to other interesting noise reduction effects. First, if the target state $\psi=\frac{1}{\sqrt{L}}\sum_{i=1}^{L}c_{i}\ket{\bm{x}_i}$, then we can find the sufficient condition $\forall i\in [L],\;\bm{x}_i\cdot \xi_{\bm{p}}=0\;({\rm mod} \;2)$ such that the noise-corrupted output $\bm{p}+\xi_{\bm{p}}$, formed by adding the bit-flip error string $\xi_{\bm{p}}$ to the ideal output $\bm{p}$ of (R)ESPOVM, still gives the right estimator. 

        To show this, suppose we have a bit-flip error string $\xi_{\bm{p}}$ added to the desired outcome $\bm{p}$. Then from the above arguments, the sampled $a_{ii}$ transforms to $a_{ii}+2(p_{i}\oplus (\xi_{\bm{p}})_i)$ where $\oplus$ is the sum of modulo~$2$. Now, suppose the target state $\ket{\psi}=\frac{1}{\sqrt{L}}\sum_{j=1}^{L}c_j\ket{\bm{x}_j}$, where each $\bm{x}_j\in \bfZ^n_2$. Then $\widehat{O}_{\dyadic{A}}\equiv 2^n\bra{\phi^{\rm eq}_\dyadic{A}}O\ket{\phi^{\rm eq}_\dyadic{A}}$ with noisy outcome $\dyadic{A}$ becomes, 
        \begin{align}
            \widehat{O}_{\dyadic{A}}=\frac{1}{L}\left|\sum_{j=1}^{L}c_j\cdot i^{\bm{x}_j\cdot \tilde{\bm{a}}+\bm{x}_j^{\top}\dyadic{A}\bm{x}_j}\right|^2,
        \end{align}
        where $\tilde{a}_i=2(p_i\oplus (\xi_{\bm{p}})_i)-2p_{i}=2(\xi_{\bm{p}})_i\;(\rm mod\;4)$. Hence, $ i^{\bm{p}_j\cdot \tilde{\bm{a}}}=(-1)^{\bm{x}_j\cdot\xi_{\bm{p}}}$. It means that if $\forall j\in [L]$, $\bm{x}_j\cdot\xi_{\bm{p}}=0$, then estimator outputs the right value. The estimator can bear more bit-flip errors as the dimension of the spanning set of $\left\{\bm{x}_j\right\}_{j\in [J]}$ gets lower.
        
        As typical examples, we suppose that the target state is the GHZ state [\,$\ket{\mathrm{GHZ}_n}\equiv\frac{1}{\sqrt{2}}(\ket{0}^{\otimes n}+\ket{1}^{\otimes n})$\,]~\cite{greenberger2007}. Then, the RESPOVM estimator reads
        \begin{align}
        	\widehat{O}=&\,\frac{1}{4}\left\{1+(-1)^{\sum_{i=1}^{n}(a_i+p_{i})+\sum_{i,j\in[n],i<j}a_{ij}}\right\}^2\nonumber\\
        	&+|\braket{{\rm GHZ}_n|\bm{p'}}|^2-\frac{1}{2},
        \end{align}
    where $\bm{p'}$ is the outcome of computational basis measurement to the second copy, and $a_{ij}$ are the elements of the sampled binary (upper triangular) matrix $\dyadic{A}$ (see Algo.~\ref{main:algo1}). In the GHZ case, even though the measurement outcome has an error, the estimator outputs the right value when the bit-flip error string $\xi_{\bm{p}}$ possesses an even Hamming-weight parity. In the Fig.~\ref{fig:cz_gadgetization_avg_fidelity} (b), we simulate the fidelity estimation of two pure GHZ states using noisy Clifford circuits. The closer the estimated fidelity is to $1$, the more tolerant the estimation is to bit-flip noise. We see that our scheme performs much better than randomized-Clifford shadow tomography.

    Second, if long-ranged CZ gates are available for our scheme, we can invoke the state-injection~\cite{catani2018} technique to reduce the Hamming weight of the bit-flip error string $\xi_{\bm{p}}$. To this end, with no loss of generality, we may suppose that the realized CZ gates are corrupted by Pauli noise, which can be obtained from twirling on an arbitrary noise channel that is not of this type~\cite{emerson2007,wallman2016}. By assuming that long-ranged gate errors are much more significant, one may set up a two-storey teleportation-like circuit as shown in Fig.~\ref{fig:cz_gadgetization_avg_fidelity}(a). It eventually has an equivalent CZ-operation on the main board. Furthermore, the Pauli noise structure acting on the long-ranged CZ gate on the ancilla floor is turned into a dephasing noise. 
    This noise transformation is carried out using neighboring CNOT gates entangling the qubits between the two floors, and post-processing Z-operations. This state-injection is a linear channel and can be applied to all CZ gates. In effect, each CZ gate undergoes only dephasing noise (as a dominant noise), and such noise does not increase its weight by commuting with other CZ gates. Therefore, the weight of bit-flip errors at the final measurement part is significantly reduced.
    
    This method will give additional noise tolerance to some symmetric states~\cite{grier2022}, given that the number of computational basis components undergoing large-weighted dephasing error is low, and hence so is the deviation of the estimation. A typical example is the  W state~[\,$\ket{\mathrm{W}_n}\equiv\frac{1}{\sqrt{n}}(\ket{0}^{\otimes (n-1)}\ket{1}+\ket{0}^{\otimes (n-2)}\ket{1,0}+\ldots+\ket{1}\ket{0}^{\otimes (n-1)})$\,]~\cite{flammia2011} whose estimator reads,

     \begin{align}
        \widehat{O}=\frac{\left\{\sum_{i=1}^n(-1)^{(a_i+p_i)}\right\}^2}{2n}+|\braket{{\rm W}_n|\bm{p'}}|^2-\frac{1}{2}.
    \end{align}

    We can see the first term gets a larger deviation as the weight of $\xi_{\bm{p}}$ gets larger. In Fig.~\ref{fig:cz_gadgetization_avg_fidelity} (c), we gave the initial depolarizing noise to the long-ranged CZ gates, where the dephasing error rate after the state-injection arrangement is equal to that of the initial noise rate on the ancillary CZ~gate. We can see that the state-injection technique can give an additional improvement in the noise threshold to the original RESPOVM circuit.

        The CZ-circuit-based architecture of our (R)ESPOVM schemes also presents an attractive feature for experimental implementation, especially with the soaring entangling-gate fidelities recently achieved in various quantum-computing platforms. While a more comprehensive survey shall be given in Sec.~\ref{subsec:survey}, we quote some recent experimental fidelities achieved in some of these platforms: 99.5\%~two-qubit gate fidelity in Rydberg-atom quantum computing~\cite{Evered:2023high-fidelity}, over 99.5\% for semiconductor-spin-based systems~\cite{Xue:2022quantum}, and up to 99.93\% with superconducting qubits~\cite{Negirneac:2021high-fidelity}.
        
        \section{Discussion}
		
		In this article, we propose resource-efficient methods to carry out quantum shadow tomography using equatorial-stabilizer measurements that are derived from smaller subsets of the full set of stabilizer states. We first prove that these equatorial-stabilizer measurements, together with the computational-basis measurements are indeed informationally complete and possess desirable sampling complexity that grows, in many cases, at most polynomially in the number of qubits~$n$ and reciprocally in the square of the target additive estimation error. In typical scenarios, such as rank-one observable estimation, the sampling complexity of equatorial-stabilizer shadow tomography is independent of~$n$.
		
		Furthermore, our shadow-tomography algorithms utilize long-ranged Clifford bases with circuit depths of~$n$. When restricted to only LNN architectures, our algorithms are considerably efficient to implement, requiring only $2n$-depth NN CNOT~gates, owing to their simpler circuit geometry. We demonstrated that our methods require shorter circuit depth than randomized Clifford tomography for both long-ranged and LNN implementations, while still preserving the classes of applicable observables for many cases in terms of the shadow-norm or estimation variance criterion. Our depth- (or gate-)efficient shadow-tomography schemes become practically important for performing realistic quantum-computation tasks as they permit larger gate-error thresholds. Moreover, this is crucial for certain auxiliary quantum tasks, such as those in quantum error mitigation~\cite{jnane2024}, for instance, where it is recently known~\cite{tsubouchi2023} that such tasks would require sampling-copy numbers that increase exponentially with the circuit depth. We also found that along with the state injection technique, the estimators for (R)ESPOVM tomography reveal a unique tolerance over the various measurement bit-flip errors. Since state injection simplifies the CZ-gate-noise structure, post-processing noise mitigation of shadow~\cite{koh2022} could be applied.
		
		Thus far, the (R)ESPOVM schemes we have found to have good shadow-tomographic properties require both computational-basis and CZ-circuit measurements that involve very different circuit depths, with the former being single-layered while the latter grows linearly with~$n$ in depth. One possible future would be to investigate if there exist other such pairs of measurement circuits of comparable depths that can reduce the maximum circuit depth employed whilst maintaining the same estimation quality. Here, MUB subsets of the CZ-circuit measurements~\cite{zhang2024,wang2024} could serve as an instructive starting point. Yet another desirable observation is the rather large error-threshold advantages of using equatorial-stabilizer shadow tomography on GHZ or W state-verification tasks in contrast to randomized-Clifford shadow tomography. This readily sets another research direction on fully identifying important classes of target quantum states that could lead to highly effective gate-error threshold improvements.Finally, motivated by a recent work~\cite{rozon2024}, we expect that the shallow depth subset of noisy (R)ESPOVM would give another sample-optimal Pauli observable estimation method. Then we could measure the noise threshold of sampling copy improvement, which was previously considered on noisy random Clifford blocks~\cite{rozon2024}, over the single-depth measurement.
		
		Clifford circuits are well-known quantum circuits~\cite{shi2022,xie2023} whose noise-tolerant properties could be realizable in the near future. Therefore, we expect our work to establish practical research directions on gate-efficient observable-expectation-value estimations for noisy intermediate-scale quantum computation.\\[2ex]

        \noindent
        \textbf{*Note:} After the completion, Ref.~\cite{schuster2025} appeared on the arXiv discussing the possibility of approximating unitary designs with logarithmic-depth circuits. We have included an updated circuit-depth comparison between the algorithm in Ref.~\cite{schuster2025} and our (R)ESPOVM~schemes. We show, in Fig.~\ref{fig:gatecounts}(b) that despite the logarithmic scaling, our scheme employs relatively shorter circuit depths for moderate qubit numbers beyond~100.
        \section*{Acknowledgements}
		This work is supported by the National Research Foundation of Korea (NRF) grants funded by the Korean government (MSIT) (Grant Nos. NRF-2023R1A2C1006115, RS-2023-00237959, RS-2024-00413957, RS-2024-00437191, RS-2024-00438415, RS-2025-02219034), the Institute of Information Communications Technology Planning Evaluation (IITP) grant funded by the Korea government (MSIT) (IITP-2025-RS-2020-II201606), the Institute of Applied Physics at Seoul National University, and the Brain Korea 21 FOUR Project grant funded by the Korean Ministry of Education.
		
		\section*{Data Availability}
		All data that support the plots within this
		paper and other findings of this study are available from the corresponding authors upon
		reasonable request.
    
		
	\appendix

		\section{Algorithm details for equatorial-stabilizer shadow tomography}\label{sec:technical}
		
		Before we present the details of (R)ESPOVM shadow tomography, we first introduce the notation $\mathbb{E}_{\sigma \sim\mathcal{D}(A)}f(\sigma)$ that refers to the expected value of the function $f(\sigma)$ of a random variable $\sigma\in A$ that follows a uniformly random distribution~$\mathcal{D}(A)$ with domain~$A$. In particular, we call $\mathbb{E}_{\sigma \sim\mathcal{D}(A)}\sigma^{\otimes t}$ as the \emph{$t$-th moment of $A$}. With this, we have the following lemma~\cite{supple}.
       
		\begin{lemma}\label{main:lem1}
			For the $n$-qubit system,
            
			\begin{align}
				&{\rm(i)}~\text{\cite{bravyi2019}}~\momentop\nonumber\\&=\eqmomentop=\frac{I}{2^n},\\
				&{\rm(ii)}~\text{\cite{bravyi2019}}~\momentop^{\otimes 2}\nonumber\\&=\frac{1}{4^n}\left(I+\sum_{\bm{x},\bm{y}\in \bfZ^n_2}\ket{\bm{x}\bm{y}}\bra{\bm{y}\bm{x}}-\sum_{\bm{x}\in\bfZ^n_2}\ket{\bm{x}\bm{x}}\bra{\bm{x}\bm{x}}\right),\\
				&{\rm(iii)}~\eqmomentop^{\otimes 2}\nonumber\\&=\frac{1}{4^n}\left(I+\sum_{\bm{x},\bm{y}\in \bfZ^n_2}\ket{\bm{x}\bm{y}}\bra{\bm{y}\bm{x}}+\sum_{\bm{x},\bm{y}\in \bfZ^n_2}\ket{\bm{x}\bm{x}}\bra{\bm{y}\bm{y}}\nonumber\right.\\&\left.-2\sum_{\bm{x}\in\bfZ^n_2}\ket{\bm{x}\bm{x}}\bra{\bm{x}\bm{x}}\right).
			\end{align}
            
		\end{lemma}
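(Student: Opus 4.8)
The plan is to expand every moment directly in the computational basis and reduce the average over $\dyadic{A}$ to a product of independent one-variable character sums, turning the lemma into a counting problem over $\bfZ^n_2$. Writing $f(\bm v)=\bm v^\top\dyadic{A}\bm v$ and using $v_i^2=v_i$, one has $f(\bm v)=\sum_i a_{ii}v_i+2\sum_{i<j}a_{ij}v_iv_j$ in $\bfZ_4$ for the complex ensemble and $f(\bm v)=\sum_i a_{ii}v_i+\sum_{i<j}a_{ij}v_iv_j$ in $\bfZ_2$ for the real one. Because the entries $a_{ii}$ and $a_{ij}$ are mutually independent and uniform, the average of any phase factorizes over the matrix entries; each diagonal average $\frac14\sum_{a\in\bfZ_4}\I^{ac}$ (resp.\ $\frac12\sum_{a\in\bfZ_2}(-1)^{ac}$) equals $1$ exactly when $c\equiv0$ modulo $4$ (resp.\ modulo $2$) and vanishes otherwise, while each off-diagonal average equals $1$ when the relevant exponent is even and vanishes otherwise, since $\I^{2ac}=(-1)^{ac}$.

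For the first moment the matrix element of $\ket{\bm x}\bra{\bm w}$ carries the diagonal constraint $x_i-w_i\equiv0$ (mod $4$ or mod $2$); as $x_i-w_i\in\{-1,0,1\}$ this forces $\bm x=\bm w$, the off-diagonal constraints become vacuous, and both first moments collapse to $\frac1{2^n}\sum_{\bm x}\ket{\bm x}\bra{\bm x}=I/2^n$, giving (i). For the second moment I would expand $\ket{\bm x\bm y}\bra{\bm w\bm s}$ with exponent $f(\bm x)+f(\bm y)-f(\bm w)-f(\bm s)$. The off-diagonal averages impose, in both ensembles, the quadratic constraint $x_ix_j\oplus y_iy_j\oplus w_iw_j\oplus s_is_j=0$ for all $i<j$. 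The diagonal averages differ: the real ensemble only imposes the parity $x_i\oplus y_i\oplus w_i\oplus s_i=0$, whereas the complex ensemble imposes $x_i+y_i-w_i-s_i\equiv0\pmod4$, which over the range $\{-2,\dots,2\}$ is the integer identity $x_i+y_i=w_i+s_i$, i.e.\ $\{x_i,y_i\}=\{w_i,s_i\}$ as multisets.

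The main step is solving the real-ensemble system. I would parametrize the solutions of the parity constraint by $\bm w=\bm x\oplus\bm u$ and $\bm s=\bm x\oplus\bm v$, which forces $\bm y=\bm x\oplus\bm u\oplus\bm v$, leaving $\bm x,\bm u,\bm v$ free. Substituting into the quadratic constraint and simplifying over $\bfZ_2$---the cross terms involving $\bm x$ and the $u_iu_j,\,v_iv_j$ pieces all cancel---reduces it to $u_iv_j\oplus u_jv_i=0$ for every $i<j$. This is exactly the vanishing of all $2\times2$ minors of the matrix with rows $\bm u,\bm v$, i.e.\ that this matrix has rank at most one over $\bfZ_2$, whose only solutions are $\bm u=0$, $\bm v=0$, or $\bm u=\bm v$. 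Translating back, these three cases are $(\bm w,\bm s)=(\bm x,\bm y)$, $(\bm w,\bm s)=(\bm y,\bm x)$, and ($\bm x=\bm y$ with $\bm w=\bm s$ arbitrary), contributing the operators $I$, $\sum_{\bm x,\bm y}\ket{\bm x\bm y}\bra{\bm y\bm x}$, and $\sum_{\bm x,\bm y}\ket{\bm x\bm x}\bra{\bm y\bm y}$. Since these three families intersect pairwise exactly in $\ket{\bm x\bm x}\bra{\bm x\bm x}$, inclusion--exclusion supplies the correction $-2\sum_{\bm x}\ket{\bm x\bm x}\bra{\bm x\bm x}$, and the overall $1/4^n$ prefactor yields (iii).

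For the complex ensemble the stronger multiset-equality diagonal constraint eliminates the third family: if $\bm x=\bm y$ then $x_i+y_i=w_i+s_i$ forces $\bm w=\bm s=\bm x$. Writing $(\bm w,\bm s)$ as a coordinatewise swap pattern $\epsilon_i\in\{0,1\}$ of $(\bm x,\bm y)$ on the support $S=\{i:x_i\neq y_i\}$, the same quadratic computation reduces the off-diagonal constraint to $\epsilon_i=\epsilon_j$ for all $i,j\in S$, so the swaps must be simultaneous and $(\bm w,\bm s)\in\{(\bm x,\bm y),(\bm y,\bm x)\}$ only; this gives $I+\sum_{\bm x,\bm y}\ket{\bm x\bm y}\bra{\bm y\bm x}-\sum_{\bm x}\ket{\bm x\bm x}\bra{\bm x\bm x}$, proving (ii). The one genuinely nontrivial point throughout is the quadratic off-diagonal constraint; the crux, and the step I expect to be the main obstacle, is recognizing that after the linear reparametrization it is precisely the rank-one minor condition $u_iv_j=u_jv_i$, which is what makes the surviving index set closed-form and the inclusion--exclusion exact.
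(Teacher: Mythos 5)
Your proof is correct, and while it shares the paper's overall skeleton for part (iii) --- expand the moment in the computational basis, factorize the average over the independent entries of $\dyadic{A}$ into character sums, and solve the resulting constraint system over $\bfZ_2$ --- the way you solve that system is genuinely different and, in my view, cleaner. The paper substitutes the linear (diagonal) constraint into the quadratic (off-diagonal) one and then runs a ``first index where the strings differ'' case analysis to conclude that two of the three free strings must coincide; you instead linearize via $\bm{w}=\bm{x}\oplus\bm{u}$, $\bm{s}=\bm{x}\oplus\bm{v}$, $\bm{y}=\bm{x}\oplus\bm{u}\oplus\bm{v}$ and observe that the quadratic constraint collapses to $u_iv_j\oplus u_jv_i=0$, i.e.\ the vanishing of all $2\times2$ minors, whose only solutions over the field $\bfZ_2$ are $\bm{u}=0$, $\bm{v}=0$, or $\bm{u}=\bm{v}$; this makes the three surviving families and the $-2\sum_{\bm{x}}\ket{\bm{x}\bm{x}}\bra{\bm{x}\bm{x}}$ inclusion--exclusion correction immediate. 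You also prove more than the paper does: the paper establishes only the real-ensemble statements and cites Ref.~[bravyi2019] for the complex first and second moments, whereas you derive (i) and (ii) directly (the first moment from the observation that $x_i-w_i\in\{-1,0,1\}$ forces $\bm{x}=\bm{w}$ under the mod-4 or mod-2 diagonal constraint, and (ii) from the multiset identity $\{x_i,y_i\}=\{w_i,s_i\}$ plus the coordinatewise-swap argument), in contrast to the paper's proof of (i), which proceeds by showing the first moment commutes with all Pauli $Z$ operators. What your route buys is a self-contained, uniform treatment of all three parts and an exact closed-form description of the solution set without case enumeration; what the paper's route buys is that the same ``first differing index'' induction extends with little change to the third moment in Lemma~2, where your minor condition would need to be generalized.
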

        \begin{proof}
            	The first and second moments of the equatorial-stabilizer set are shown in Ref.~\cite{bravyi2019}. So, we only consider the RESPOVM case and adopt similar strategies for ESPOVMs. All bold-faced italicized symbols refer to columns~(synonymous with ``bit strings'') and non-italicized uppercase ones that are also bold-faced refer to matrices. Note that for $\forall \bm{p}\in \mathbb{Z}^2_n$,
		\begin{align}
			&Z^{\bm{p}}\eqmomentop Z^{\bm{p}}\nonumber\\
			=&\,\eqmoment\frac{1}{2^n}\!\!\sum_{\bm{x},\bm{y}\in \mathbb{Z}^n_2} (-1)^{\bm{x}^\top\dyadic{A}\bm{x}+\bm{y}^\top\dyadic{A}\bm{y}+\bm{p}\cdot(\bm{x}+\bm{y})}\ket{\bm{x}}\bra{\bm{y}}\nonumber\\
			=&\,\eqmoment\frac{1}{2^n}\!\!\sum_{\bm{x},\bm{y}\in \mathbb{Z}^n_2} (-1)^{\bm{x}^\top\dyadic{A}'\bm{x}+\bm{y}^\top\dyadic{A}'\bm{y}}\ket{\bm{x}}\bra{\bm{y}}\nonumber\\
			=&\,\eqmoment\ket{\phi_{\dyadic{A}'}}\bra{\phi_{\dyadic{A}'}}\nonumber\\
			=&\,\eqmoment\ket{\phi_\dyadic{A}}\bra{\phi_\dyadic{A}},
		\end{align} 
		where for $\forall i,j\in[n]$, $A'_{ii}=A_{ii}+p_i({\rm mod\;2})$ otherwise $A'_{ij}=A_{ij}$. The third equality holds because the reparametrization of $\dyadic{A}$ does not affect its uniform-distribution first moment. Thus $\eqmomentop$ commutes with all Pauli $Z$-operators and is diagonal in the computational basis. Notice that all diagonal elements are $\frac{1}{2^n}$. Therefore, $\eqmoment\ket{\phi_\dyadic{A}}\bra{\phi_\dyadic{A}}=\frac{I}{2^n}$, which is Lem.~\ref{main:lem1}~(i).
		
		Next, note that 
		\begin{align}
			\label{eq:reqsecond}
			&\eqmomentop^{\otimes 2}\nonumber\\
			=&\,\eqmoment\frac{1}{4^n}\nonumber\\
			&\times\sum_{\bm{x},\bm{y},\bm{z},\bm{w}\in\bfZ^n_2}(-1)^{\bm{x}^\top\dyadic{A}\bm{x}+\bm{y}^\top\dyadic{A}\bm{y}+\bm{z}^\top\dyadic{A}\bm{z}+\bm{w}^\top\dyadic{A}\bm{w}}\ket{\bm{x}\bm{y}}\bra{\bm{z}\bm{w}},\nonumber\\
			=&\,\frac{1}{4^n}\sum_{\bm{x},\bm{y},\bm{z},\bm{w}\in\bfZ^n_2}\eqmoment\Big\{\nonumber\\
			&\times(-1)^{\bm{x}^\top\dyadic{A}\bm{x}+\bm{y}^\top\dyadic{A}\bm{y}+\bm{z}^\top\dyadic{A}\bm{z}+\bm{w}^\top\dyadic{A}\bm{w}}\Big\}\ket{\bm{x}\bm{y}}\bra{\bm{z}\bm{w}}.
		\end{align}
		Now, we focus on terms in the curly brackets. Since the elements of $\dyadic{A}$ oscillate between $0$ and $1$, the nonzero coefficients of $\ket{\bm{x}\bm{y}}\bra{\bm{z}\bm{w}}$ must satisfy the following equations for all $p,q\in[n](p<q)$, 
		\begin{align}\label{lem1:sys1}
			\begin{cases}x_p+y_p+z_p=\,w_p,\\	x_px_q+y_py_q+z_pz_q+w_pw_q=\,0.
			\end{cases}
		\end{align}
		Now, substitute the upper equation to lower one. We obtain that 
		\begin{equation}
			x_p(y_q+z_q)+y_p(z_q+x_q)+z_p(x_q+y_q)=0.
		\end{equation}
		If $\bm{x}=\bm{y}=\bm{z}$, the above equation must hold. Suppose $\bm{x},\bm{y},\bm{z}$ are not all the same. Then there exists $p'\in[n]$ such that two of $x_{p'},y_{p'},z_{p'}$ are equal but the rest are not. Also, $x_i=y_i=z_i$ for $i\le p'$. WLOG, we may assume $x_{p'}=y_{p'}\ne z_{p'}$, then for  $\forall q> p'$,
		\begin{equation}
			(x_q+y_q)(x_{p'}+z_{p'})=0 \;
			\rightarrow x_q=y_q\;\because\;x_{p'}\ne z_{p'}. 
		\end{equation}
		It follows that nonzero coefficients of $\ket{\bm{x}\bm{y}}\bra{\bm{z}\bm{w}}$ must be such that two of $\bm{x},\bm{y},\bm{z}$ are equal. 
		Based on this observation, \eqref{lem1:sys1} implies several cases: (i) $\bm{x}=\bm{y},\bm{z}=\bm{w}$, (ii) $\bm{x}=\bm{z},\bm{y}=\bm{w}$, and (iii) $\bm{x}=\bm{w},\bm{y}=\bm{z}$. In all these cases $\eqmoment\left\{(-1)^{\bm{x}^\top\dyadic{A}\bm{x}+\bm{y}^\top\dyadic{A}\bm{y}+\bm{z}^\top\dyadic{A}\bm{z}+\bm{w}^\top\dyadic{A}\bm{w}}\right\}=1$.\\
		Hence, we obtain the following results.
		\begin{align}
			&\,\eqmomentop^{\otimes 2}\nonumber\\
			=&\,\frac{1}{4^n}\Bigg(I+\sum_{\bm{x},\bm{y}\in \bfZ^N_2}\ket{\bm{x}\bm{y}}\bra{\bm{y}\bm{x}}+\sum_{\bm{x}\bm{y}\in \bfZ^N_2}\ket{\bm{x}\bm{x}}\bra{\bm{y}\bm{y}}\nonumber\\
			&-2\sum_{\bm{x}\in\bfZ^N_2}\ket{\bm{x}\bm{x}}\bra{\bm{x}\bm{x}}\Bigg),
		\end{align}
		which is the result of Lem.~\ref{main:lem1} (iii) and completes the proof.
        \end{proof}
		We note that from Lem.~\ref{main:lem1}~(i), all elements from the respective sets $\left\{\frac{2^n}{\left|\mathcal{S}^{\rm eq}_n\right|}\ket{\phi^{{\rm eq}}_\dyadic{A}}\bra{\phi^{{\rm eq}}_\dyadic{A}}\right\}$ and $\left\{\frac{2^n}{\left|\mathcal{S}^{\rm req}_n\right|}\ket{\phi^{\rm req}_\dyadic{A}}\bra{\phi^{\rm req}_\dyadic{A}}\right\}$ form a POVM. 
        
        Using Lem.~\ref{main:lem1}~(ii) and (iii), let us prove Eq.~\eqref{eq:equatorial_tomography} and Eq.~\eqref{eq:real_equatorial_tomography} which are directly related to our tomography schemes.
      
        First, we consider ESPOVMs. By~Lem.~\ref{main:lem1}, 
            
		\begin{align}
			&\mathbb{E}_{\ket{\phi^{{\rm eq}}_\dyadic{A}}\sim\mathcal{D}\left(\mathcal{S}^{{\rm eq}}_n\right)}\left(\bra{\phi^{{\rm eq}}_\dyadic{A}}\sigma\ket{\phi^{{\rm eq}}_\dyadic{A}}\bra{\phi^{{\rm eq}}_\dyadic{A}}O\ket{\phi^{{\rm eq}}_{\dyadic{A}}}\right)\nonumber\\&=\,{\rm tr}\left\{\left(\momentop^{\otimes 2}\right)(\sigma\otimes O)\right\}\nonumber\\
			&=\,\frac{1}{4^n}{\rm tr}\Bigg\{\Bigg(I+\sum_{\bm{x},\bm{y}\in \bfZ^n_2}\ket{\bm{x}\bm{y}}\bra{\bm{y}\bm{x}}-\sum_{\bm{x}\in\bfZ^n_2}\ket{\bm{x}\bm{x}}\bra{\bm{x}\bm{x}}\Bigg)\nonumber\\&(\sigma\otimes O)\Bigg\}\nonumber,\\
			&=\,\frac{1}{4^n}\Bigg\{{\rm tr}(O)+\sum_{\bm{x},\bm{y}\in\bfZ^n_2}\bra{\bm{y}}\sigma\ket{\bm{x}}\bra{\bm{x}}O\ket{\bm{y}}\nonumber\\&-\sum_{\bm{x}\in\bfZ^n_2}\bra{\bm{x}}\sigma\ket{\bm{x}}\bra{\bm{x}}O\ket{\bm{x}}\Bigg\}\nonumber,\\
			&=\,\frac{1}{4^n}\left\{{\rm tr}(O)+{\rm tr}(O\sigma)-\sum_{\bm{x}\in\bfZ^n_2}\bra{\bm{x}}\sigma\ket{\bm{x}}\bra{\bm{x}}O\ket{\bm{x}}\right\}. 
		\end{align}
		Hence, we obtain the following result, 
		\begin{align}
			{\rm tr}(O\rho)=&\,\sum_{\ket{\phi^{{\rm eq}}_\dyadic{A}}\in \mathcal{S}^{{\rm eq}}_n}\frac{2^{2n}}{\left|\mathcal{S}^{{\rm eq}}_n\right|}\bra{\phi^{{\rm eq}}_{\dyadic{A}}}\rho\ket{\phi^{{\rm eq}}_{\dyadic{A}}}\bra{\phi^{{\rm eq}}_\dyadic{A}}O\ket{\phi^{{\rm eq}}_{\dyadic{A}}}\nonumber\\&+\sum_{\bm{x}\in\bfZ^n_2}\bra{\bm{x}}\rho\ket{\bm{x}}\bra{\bm{x}}O\ket{\bm{x}}-{\rm tr}(O).\label{eq:sub_equatorial_tomography}
		\end{align}
      
		Next, we consider RESPOVMs and also assume $O$ to be a real observable, that is $\bra{\bm{a}}O\ket{\bm{b}}\in\mathbb{R}$ for $\forall \bm{a},\bm{b}\in\bfZ^n_2$. Then, 
        
		\begin{align}
			&\mathbb{E}_{\ket{\phi^{{\rm req}}_\dyadic{A}}\sim\mathcal{D}\left(\mathcal{S}^{{\rm req}}_n\right)}\left(\bra{\phi^{{\rm req}}_\dyadic{A}}\sigma\ket{\phi^{{\rm req}}_\dyadic{A}}\bra{\phi^{req}_\dyadic{A}}O\ket{\phi^{{\rm req}}_{\dyadic{A}}}\right)\nonumber\\
			&=\,{\rm tr}\left\{\left(\eqmomentop^{\otimes 2}\right)(\sigma\otimes O)\right\}\nonumber\\
			&=\,\frac{1}{4^n}{\rm tr}\Bigg\{\Bigg(I+\sum_{\bm{x},\bm{y}\in \bfZ^n_2}\ket{\bm{x}\bm{y}}\bra{\bm{y}\bm{x}}+\sum_{\bm{x},\bm{y}\in \bfZ^n_2}\ket{\bm{x}\bm{x}}\bra{\bm{y}\bm{y}}\nonumber\\
			&\qquad\qquad\qquad\qquad-2\sum_{\bm{x}\in\bfZ^n_2}\ket{\bm{x}\bm{x}}\bra{\bm{x}\bm{x}}\Bigg)(\sigma\otimes O)\Bigg\}\nonumber\\&=\,\frac{1}{4^n}\Bigg\{{\rm tr}(O)+\sum_{\bm{x},\bm{y}\in\bfZ^n_2}\bra{\bm{y}}\sigma\ket{\bm{x}}\bra{\bm{x}}O\ket{\bm{y}}\nonumber\\&+\sum_{\bm{x},\bm{y}\in\bfZ^N_2}\!\!\bra{\bm{y}}\sigma\ket{\bm{x}}\!\bra{\bm{y}}O\ket{\bm{x}}-2\!\sum_{\bm{x}\in\bfZ^n_2}\!\!\bra{\bm{x}}\sigma\ket{\bm{x}}\!\bra{\bm{x}}O\ket{\bm{x}}\Bigg\}\nonumber\\&=\,\frac{1}{4^n}\Bigg\{{\rm tr}(O)+2\sum_{\bm{x},\bm{y}\in\bfZ^n_2}\bra{\bm{y}}\sigma\ket{\bm{x}}\bra{\bm{x}}O\ket{\bm{y}}\nonumber\\&\qquad\qquad\qquad\qquad-2\sum_{\bm{x}\in\bfZ^n_2}\bra{\bm{x}}\sigma\ket{\bm{x}}\bra{\bm{x}}O\ket{\bm{x}}\Bigg\}\label{eq:RESPOVM_lemma2_sub}
            \end{align}
            \begin{align}
    			&=\,\frac{1}{4^n}\left\{{\rm tr}(O)+2{\rm tr}(O\sigma)-2\!\sum_{\bm{x}\in\bfZ^n_2}\!\!\bra{\bm{x}}\sigma\ket{\bm{x}}\!\bra{\bm{x}}O\ket{\bm{x}}\right\}.
    			\label{eq:RESPOVM_lemma2}
		\end{align}
      
		This is equivalent to, 
		\begin{align}
			{\rm tr}(O\rho)=&\,\sum_{\ket{\phi^{{\rm req}}_\dyadic{A}}\in \mathcal{S}^{{\rm req}}_n}\frac{2^{2n-1}}{\left|\mathcal{S}^{{\rm req}}_n\right|}\bra{\phi^{{\rm req}}_{\dyadic{A}}}\rho\ket{\phi^{{\rm req}}_{\dyadic{A}}}\bra{\phi^{{\rm req}}_{\dyadic{A}}}O\ket{\phi^{{\rm req}}_{\dyadic{A}}}\nonumber\\
			&+\sum_{\bm{x}\in\bfZ^n_2}\bra{\bm{x}}\rho\ket{\bm{x}}\bra{\bm{x}}O\ket{\bm{x}}-\frac{1}{2}{\rm tr}(O).\label{eq:sub_real_equatorial_tomography}
		\end{align}
		Equation~\eqref{eq:RESPOVM_lemma2_sub} is valid when $O$ is real. This proves Eq.~\eqref{eq:equatorial_tomography} and Eq.~\eqref{eq:real_equatorial_tomography}. In Eq.~\eqref{main:eq_vareq} of the main text, we ignore the ${\rm tr}(O)$ term since it is constant and, hence, does not affect the variance.

        Let us give some remarks before proceeding further. An arbitrary Hermitian input density operator~$\rho$ can be expressed as $\frac{1}{2^n}\sum_{P\in \mathcal{P}_n}\tr{\rho P}P$. That is, setting $O$ as Pauli operators, we note that $\rho$ takes the equivalent form
        \begin{align}
            \rho=&\,\sum_{\ket{\phi^{{\rm eq}}_\dyadic{A}}\in \mathcal{S}^{{\rm eq}}_n}\frac{2^{2n}}{\left|\mathcal{S}^{{\rm eq}}_n\right|}\bra{\phi^{{\rm eq}}_{\dyadic{A}}}\rho\ket{\phi^{{\rm eq}}_{\dyadic{A}}}\ket{\phi^{{\rm eq}}_\dyadic{A}}\bra{\phi^{{\rm eq}}_{\dyadic{A}}}\nonumber\\&+\sum_{\bm{x}\in\bfZ^n_2}\bra{\bm{x}}\rho\ket{\bm{x}}\ket{\bm{x}}\bra{\bm{x}}-I.
        \end{align}
        Using the POVM properties,  $\sum_{\dyadic{A}}\frac{2^n}{\left|\mathcal{S}^{\rm eq}_n\right|}\braket{\phi^{{\rm eq}}_\dyadic{A}|\rho|\phi^{{\rm eq}}_\dyadic{A}}=\sum_{\bm{x}}\braket{\bm{x}|\rho|\bm{x}}=1$, we note that $\mathcal{M}^{-1}\circ\mathcal{M}(\rho)=\rho$ following the notations in Eqs.~\eqref{eq:measurement_channel} and \eqref{eq:classical_shadow}. Therefore, we conclude that Eq.~\eqref{eq:classical_shadow} is indeed the inverse of $\mathcal{M}$. Since only the real part of $\rho$ is expanded by only real Pauli basis~operators, we can certify the RESPOVM inverse of Eq.~\eqref{eq:classical_shadow} in the same manner.

		From the Lem.~\ref{main:lem1}, Eq.~\eqref{eq:sub_equatorial_tomography}, and Eq.~\eqref{eq:sub_real_equatorial_tomography}, the following estimators (denoted with a caret) are constructed:
	
		\begin{equation}
			\widehat{\left<O\right>}=\begin{cases}
				\dfrac{2^{n}}{(N/2)}\mathlarger{\mathlarger{\sum}}_{\substack{\text{sampled $\dyadic{A}$}\\ \text{over $N/2$ copies}}}\bra{\phi^{{\rm eq}}_{\dyadic{A}}}O\ket{\phi^{{\rm eq}}_{\dyadic{A}}}\\+\dfrac{1}{(N/2)}\mathlarger{\mathlarger{\sum}}_{\substack{\text{measured $\bm{p}'$}\\ \text{over $N/2$ copies}}}\bra{\bm{p}'}O\ket{\bm{p}'}-{\rm tr}(O) &\\ (\text{ESPOVM and comp. basis})\,,\\[6ex]
				\dfrac{2^{n-1}}{(N/2)}\mathlarger{\mathlarger{\sum}}_{\substack{\text{sampled $\dyadic{A}$}\\ \text{over $N/2$ copies}}}\bra{\phi^{{\rm req}}_{\dyadic{A}}}O\ket{\phi^{{\rm req}}_{\dyadic{A}}}\\+\dfrac{1}{(N/2)}\mathlarger{\mathlarger{\sum}}_{\substack{\text{measured $\bm{p}'$}\\ \text{over $N/2$ copies}}}\bra{\bm{p}'}O\ket{\bm{p}'}-\dfrac{1}{2}{\rm tr}(O)& \\(\text{RESPOVM and comp. basis})\,,
			\end{cases}
			\label{eq:Oest}
		\end{equation}

		where $\dyadic{A}$ and $\bm{p}'$ are sampled from the distribution of $\left\{\frac{2^n}{|\mathcal{S}^{{\rm (r)eq}}_n|}\bra{\phi^{{\rm (r)eq}}_{\dyadic{A}}}\rho\ket{\phi^{{\rm (r)eq}}_{\dyadic{A}}}\right\}$ and $\left\{\braket{\bm{p}'|\rho|\bm{p}'}\right\}$ respectively. 
		Note that these estimators are unbiased, i.e. data-average $\mathbb{E}(\widehat{\left<O\right>})=\left<O\right>$.
		
		We are ready to provide more details concerning the (R)ESPOVM shadow-tomography procedure, thereby explaining the validity of Algo.~\ref{main:algo1}. We shall only discuss matters related to the ESPOVM, as the dissertation for RESPOVM follows similarly. Let us assume an input state $\rho=\sum_{\bm{z},\bm{w}\in\bfZ^n_2}\ket{\bm{z}}\bra{\bm{w}}\bra{\bm{z}}\rho\ket{\bm{w}}$. The procedure begins by a random selection of $\ket{\phi^{\rm eq}_{\dyadic{A}}}\in\mathcal{S}^{{\rm eq}}_n$, that is the matrix $\dyadic{A}$ from uniformly distributed elements in the set
      
		\begin{align}
			\left\{\dyadic{A}\bigg|\dyadic{A}=(a_{ij})_{i,j\in [n]},\;a_{ij}=a_{ji},\;a_{ii}\in\bfZ_4,\;a_{ij(i\ne j)}\in\bfZ_2\right\}\,.
			\label{eq:set_A}
		\end{align}
		Operationally, this is akin to flipping a coin $\frac{n^2-n}{2}$ times to determine all symmetric off-diagonal elements and flipping a uniform 4-sided die $n$ times to determine all diagonal elements of $\dyadic{A}$. Now, given a single-qubit~(two-qubit resp.) operation $K$, we define $K_{i}\;(K_{ij})$ as $K$ acting on the $i$-th\;($i$-th and $j$-th) qubit. After choosing a random $\dyadic{A}$, we perform $\mathrm{CZ}_{ij}$ gate operations whenever $a_{ij}=1$ and $i<j$, followed by single-qubit operations $U_i$ such that 
		\begin{align}
			U_i\equiv
			\begin{cases}
				I_i & {\rm if}\;a_{ii}=0,\\
				S^{\dagger} & {\rm if}\;a_{ii}=1,\\
				Z_i & {\rm if}\;a_{ii}=2,\\
				S_i & {\rm if}\;a_{ii}=3,
			\end{cases}	\quad\text{or simply},\;U_{i}\equiv (S^{\dagger})^{a_{ii}}. 
		\end{align}
		Next, we apply Hadamard gates $H\equiv\bigotimes_{i=1}^{n}H_i$. The subsequent evolved state $\rho'$ is written as
        
		\begin{align}
			&\rho'=\nonumber\\&\frac{1}{2^{n}}\sum_{ \bm{z},\bm{w},\bm{s},\bm{t}\in\bfZ^n_2}\I^{3\bm{z}^\top\dyadic{A}\bm{z}+\bm{w}^\top\dyadic{A}\bm{w}}(-1)^{\bm{z}\bm{\cdot} \bm{s}+\bm{w}\bm{\cdot} \bm{t} }\ket{\bm{s}}\bra{\bm{t}}\bra{\bm{z}}\rho\ket{\bm{w}}.
		\end{align}
		Here, $\bm{a}\bm{\cdot}\bm{b}$ refers to the binary inner product between $\bm{a}$ and $\bm{b}$. We shall now perform the Pauli $Z$-basis measurement independently on all the qubits. The probability $\mathrm{prob}(\bm{p}\mid A)$ to obtain a specific binary string $\bm{p}\in\bfZ^n_2$ is
        \begin{widetext}
		\begin{align}
			\mathrm{prob}(\bm{p}\mid \dyadic{A})\equiv \bra{\bm{p}}\rho'\ket{\bm{p}}
			&=\frac{1}{2^{n}}\sum_{ \bm{z},\bm{w},\bm{s},\bm{t}\in\bfZ^n_2}\I^{3\bm{z}^\top\dyadic{A}\bm{z}+\bm{w}^\top\dyadic{A}\bm{w}}(-1)^{\bm{z}\bm{\cdot} \bm{s}+\bm{w}\bm{\cdot} \bm{t} }\braket{\bm{p}\mid \bm{s}}\braket{\bm{t}\mid \bm{p}}\bra{\bm{z}}\sigma\ket{\bm{w}},\nonumber\\
			&=\frac{1}{2^{n}}\sum_{ \bm{z},\bm{w}\in\bfZ^n_2}\I^{3\bm{z}^\top\dyadic{A}\bm{z}+\bm{w}^\top\dyadic{A}\bm{w}}(-1)^{\bm{z}\bm{\cdot} \bm{p}+\bm{w}\bm{\cdot} \bm{p} }\bra{\bm{z}}\sigma\ket{\bm{w}},\nonumber\\
			&=\frac{1}{2^{n}}\sum_{ \bm{z},\bm{w}\in\bfZ^n_2}(-\I)^{\bm{z}^\top\dyadic{A}\bm{z}-2\bm{z}\bm{\cdot} \bm{p}}\,\I^{\bm{w}^\top\dyadic{A}\bm{w}-2\bm{w}\bm{\cdot} \bm{p}}\bra{\bm{z}}\sigma\ket{\bm{w}},\label{eq:prob}\\
			&=\bra{\phi^{\rm eq}_{\dyadic{A}'}}\rho\ket{\phi^{\rm eq}_{\dyadic{A}'}}, 
		\end{align}
        \end{widetext}
		where $\dyadic{A}'$ is an $n\times n$ matrix such that $a'_{ij}=a_{ij}$ for $i\ne j$, and $a'_{ii}=a_{ii}+2p_{i}\;({\rm mod}\;4)$. The last equation can be obtained by following the definition of Eq.~\eqref{main:eq1}. Now, we define $\dyadic{D}(\bm{p})$ as the $n\times n$ diagonal matrix whose $i$-th diagonal element is equal to $p_i$, and also note, from Eq.~\eqref{main:eq1}, that
		\begin{align}
			&\frac{2^n}{\left|\mathcal{S}^{\rm eq}_n\right|}\bra{\phi^{\rm eq}_\dyadic{A}}\rho\ket{\phi^{\rm eq}_\dyadic{A}}\nonumber\\&=\frac{1}{\left|\mathcal{S}^{\rm eq}_n\right|}\sum_{ \bm{z},\bm{w}\in\bfZ^n_2}(-\I)^{\bm{z}^\top\dyadic{A}\bm{z}}\,\I^{\bm{w}^\top\dyadic{A}\bm{w}}\bra{\bm{z}}\rho\ket{\bm{w}}\nonumber\\&=\sum_{\bm{p}\in\bfZ^n_2}\frac{1}{\left|\mathcal{S}^{\rm eq}_n\right|}\mathrm{prob}(\bm{p}\mid \dyadic{A}-2\dyadic{D}(\bm{p}))\,,\label{eq:prob2} 
		\end{align}
		where the final equality is a result of Eq.~\eqref{eq:prob}.

		The left-hand side of Eq.~\eqref{eq:prob2} corresponds to the probability distribution of $\dyadic{A}$ we want to sample from. Also, given a $\bm{p}\in\bfZ^n_2$, the probability to sample the matrix $\dyadic{A}-2\dyadic{D}(\bm{p})$ is constant~$\left(\frac{1}{\left|\mathcal{S}^{\rm eq}_n\right|}\right)$. Hence, we can interpret the right-most side of Eq.~\eqref{eq:prob2} as the probability to get the matrix $\dyadic{A}$ after we uniformly sampled the matrix $\dyadic{A}'$ and add $2\dyadic{D}(\bm{p})$ (see steps~\ref{algo1:start_for} through \ref{algo1:end_for} in Algo.~\ref{main:algo1}). This way, the desired sample of~$\dyadic{A}$ is obtained. 
		
		Furthermore, we note that one can propagate the measurement section to the left-over Hadamard gates and phase gates, becoming $X$ or $Y$-basis measurements. In that case, the CZ gates are the only remaining intermediate gates, and Vizing's theorem states that only at most $n$~layers of CZ~gates are needed to realize an arbitrary~$n$-qubit CZ circuit, which settles part of Thm.~\ref{main:thm0}. One must still carry out either the Pauli $X$ or $Y$ measurement depending on the phase gate acting on that qubit. However, the phase gate implementation is unnecessary and we may simply perform an $X$ or $Y$-basis measurement. The reason is that $Z$~($S$ resp.) just flips the measurement outcome of $X$~($Y$)~measurement. For example, implementing $Z$ and obtaining the $X$-measurement outcome~$0$ is equivalent to just measuring in the $X$-basis, obtaining the outcome~$1$, which gives the same diagonal value of the sampled $\dyadic{A}$. This closes the proof of Thm.~\ref{main:thm0} for the ESPOVM scheme.
		
		Harking back to Eq.~\eqref{eq:sub_equatorial_tomography}, in order to complete the shadow tomography protocol, one also acquires $\bm{p}'\in\bfZ^n_2$ following the distribution $\bra{\bm{p}'}\rho\ket{\bm{p}'}$, which can be done easily by measuring $\rho$ in the Pauli~$Z$ basis. Hence, the algorithm for an unbiased estimation of ${\rm tr}(O\rho)$ can be summarized as follows: 
		
		\begin{enumerate}
			\item Measure one copy of $\rho$ with the projector $\ket{\phi^{\rm eq}_\dyadic{A}}\bra{\phi^{\rm eq}_\dyadic{A}}$ defined by a uniformly-chosen $\dyadic{A}$ from the set $\mathcal{A}$ in \eqref{eq:set_A}. This is done by (i)~uniformly choosing an $\dyadic{A}'\in\mathcal{A}$, (ii)~obtain a $\bm{p}\in\mathbb{Z}^n_2$ from measuring this $\rho$~copy with $\ket{\phi^{\rm eq}_{\dyadic{A}'}}\bra{\phi^{\rm eq}_{\dyadic{A}'}}$, (iii) and take $\dyadic{A}=\dyadic{A}'+2\dyadic{D}(\bm{p})$. The measurement outcome $\dyadic{A}$ would then follow the distribution $\frac{2^n}{\left|\mathcal{S}^{\rm eq}_n\right|}\bra{\phi^{\rm eq}_\dyadic{A}}\rho\ket{\phi^{\rm eq}_\dyadic{A}}$.
			\item Measure one copy of $\rho$ and obtain $\bm{p}'\in\bfZ^n_2$ following the distribution $\bra{\bm{p}'}\rho\ket{\bm{p}'}$.
			\item Calculate $\widehat{O}_{\dyadic{A}.\bm{p'}}\equiv 2^n\bra{\phi^{\rm eq}_\dyadic{A}}O\ket{\phi^{\rm eq}_\dyadic{A}}+\bra{\bm{p}'}O\ket{\bm{p}'}-{\rm tr}(O)$. 
		\end{enumerate}
		Algorithm~\ref{main:algo1} is the pseudocode encapsulation of all the arguments we made thus far.
        
		\section{Estimation-variance analysis of equatorial-stabilizer shadow tomography}\label{subsec:est-var}

    We shall now sketch the proof of Thm.~\ref{main:thm1}~(i). 
    We start from Eq.~\eqref{main:eq_vareq} and note that 
		
		\begin{align}\label{method:var0}
			{\rm Var}(\widehat{O}^{{\rm eq}}_{\dyadic{A}})=&\,\mathbb{E}\left(\left\{2^n\bra{\phi^{{\rm eq}}_{\dyadic{A}}}O\ket{\phi^{{\rm eq}}_{\dyadic{A}}}-\mathbb{E}\left(2^n\bra{\phi^{{\rm eq}}_{\dyadic{A}}}O\ket{\phi^{{\rm eq}}_{\dyadic{A}}}\right)\right\}^2\right)\nonumber\\
			=&\,\mathbb{E}\Bigg(\Big\{2^n\bra{\phi^{{\rm eq}}_{\dyadic{A}}}(O_0+\frac{I}{2^n}\tr{O})\ket{\phi^{{\rm eq}}_{\dyadic{A}}}\nonumber\\
			&-\mathbb{E}\left(2^n\bra{\phi^{{\rm eq}}_{\dyadic{A}}}(O_0+\frac{I}{2^n}\tr{O})\ket{\phi^{{\rm eq}}_{\dyadic{A}}}\right)\Big\}^2\Bigg)\nonumber\\
			=&\,\mathbb{E}\left(\left\{2^n\!\bra{\phi^{{\rm eq}}_{\dyadic{A}}}O_0\ket{\phi^{{\rm eq}}_{\dyadic{A}}}-\mathbb{E}\left(2^n\!\bra{\phi^{{\rm eq}}_{\dyadic{A}}}O_0\ket{\phi^{{\rm eq}}_{\dyadic{A}}}\right)\right\}^2\right)\nonumber\\
			=&\,{\rm Var}(\widehat{O_0}^{{\rm eq}}_{\dyadic{A}})
		\end{align}
        
		where the second line is easily derived noting that $O_0$ is \emph{traceless part}, $O-\frac{I}{2^n}\tr{O}$, hence $\tr{O_0}=0$. Since we can follow the same reasoning to  ${\rm Var}(\widehat{O}^{{\rm req}}_{\dyadic{A}})$ and  ${\rm Var}(\widehat{O}^{{\rm bin}}_{\bm{p'}})$, we only need to consider the variance of the $O_0$. In other words, we obtain that
		
		\begin{align}
			{\rm Var}^{{\rm (r)eq}}(\widehat{O}^{{\rm (r)eq}}_\dyadic{A})=&\,{\rm Var}^{{\rm (r)eq}}(\widehat{O_0}^{{\rm (r)eq}}_\dyadic{A})\nonumber\\
			=&\,\mathbb{E}((\widehat{O_0}^{{\rm (r)eq}}_\dyadic{A})^2)-\mathbb{E}(\widehat{O_0}^{{\rm (r)eq}}_\dyadic{A})^2\nonumber\\
			\le&\,\mathbb{E}((\widehat{O_0}^{{\rm (r)eq}}_\dyadic{A})^2),
		\end{align}
		where $\mathbb{E}$ is the expected value of $\widehat{O_0}_\dyadic{A}\equiv 2^n\bra{\phi^{{\rm (r)eq}}_\dyadic{A}}O_0\ket{\phi^{{\rm (r)eq}}_\dyadic{A}}$. Hence, to obtain the upper bound of the estimation variance of $\widehat{O_0}^{{\rm (r)eq}}_\dyadic{A}$, all we need to do is to bound $\mathbb{E}((\widehat{O_0}^{{\rm (r)eq}}_\dyadic{A})^2)$ from above. 
		
		Eq.~\eqref{eq:exp_square} leads us to certify the third moment $\momentopp^{\otimes 3}$, and Lem.~\ref{method:lem3} is the answer. We assume the lemma holds (see Sec.~\ref{proof_lem1} for its proof). Then, we define five more finite series, $\mathcal{C}^{\rm(i)}\equiv\sum_{\bm{x}\in\bfZ^n_2}\ket{\bm{x}\bm{x}\bm{x}}\bra{\bm{xxx}}$, and,
		
		\begin{align}\label{method:common_string}
			\mathcal{C}^{\rm(ii)}\equiv&\sum_{\bm{x},\bm{y}\in\bfZ^n_2}\!\!\!\!\big(\ket{\bm{yxx}}\bra{\bm{yxx}}+\ket{\bm{xyx}}\bra{\bm{yxx}}+\ket{\bm{xxy}}\bra{\bm{yxx}}\nonumber\\
			&+\ket{\bm{yxx}}\bra{\bm{xyx}}+\ket{\bm{xyx}}\bra{\bm{xyx}}+\ket{\bm{xxy}}\bra{\bm{xyx}}\nonumber\\
			&+\ket{\bm{yxx}}\bra{\bm{xxy}}+\ket{\bm{xyx}}\bra{\bm{xxy}}+\ket{\bm{xxy}}\bra{\bm{xxy}}\big),\\
			\mathcal{C}^{\rm(iii)}\equiv&\sum_{\bm{x},\bm{y}\in\bfZ^n_2}\!\!\!\!\big(\ket{\bm{y}\bm{x}\bm{x}}\bra{\bm{yyy}}+\ket{\bm{xyx}}\bra{\bm{yyy}}+\ket{\bm{xxy}}\bra{\bm{yyy}}\nonumber\\
			&+\ket{\bm{yyy}}\bra{\bm{y}\bm{x}\bm{x}}+\ket{\bm{yyy}}\bra{\bm{xyx}}+\ket{\bm{yyy}}\bra{\bm{xxy}}\big),\\
            \mathcal{C}^{\rm(iv)}\equiv&\sum_{\bm{x},\bm{y},\bm{z}\in\bfZ^n_2}\!\!\!\!\!\!\big(\ket{\bm{xyz}}\bra{\bm{xyz}}+\ket{\bm{xyz}}\bra{\bm{zxy}}+\ket{\bm{xyz}}\bra{\bm{yzx}}\nonumber\\
			&+\ket{\bm{xyz}}\bra{\bm{yxz}}+\ket{\bm{xyz}}\bra{\bm{xzy}}+\ket{\bm{xyz}}\bra{\bm{zyx}}\big),
			\end{align}
            \begin{align}
            \mathcal{C}^{\rm(v)}\equiv&\sum_{\bm{x},\bm{y},\bm{z}\in\bfZ^n_2}\!\!\!\!\!\!\big(\ket{\bm{xxz}}\bra{\bm{yyz}}+\ket{\bm{xxz}}\bra{\bm{yzy}}+\ket{\bm{xxz}}\bra{\bm{zyy}}\nonumber\\
			&+\ket{\bm{xzx}}\bra{\bm{yyz}}+\ket{\bm{xzx}}\bra{\bm{yzy}}+\ket{\bm{xzx}}\bra{\bm{zyy}}\nonumber\\
			&+\ket{\bm{zxx}}\bra{\bm{yyz}}+\ket{\bm{zxx}}\bra{\bm{yzy}}+\ket{\bm{zxx}}\bra{\bm{zyy}}\big).
		\end{align}
		Following Lem.~\ref{method:lem3}, the result for ESPOVMs can be rewritten as, 
		\begin{align}\label{method:esc}
			&\,\momentop^{\otimes 3}\nonumber\\
			=&\,\frac{1}{8^n}\sum_{\left(\bm{x},\bm{y},\bm{z},\bm{w},\bm{s},\bm{t}\right)\in \mathcal{K}_2(\bfZ^n_2) }\ket{\bm{x}\bm{y}\bm{z}}\bra{\bm{w}\bm{s}\bm{t}}\nonumber\\
			=&\,\frac{1}{8^n}\Big\{\mathcal{C}^{\rm(i)}+(\mathcal{C}^{\rm(ii)}-9\mathcal{C}^{\rm(i)})\nonumber\\
			&\qquad+[\mathcal{C}^{\rm(iv)}-6\mathcal{C}^{\rm(i)}-2(\mathcal{C}^{\rm(ii)}-9\mathcal{C}^{\rm(i)})]\Big\}\nonumber\\
			=&\,\frac{1}{8^n}\left(4\mathcal{C}^{\rm(i)}-\mathcal{C}^{\rm(ii)}+\mathcal{C}^{\rm(iv)}\right).
		\end{align}
		Also, for RESPOVMs, 
		\begin{align}\label{method:resc}
			&\,\eqmomentop^{\otimes 3}\nonumber\\
			=&\,\frac{1}{8^n}\sum_{\left(\bm{x},\bm{y},\bm{z},\bm{w},\bm{s},\bm{t}\right)\in \mathcal{K}_1(\bfZ^n_2) }\ket{\bm{x}\bm{y}\bm{z}}\bra{\bm{w}\bm{s}\bm{t}}\nonumber\\
			=&\,\frac{1}{8^n}\Big\{\mathcal{C}^{\rm(i)}+(\mathcal{C}^{\rm(ii)}-9\mathcal{C}^{\rm(i)})+(\mathcal{C}^{\rm(iii)}-6\mathcal{C}^{\rm(i)})\nonumber\\
			&\qquad+[\mathcal{C}^{\rm(iv)}-6\mathcal{C}^{\rm(i)}-2(\mathcal{C}^{\rm(ii)}-9\mathcal{C}^{\rm(i)})]\nonumber\\
			&\qquad+[\mathcal{C}^{\rm(v)}-9\mathcal{C}^{\rm(i)}-3(\mathcal{C}^{\rm(iii)}-6\mathcal{C}^{\rm(i)})-(\mathcal{C}^{\rm(ii)}-9\mathcal{C}^{\rm(i)})]\Big\}\nonumber\\
			=&\,\frac{1}{8^n}\left(16\mathcal{C}^{\rm(i)}-2\mathcal{C}^{\rm(ii)}-2\mathcal{C}^{\rm(iii)}+\mathcal{C}^{\rm(iv)}+\mathcal{C}^{\rm(v)}\right).
		\end{align}
		Upon recalling Eq.~\eqref{eq:exp_square}, 
		\begin{align}
			&\,\mathbb{E}((\widehat{O_0}^{{\rm eq}}_\dyadic{A})^2)\nonumber\\
			=&\,2^{3n}{\rm tr}\left(\momentop^{\otimes 3}\,(\rho\otimes O_0 \otimes O_0)\right).
		\end{align}
		For ESPOVMs, by Eq.~\eqref{method:esc}, we can rewrite this as 
		\begin{align}\label{method:esc2}
			\mathbb{E}((\widehat{O_0}^{{\rm eq}}_\dyadic{A})^2)
			&={\rm tr}\left((4\mathcal{C}^{\rm(i)}-\mathcal{C}^{\rm(ii)}+\mathcal{C}^{\rm(iv)})(\rho\otimes O_0 \otimes O_0)\right).
		\end{align}
		Similarly, we can show that from Eqs.~\eqref{eq:exp_square} and~\eqref{method:resc}, 
		\begin{align}\label{method:resc2}
			&\,\mathbb{E}((\widehat{O_0}^{{\rm req}}_\dyadic{A})^2)\nonumber\\
			=&\,\frac{1}{4}{\rm tr}\left((16\mathcal{C}^{\rm(i)}-2\mathcal{C}^{\rm(ii)}-2\mathcal{C}^{\rm(iii)}+\mathcal{C}^{\rm(iv)}+\mathcal{C}^{\rm(v)})(\rho\otimes O_0^{\otimes2})\right).
		\end{align}
        
		We are now ready to provide an upper bound for $\mathbb{E}((\widehat{O_0}^{{\rm req}}_\dyadic{A})^2)$ by utilizing
		\begin{lemma}\label{method:lem4}
			Given the above definitions, for arbitrary state $\rho$ and hermitian operator $O_0$,
			\begin{align}
				\left|{\rm tr}\left(\mathcal{C}^{\rm(i)}(\rho\otimes O_0 \otimes O_0)\right)\right|\le&\, \tr{O_0^2}\,,\label{method:lem4_1}\\
				\left|{\rm tr}\left(\mathcal{C}^{\rm(ii)}(\rho\otimes O_0 \otimes O_0)\right)\right|\le&\, 7\tr{O_0^2}\,,\label{method:lem4_2}\\
				\left|{\rm tr}\left(\mathcal{C}^{\rm(iii)}(\rho\otimes O_0 \otimes O_0)\right)\right|\le&\, 6\tr{O_0^2}\,,\label{method:lem4_3}\\
				\left|{\rm tr}\left(\mathcal{C}^{\rm(iv)}(\rho\otimes O_0 \otimes O_0)\right)\right|\le&\, 3\tr{O_0^2}\,,\label{method:lem4_4}\\
				\left|{\rm tr}\left(\mathcal{C}^{\rm(v)}(\rho\otimes O_0 \otimes O_0)\right)\right|\le&\, 7\tr{O_0^2}\,,\label{method:lem4_5}
			\end{align}
		\end{lemma}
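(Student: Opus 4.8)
The plan is to reduce every trace in Lemma~\ref{method:lem4} to an explicit scalar sum over computational-basis labels and then bound each surviving summand by $\tr{O_0^2}$. The organizing identity is that, for three $n$-qubit registers carrying $\rho$, $O_0$, $O_0$ respectively,
\[
\tr{\ket{\bm{a}\bm{b}\bm{c}}\bra{\bm{d}\bm{e}\bm{f}}\,(\rho\otimes O_0\otimes O_0)}=\bra{\bm{d}}\rho\ket{\bm{a}}\bra{\bm{e}}O_0\ket{\bm{b}}\bra{\bm{f}}O_0\ket{\bm{c}}.
\]
Applying this to each of the (at most nine) rank-one terms composing $\mathcal{C}^{\rm(i)}$ through $\mathcal{C}^{\rm(v)}$ turns each $\tr{\mathcal{C}^{(\cdot)}(\rho\otimes O_0^{\otimes2})}$ into a finite sum of contractions of matrix elements of $\rho$ and $O_0$, which I would then evaluate one by one.

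The first and most useful observation is that $O_0$ is traceless. Whenever one $O_0$-register is contracted on its own diagonal \emph{independently} of the remaining indices, the term factors as $\bigl(\sum_{\bm u}\bra{\bm u}O_0\ket{\bm u}\bigr)\times(\cdots)=\tr{O_0}\times(\cdots)=0$ and drops out. A quick count of the survivors already reproduces the constants: in $\mathcal{C}^{\rm(iv)}$ the three terms in which an $O_0$-register decouples as a factor $\tr{O_0}$ vanish, leaving $3$; two of the nine terms of $\mathcal{C}^{\rm(ii)}$ and two of the nine of $\mathcal{C}^{\rm(v)}$ vanish likewise, leaving $7$ each; $\mathcal{C}^{\rm(iii)}$ has no vanishing term, leaving $6$; and $\mathcal{C}^{\rm(i)}$ is a single term. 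It therefore suffices to show that every \emph{surviving} contraction has modulus at most $\tr{O_0^2}$.

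Next I would classify the survivors into three shapes and bound each by $\tr{O_0^2}$. (a)~Pure diagonal sums such as $\sum_{\bm x}\bra{\bm x}\rho\ket{\bm x}\bigl(\bra{\bm x}O_0\ket{\bm x}\bigr)^2$, $\sum_{\bm x}\bigl(\bra{\bm x}O_0\ket{\bm x}\bigr)^2$, or $\sum_{\bm x}\bra{\bm x}\rho\ket{\bm x}\bra{\bm x}O_0^2\ket{\bm x}$; each is $\le\tr{O_0^2}$ because $\rho$ has unit-trace nonnegative diagonal and the per-$\bm x$ factors $\bigl(\bra{\bm x}O_0\ket{\bm x}\bigr)^2$ and $\bra{\bm x}O_0^2\ket{\bm x}$ are themselves $\le\tr{O_0^2}$. (b)~Trace shapes $\tr{\rho A}$ where $A$ is a length-two product of $O_0$, its complex conjugate $\overline{O_0}$, and its diagonal part $D\equiv\mathrm{diag}(O_0)$, e.g. $\tr{\rho O_0^2}$, $\tr{D\rho O_0}$, $\tr{\rho\,\overline{O_0}\,O_0}$; here $|\tr{\rho A}|\le\|A\|_\infty$ since $\rho$ is a state, and $\|A\|_\infty\le\|O_0\|_\infty^2\le\tr{O_0^2}$ using $\|D\|_\infty,\|\overline{O_0}\|_\infty\le\|O_0\|_\infty\le\|O_0\|_F$. (c)~Hadamard-type double sums $\sum_{\bm u,\bm v}\rho_{\bm u\bm v}B_{\bm u\bm v}$ with $B=\overline{O_0}\,O_0$ (arising from the $(2,1)$-type term of $\mathcal{C}^{\rm(v)}$), which Cauchy--Schwarz bounds by $\sqrt{\tr{\rho^2}}\,\|B\|_F\le\|\overline{O_0}\,O_0\|_F\le\|O_0\|_\infty\|O_0\|_F\le\tr{O_0^2}$. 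Summing the $1$, $7$, $6$, $3$, $7$ survivors then yields \eqref{method:lem4_1}--\eqref{method:lem4_5}.

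The main obstacle is bookkeeping rather than any single hard inequality: one must evaluate all nine contractions per $\mathcal{C}^{(\cdot)}$, track where the conjugates $\overline{O_0}$ appear (unavoidable because $O_0$ is only assumed Hermitian, although in the RESPOVM context of $\mathcal{C}^{\rm(iii)},\mathcal{C}^{\rm(v)}$ the reality of $O_0$ removes them), and confirm that the residual operator in each $\tr{\rho A}$ and each Hadamard kernel $B$ genuinely carries operator/Frobenius norm at most $\sqrt{\tr{O_0^2}}$ per $O_0$-factor. The one step that is not a one-line trace estimate is the mixed term $\tr{D\rho O_0}=\sum_{\bm x}\bra{\bm x}O_0\ket{\bm x}\bra{\bm x}\rho O_0\ket{\bm x}$, which I would cyclically regroup to $\tr{\rho\,(O_0D)}$ and bound via $\|O_0D\|_\infty\le\|O_0\|_\infty\|D\|_\infty\le\tr{O_0^2}$. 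Getting the sharp constants---in particular distinguishing the $7$ of $\mathcal{C}^{\rm(ii)},\mathcal{C}^{\rm(v)}$ from the $6$ of $\mathcal{C}^{\rm(iii)}$---hinges entirely on correctly enumerating which terms are killed by $\tr{O_0}=0$.
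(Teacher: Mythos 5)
Your proposal is correct and follows essentially the same route as the paper's proof: expand each $\mathcal{C}^{(\cdot)}$ term by term via $\tr{\ket{\bm{a}\bm{b}\bm{c}}\bra{\bm{d}\bm{e}\bm{f}}(\rho\otimes O_0\otimes O_0)}=\bra{\bm{d}}\rho\ket{\bm{a}}\bra{\bm{e}}O_0\ket{\bm{b}}\bra{\bm{f}}O_0\ket{\bm{c}}$, use $\tr{O_0}=0$ to eliminate the decoupled terms (your survivor counts $1,7,6,3,7$ agree with the paper's), and bound each survivor by $\tr{O_0^2}$ through operator-norm and Cauchy--Schwarz estimates. The only differences are cosmetic, e.g.\ you bound a few mixed terms via $\|O_0 D\|_\infty$ or a Hadamard-sum Cauchy--Schwarz where the paper uses a longer explicit Cauchy--Schwarz chain, but these are interchangeable.
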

		\noindent
		See Sec.~\ref{proof_lem4} for its proof. This result implies that
		\begin{align}
			\mathbb{E}((\widehat{O_0}^{{\rm eq}}_\dyadic{A})^2)
			\le&\,4\left|{\rm tr}\left(\mathcal{C}^{\rm(i)}(\rho\otimes O_0 \otimes O_0)\right)\right|\nonumber\\
			&+\left|{\rm tr}\left(\mathcal{C}^{\rm(ii)}(\rho\otimes O_0 \otimes O_0)\right)\right|\nonumber\\
			&+\left|{\rm tr}\left(\mathcal{C}^{\rm(iv)}(\rho\otimes O_0 \otimes O_0)\right)\right|\le 14\tr{O_0^2}.
		\end{align}
		Similarly, we can show that $E_{\rho}((\widehat{O_0}^{{\rm req}}_\dyadic{A})^2)\le 13\tr{O_0^2}$.
		
		Furthermore, for $\mathbb{E}((\widehat{O_0}^{{\rm bin}}_\dyadic{A})^2)$,
		\begin{align}
			\mathbb{E}((\widehat{O_0}^{{\rm bin}}_{\bm{p'}})^2)=&\sum_{\bm{x}\in\bfZ^n_2}\bra{\bm{x}}\rho\ket{\bm{x}}\bra{\bm{x}}O_0\ket{\bm{x}}\bra{\bm{x}}O_0\ket{\bm{x}}\nonumber\\
			\le&\sum_{\bm{x},\bm{y}\in\bfZ^n_2}\bra{\bm{x}}\rho\ket{\bm{x}}\bra{\bm{x}}O_0\ket{\bm{y}}\bra{\bm{y}}O_0\ket{\bm{x}}\nonumber\\
			\le&{\rm tr}(O_0^2).
		\end{align}        
		With this, and the arguments leading to  Eq.~\eqref{method:var0} and Eq.~\eqref{main:eq_vareq}, Thm.~\ref{main:thm1}~(i) is proven. The upper bound of the sampling-copy number, that is $\frac{(68\times 2){\rm Var}^{{\rm (r)eq}}_\rho(\widehat{O_0}_{\dyadic{A},j})}{\varepsilon^2}\,\log\!\left(\frac{2M}{\delta}\right)$ for $M$ observables, follows from the statistical benefits of using the median-of-means estimation method~\cite{huang2020,lerasle2019lecture,Blair:1985problem,Jerrum:1986random}.
		
		Suppose that the single observable $O$ of interest is a quantum state~$\sigma$. Then  $\tr{O_0^2}=\tr{\left(\sigma-\frac{I}{2^n}\right)^2}=\tr{\sigma^2}-\frac{\tr{\sigma}}{2^{n-1}}+\frac{1}{2^n}\le \tr{\sigma^2}\le 1$.  Hence, \emph{any quantum-state observable $\sigma$ possesses a constant estimation variance and demonstrates shadow-tomographic advantage with (R)ESPOVM, provided} that the trace inner-product between $\sigma$ and equatorial stabilizer state is computable efficiently. 
		
		The proofs of Thm.~\ref{main:thm1}~(ii) and (iii) only need an average of estimation variance over uniformly chosen input and target state. The result is shown below.
		\begin{lemma}\label{method:lem5}
			\noindent
			(i) If $2^n\gg1$, the averaged value $\overline{{\rm Var}^{{\rm (r)eq}}(\hat{\sigma})}^{\rho,\sigma}$ over uniformly-distributed complex pure states~$\rho$ and real pure states~$\sigma$ approaches $1$ for~${\rm eq}$ and $\frac{1}{2}$ for~${\rm req}$. In randomized-Clifford tomography, this average variance approaches~$1$.\\
			\noindent
			(ii) If $2^n\gg 1$, the averaged value $\overline{{\rm Var}^{{\rm eq}}(\hat{\sigma})}^{\rho,\sigma}$ over uniformly-distributed complex pure states $\rho$ and $\sigma$ approaches~$1$. In randomized-Clifford tomography, this average variance approaches~$1$.
		\end{lemma}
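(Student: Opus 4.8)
The plan is to evaluate the Haar averages of the exact variance formulas in Eqs.~\eqref{method:esc2} and~\eqref{method:resc2}, together with the analogous randomized-Clifford expression, and then extract the $d\equiv 2^n\to\infty$ limit. First I would discard the computational-basis term: by Eq.~\eqref{main:eq_vareq} the variance splits into an (r)eq part and a $Z$-basis part bounded by $\mathbb{E}((\widehat{O_0}^{\mathrm{bin}}_{\bm{p'}})^2)$, and setting $O=\sigma$ so that $O_0=\sigma-I/d$, the averages $\overline{\bra{\bm x}\rho\ket{\bm x}}=1/d$ and $\overline{\bra{\bm x}\sigma\ket{\bm x}^2}$ make this term $O(1/d)$, so it drops out (this is the anti-concentration statement already invoked in the main text). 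Only $\mathbb{E}((\widehat{O_0}^{\mathrm{(r)eq}}_{\dyadic{A}})^2)$ then survives. Since $\rho$ enters linearly I replace it by $\overline{\rho}=I/d$, and since $\sigma$ enters quadratically I replace $O_0\otimes O_0$ by its second moment, using $\overline{\sigma\otimes\sigma}=(I+S)/[d(d+1)]$ for complex $\sigma$ and $(I+S+T)/[d(d+2)]$ for real $\sigma$, where $S$ is the swap on registers $2,3$ and $T=\sum_{\bm a,\bm b}\ket{\bm a\bm a}\bra{\bm b\bm b}$ is the partial transpose of the swap. This gives $\overline{O_0\otimes O_0}=S/[d(d+1)]-I/[d^2(d+1)]$ in the complex case and $(S+T)/[d(d+2)]-2I/[d^2(d+2)]$ in the real case.

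Each averaged variance is then a fixed linear combination of traces $\tr{\mathcal{C}^{(\cdot)}(\tfrac{I}{d}\otimes\overline{O_0\otimes O_0})}$, which I would evaluate by loop counting: every $\mathcal{C}^{(\cdot)}$ is a sum of register permutations and $\tfrac{I}{d}\otimes\overline{O_0\otimes O_0}$ is built from $I$, $W_{(23)}=I_1\otimes S_{23}$, and $I_1\otimes T_{23}$, so each trace collapses to a power of $d$. The key identities are $\mathcal{C}^{(iv)}=\sum_{\pi\in S_3}W_\pi=6\,\Pi^{(3)}_{\mathrm{sym}}$, $\tr{W_\pi}=d^{c(\pi)}$ with $c(\pi)$ the number of cycles and $\sum_{\pi\in S_3}d^{c(\pi)}=d(d+1)(d+2)$, together with $S_{23}T_{23}=T_{23}$ and $\tr{T}=d$. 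In the complex case this yields $\tr{\mathcal{C}^{(i)}},\tr{\mathcal{C}^{(ii)}}=O(1/d)$ while $\tr{\mathcal{C}^{(iv)}(\cdots)}=(d-1)(d+2)/d^2$; substituting into Eq.~\eqref{method:esc2}, the $4\mathcal{C}^{(i)}-\mathcal{C}^{(ii)}$ contributions cancel all subleading pieces and leave $\mathbb{E}((\widehat{O_0}^{\mathrm{eq}}_{\dyadic{A}})^2)\to(d-1)/(d+1)\to1$, proving the eq claims of (i) and (ii).

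For the randomized-Clifford benchmark I would use the $3$-design identity $\sum_{\bm p}\mathbb{E}_U(U^\dagger\ket{\bm p}\bra{\bm p}U)^{\otimes3}=d\,\Pi^{(3)}_{\mathrm{sym}}/\binom{d+2}{3}$ to write $\mathbb{E}((\widehat{O_0})^2)_{\mathrm{Cl}}=\tfrac{d+1}{d+2}\tr{\mathcal{C}^{(iv)}(\rho\otimes O_0\otimes O_0)}$; the same averaging then gives $(d^2-1)/d^2\to1$ for both complex and real $\sigma$. For the req case the real second moment makes $T$ contribute at leading order: $\mathcal{C}^{(i)},\mathcal{C}^{(ii)},\mathcal{C}^{(iii)}$ stay $O(1/d)$, $\tr{\mathcal{C}^{(iv)}(\cdots)}\to1$ as above, and the decisive quantity is $\tr{\mathcal{C}^{(v)}(\tfrac{I}{d}\otimes\overline{O_0\otimes O_0})}$. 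Pairing the doubled-pair structure of $\mathcal{C}^{(v)}$ with the $\ket{\bm a\bm a}\bra{\bm b\bm b}$ structure of $T$, I expect only the single term $\ket{\bm z\bm x\bm x}\bra{\bm z\bm y\bm y}$ to survive at $O(1)$, giving $d/(d+2)\to1$. Feeding this into Eq.~\eqref{method:resc2} produces $\tfrac14(0-0-0+1+1)=\tfrac12$, which is the req limit and the exact factor-of-two over ESPOVM.

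The hard part will be the real-case bookkeeping. Unlike the complex computation, which is pure loop counting over permutations, the real second moment introduces $T$, which connects indices in a cap--cup pattern rather than a through-line; I must track which of the nine terms of $\mathcal{C}^{(v)}$ (and which cross-terms of $\mathcal{C}^{(iii)}$) pair non-trivially with $T$, and verify that exactly one survives at $O(1)$ to produce the sharp $\tfrac12$. Care is also needed to avoid recounting the diagonal ($\bm x=\bm y$) configurations already removed in the expansions Eqs.~\eqref{method:esc} and~\eqref{method:resc}, since precisely those subtractions are what keep $\mathcal{C}^{(i)},\mathcal{C}^{(ii)},\mathcal{C}^{(iii)}$ subleading.
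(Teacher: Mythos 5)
Your proposal is correct and reaches the right limits ($1$ for eq, $\tfrac12$ for req, $1$ for randomized Clifford), but it takes a genuinely different route from the paper at the decisive step. Both arguments begin identically: drop the computational-basis contribution as $\mathcal{O}(1/d)$ and replace $\rho$ by $I/d$ using the Haar first moment. The paper then parametrizes the real target state by coordinates on the hypersphere $S^{2^n-1}$ and evaluates $\int c_1^4$ and $\int c_1^2c_2^2$ explicitly via beta functions and Stirling's approximation, feeding the results into Eq.~\eqref{sup:thm2_req_main}; you instead insert the operator form of the second moment, $(I+S)/[d(d+1)]$ for complex $\sigma$ and $(I+S+T)/[d(d+2)]$ for real $\sigma$, and reduce each averaged variance to permutation-trace bookkeeping against $\mathcal{C}^{\rm(i)},\ldots,\mathcal{C}^{\rm(v)}$. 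Your route is more systematic and makes transparent \emph{why} the req answer is $\tfrac12$: the partial-transpose operator $T$ activates exactly one term of $\mathcal{C}^{\rm(v)}$ (the $\ket{\bm z\bm x\bm x}\bra{\bm z\bm y\bm y}$ term, as you correctly anticipated --- I verified the other eight are $\mathcal{O}(1/d)$), contributing a second unit alongside $\mathcal{C}^{\rm(iv)}$ in Eq.~\eqref{method:resc2}, whence $\tfrac14(1+1)=\tfrac12$. The paper's route is more elementary but obscures this structure. You also rederive the randomized-Clifford benchmark from the $3$-design identity rather than quoting Eq.~(S43) of Ref.~\cite{huang2020}; the prefactor $\tfrac{d+1}{d+2}$ you obtain is correct. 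Two small caveats: (a)~to conclude about the \emph{variance} rather than just $\mathbb{E}((\widehat{O_0})^2)$, you should state explicitly that $\overline{\mathbb{E}(\widehat{O_0}^{\rm (r)eq}_{\dyadic A})^2}$ is $\mathcal{O}(1/d^2)$ by the Haar second moment over $\rho$ (the paper handles this via its explicit $\simeq\tfrac14$ computation for the untraceless estimator); (b)~some of your finite-$d$ expressions, e.g.\ $(d-1)/(d+1)$ and $d/(d+2)$, are not exact, but since the lemma is purely asymptotic this is harmless.
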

		
		Assisted by previous arguments, the proof of the above lemma naturally leads to Thm.~\ref{main:thm1} (ii) and (iii). However, it contains more technical elements that are deferred to Sec.~\ref{sec:thm_average}.
		
		Furthermore, we can see that the average estimation variance of RESPOVM is exactly half of one of ESPOVM and even one of randomized Clifford tomography. Even though RESPOVM requires two copies of input states for a single sampling trial, the total number of gates used with the same number of sampling copies is, on average, half of what is required in randomized-Clifford shadow tomography.
		
		\begin{figure*}[t]
			\centering
			\includegraphics[width=1.8\columnwidth]{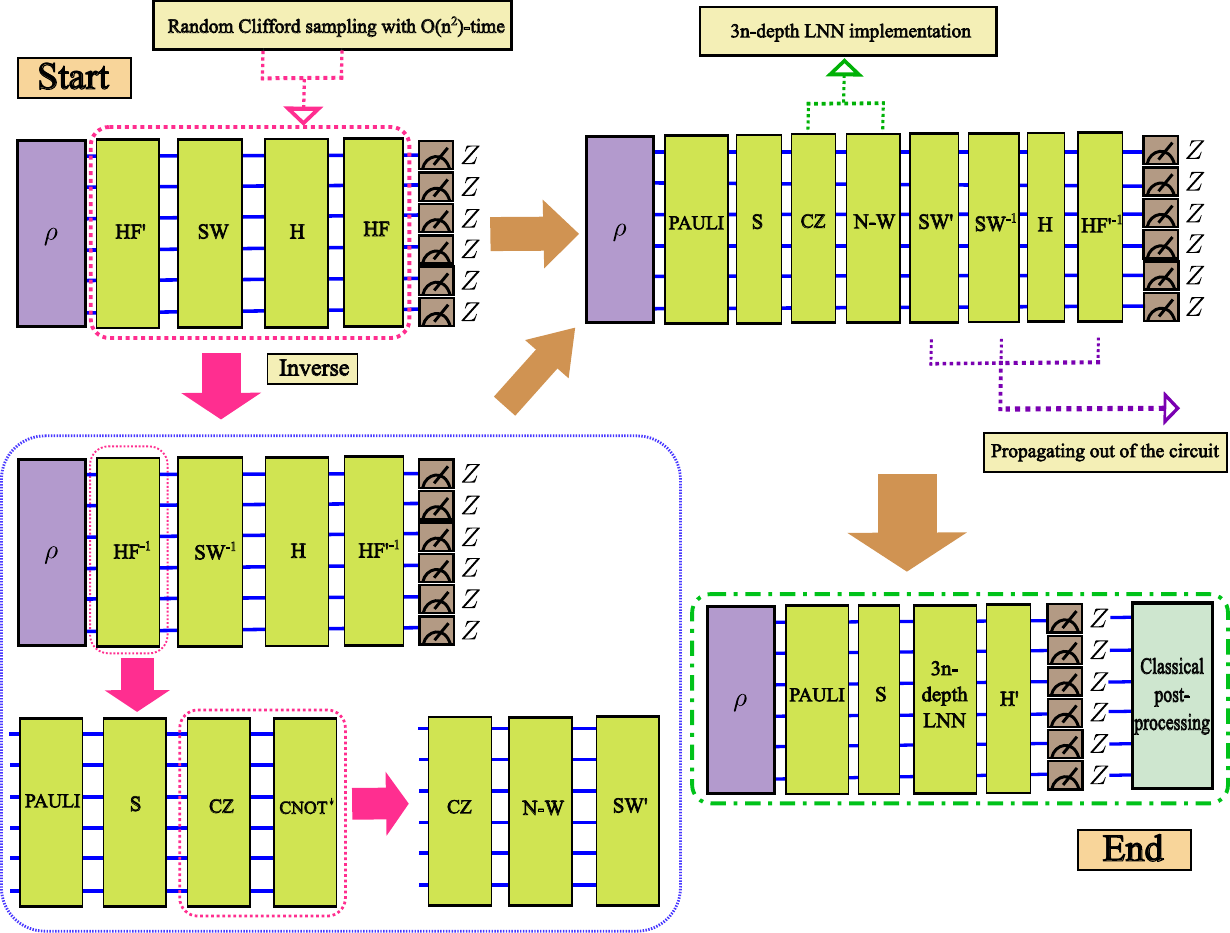}	
			\caption{The schematic diagram of uniform sampling of a $3n$-depth Clifford-circuit measurement in the LNN-architecture. Here, the \textsc{n}-\textsc{w} section is a CNOT circuit expressing the $n\times n$ north-western matrix~\cite{maslov2023,kutin2007} for the binary strings. The final classical post-processing step multiplies an $n\times n$ binary matrix to the measured outcome bit~string, which takes $\mathcal{O}(n^2)$-time.}
			\label{fig:fig6}
		\end{figure*}
		
		\section{Gate, depth, and time complexity for circuit implementation}
		\label{subsec:gatecount}
		
		We now discuss the architectural requirements of (R)ESPOVM shadow tomography, which cover the main ideas of the proof for Thm.~\ref{main:thm2}. In particular, this discussion focuses on the implementation of an~(LNN) architecture for (R)ESPOVM tomography and compares its circuit-depth and time-complexity resources with those of randomized-Clifford shadow tomography.

		We recall~\cite{bravyi2021} that we can efficiently and uniformly sample a Clifford unitary $U$ from $\mathrm{Cl}_n$ in $\mathcal{O}(n^2)$-time. In addition, the sampled form contains the \textsc{hf}$'$\textsc{---sw---h---hf} sections. Each small upper case letters are explained in the Sec.~\ref{sec:random_clifford_tomography}. Here, we can further reduce the depth of circuit lower than previously known results~\cite{maslov2022,maslov2023}. We note that the \textsc{hf}$'$---\textsc{sw} sections are not needed for randomized-Clifford shadow tomography. The reason is that the measurement outcome of a uniform-Clifford measurement can be obtained from the following procedure: One first uniformly samples a Clifford unitary operator in the above form and takes its inverse, which possesses the form \textsc{hf}$^{-1}$---\textsc{h}---\textsc{sw}$^{-1}$---\textsc{hf}$'^{-1}$. Then one just measures in the $Z$~basis after the \textsc{hf}$^{-1}$---\textsc{h} sections and linearly transform this intermediate bit-string outcome with the subsequent CNOT and Pauli transformations in \textsc{sw}$^{-1}$---\textsc{hf}$'$$^{-1}$ to arrive at the final bitstring outcome. In $\mathcal{O}(n^3)$-time, one can optimize the maximal depth of \textsc{hf}$^{-1}$---\textsc{h} sections with long-range gates to $n+\mathcal{O}(\log(n))$ if $n\ge 70$~\cite{maslov2022}. Even though the corresponding asymptotic gate-count upper bound is $\mathcal{O}(\frac{n}{\log(n)})$~\cite{brugiere2021,jiang2020}, it has a large constant factor~\cite{maslov2022} and hence will not improve the former limit for circuits holding hundreds of qubits. If only NN gates are available, we can express the \textsc{cnot} gate layers inside the \textsc{hf}-section in terms of a northwestern matrix~\cite{kutin2007} and SWAP~gates. Here, we do not need to implement the swapping operations, since one can simply propagate all swapping operations out of the circuit. The CNOT circuit of a northwestern-matrix form after the CZ~circuits can be further decomposed into $3n$-depth NN CNOT circuits (sorting network) and phase-gate insertion layers~\cite{maslov2023,kutin2007} in $\mathcal{O}(n^3)$-time. A schematic illustration is presented in Fig.~\ref{fig:fig6}.
		
		As we saw in Thm.~\ref{main:thm0}, there is an alternative depth reduction technique for CZ circuits. (R)ESPOVM shadow tomography can be implemented with at most $n$-depth long-ranged CZ~gates~(or with CNOT and H gates). This maximal depth can be reduced to $\frac{n}{2}+\mathcal{O}(\log(n))$ given that $n\ge 39$ according to Ref.~\cite{maslov2022}, which is half the gate complexity in contrast to randomized-Clifford implementation. Furthermore, with only neighboring CNOT~gates and $S$~gates, we can implement (R)ESPOVM with $2n$-depth LNN architecture in $\mathcal{O}(n^2)$-time. To show this time-complexity bound, we may use the strategy employed in Ref.~\cite{maslov2018}. We first construct a $2n$-depth circuit consisting of NN CNOT~gates in advance following the designated rules dictated by Thm.~6 in~\cite{maslov2018}. With this, the main task for the time-complexity-bound proof is as follows: Given a CZ~circuit that performs the map $\ket{\bm{x}}\mapsto (-1)^{Q(\bm{x})}\ket{\bm{x}}$, where $\bm{x}\in\bfZ^n_2$ and $Q$ is a quadratic binary-valued polynomial, find the values $u_{j},u_{j,k}$ in $\bfZ_4$ $(j\in[n],k\in\left\{j+1,j+2,\ldots,n\right\})$ such that 
		\begin{align}	(-1)^{Q(\bm{x})}=\I^{\sum_{j=1}^{n}u_jy_j+\sum_{j=1}^{n}\sum_{k=j+1}^{n}u_{j,k}(y_j\oplus y_k)}.
		\end{align}
		Here, $\oplus$ and $+$ respectively refer to modulo-$2$ and modulo-$4$ additions. The $y_i$'s~$(i\in[n])$ are defined by,
		\begin{align}
			y_i=x_1\oplus x_2\oplus x_3\oplus \cdots\oplus x_i.
			\label{eq:syseqn}
		\end{align}
		After inserting the phase gates $S^{u_{j,k}}$ and $S^{u_l}$ following the manual in Ref.~\cite{maslov2018}, we finally realize the LNN architecture of CZ~circuit. To analyze the time-complexity, we rewrite Eq.~\eqref{eq:syseqn} as
		\begin{equation}\label{method:bintrans}
			\begin{cases}
				x_1=y_1\\
				x_i=y_i\oplus y_{i-1}\;(i\in\left\{2,3,\ldots,n\right\}).
			\end{cases}
		\end{equation}
		Now, consider one off-diagonal component of $(-1)^{Q(\bm{x})}$. We have simple identity $(-1)^{x_{\mu}x_{\nu}}=\I^{3x_\mu+3x_\nu+(x_\mu\oplus x_\nu)}$. Then by Eq.~\eqref{method:bintrans},
        \begin{align}
        	(-1)^{x_\mu x_\nu}=\I^{3(y_{\mu} \oplus y_{\mu-1})+3(y_{\nu} \oplus y_{\nu-1})+(y_{\mu} \oplus y_{\mu-1} \oplus y_{\nu} \oplus y_{\nu-1})}\,,
        \end{align}
    	with $y_0\equiv0$. Using the equality~\cite{maslov2018} 
        \begin{align}
            \I^{a\oplus b\oplus c}=\I^{3a+3b+3c+3(a\oplus b)+3(a\oplus c)+3(b\oplus c)}\;(a,b,c\in\bfZ_2),
        \end{align}
         and setting $a=y_\mu,b=y_{\mu-1},c=y_{\nu}\oplus y_{\nu-1}$, we can express $\I^{y_{\mu} \oplus y_{\mu-1} \oplus y_{\nu} \oplus y_{\nu-1}}$ as a product of $\I^{y_\mu},\I^{y_{\mu-1}},\I^{y_\nu\oplus y_{\nu-1}}$ and $\I^{y_\mu \oplus y_{\mu-1}},\I^{y_{\mu} \oplus y_{\nu} \oplus y_{\nu-1}},\I^{y_{\mu-1}\oplus y_{\nu} \oplus y_{\nu-1}}$. Again, in a similar manner, we can decompose $\I^{y_{\mu} \oplus y_{\nu} \oplus y_{\nu-1}}$ and $\I^{y_{\mu-1}\oplus y_{\nu} \oplus y_{\nu-1}}$ into a product of powers of $\I$, each with an exponent not exceeding 3 variables. In conclusion, there exist \emph{constant-sized} sets $A,B\subset [n]$ such that
		\begin{align}
			(-1)^{x_\mu x_\nu}=\I^{\sum_{a\in A} u'_{a}y_{a}+\sum_{a,b\in B} u'_{a,b}(y_{a} \oplus y_{b})},
		\end{align}
		where $u'_{\mu},u'_{\mu,\nu}\in\bfZ_4$. Therefore, upon noting that a CZ~circuit has at most $\frac{n^2-n}{2}$ CZ~gates, we have the following scheme for implementing \emph{any} (long-ranged) CZ~circuit in an LNN architecture:
		
		\begin{enumerate}
			\item Prepare a $2n$-depth NN CNOT circuit based on the prescriptions in~Ref.~\cite{maslov2018}.
			
			\item For each unitary $\mathrm{CZ}_{j,k}$ acting on the $j$-th and $k$-th qubits, find the sets $A_{j,k}$ and $B_{j,k}$ such that  $(-1)^{x_a x_b}=\I^{\sum_{a\in A_{j,k}} u'_{a}y_{a}+\sum_{a,b\in B_{j,k}} u'_{a,b}(y_{a} \oplus y_{b})}$. Record all $u'_a$ and $u'_{a,b}$ corresponding to $A_{j,k}$ and $B_{j,k}$.
			
			\item Insert the phase gates~$S^{u'_a} (a\in A_{j,k})$ and $S^{u'_{a,b}} (a,b\in B_{j,k})$ according to the insertion rules in Ref.~\cite{maslov2018}.
			
			\item Repeat steps~2 and~3 for all $j,k\in[n]$. If there is already some phase gate acting on the location, we multiply the existing one and the gate we put this time. 
			
		\end{enumerate}
		The Reader may consult Ref.~\cite{supple} for the precise NN CNOT-circuit prescription and phase-gate insertion rule. For each $j,k\in[n]$, we note that steps~2 to~4 each takes constant time to execute. Since there is a total of $\mathcal{O}(n^2)$ steps to be repeated, the total time-complexity is $\mathcal{O}(n^2)$.
		
		\section{Survey of recent CZ-gate implementation schemes}
		\label{subsec:survey}
		
		We shall list all recent fidelity values sorted according to some of these platforms. In photonic quantum computing, a fidelity of over 81\% has been reported on silicon-nitride photonic chips~\cite{Lee:2022controlled-not}, and over 99\% when Zeeman~coupling with quantum~dots and nitrogen-vacancy centers is exploited in simulations~\cite{Russo:2018photonic}. Spin-based quantum computation allows for more than 99.5\% fidelity~\cite{Xue:2022quantum,Rimbach-Russ:2023simple}. On Rydberg-interacting platforms, Ref.~\cite{Graham:219rydberg}, \cite{Levine:2019parallel} and later \cite{Evered:2023high-fidelity} respectively reported gate~fidelities of 89\%, 97\% and 99.5\% using atomic-qubit-array entanglement. Using adiabatic pulses, simulations revealed fidelity values greater than 99.9\%~\cite{Saffman:2020symmetric}. On superconducting platforms, adiabaticity-shortcut techniques allows for over 96\% fidelity~\cite{Wang:2018ExperimentalRO}, 97.6\% with fluxonium qubits~\cite{Simakov:2023coupler} and 99.1\% through interfering superconducting qubits with weak anharmonicity~\cite{Rol:2019fast}. Nonadiabatic CZ gates were also realized with 99.54\% fidelity~\cite{Li:2019realisation}. Furthermore, gmon qubits employed by Google AI Quantum demonstrated gate fidelities as high as 99.81\%~\cite{Foxen:2020demonstrating}. The more recent fidelity of 99.93\% is possible with intermediate leakage control~\cite{Negirneac:2021high-fidelity}.

        \section{Proofs of other Lemmas}\label{sec:proof_other_lemmas}


		\subsection{Proof of Lemma~\ref{method:lem3}}\label{proof_lem1}
		
		We start with $\mathcal{S}^{{\rm req}}_n$. Similar to Eq.~\eqref{eq:reqsecond}, 
		\begin{align}
			&\eqmomentop^{\otimes 3}\nonumber\\
			=&\,\eqmoment\frac{1}{8^n}\sum_{\bm{x},\bm{y},\bm{z},\bm{w},\bm{s},\bm{t}\in\bfZ^n_2}\ket{\bm{x}\bm{y}\bm{z}}\bra{\bm{w}\bm{s}\bm{t}}\nonumber\\
			&\times(-1)^{\bm{x}^\top\dyadic{A}\bm{x}+\bm{y}^\top\dyadic{A}\bm{y}+\bm{z}^\top\dyadic{A}\bm{z}+\bm{w}^\top\dyadic{A}\bm{w}+\bm{s}^\top\dyadic{A}\bm{s}+\bm{t}^\top\dyadic{A}\bm{t}}\nonumber\\
			=&\,\frac{1}{8^n}\sum_{\bm{x},\bm{y},\bm{z},\bm{w},\bm{s},\bm{t}\in\bfZ^n_2}\ket{\bm{x}\bm{y}\bm{z}}\bra{\bm{w}\bm{s}\bm{t}}\eqmoment\bigg\{\nonumber\\
			&\times(-1)^{\bm{x}^\top\dyadic{A}\bm{x}+\bm{y}^\top\dyadic{A}\bm{y}+\bm{z}^\top\dyadic{A}\bm{z}+\bm{w}^\top\dyadic{A}\bm{w}+\bm{s}^\top\dyadic{A}\bm{s}+\bm{t}^\top\dyadic{A}\bm{t}}\bigg\}.
		\end{align}
		By the same token, for nonzero coefficients in the summand, the following equations must hold for all $p,q\in[n]$:
		\begin{align}\label{eq:thirdsys}
			\begin{cases}
				x_p+y_p+z_p+w_p+s_p=t_p\;({\rm mod\;2}),\\		x_px_q+y_py_q+z_pz_q+w_pw_q+s_ps_q+t_pt_q=\,0\;\,\,({\rm mod\;2}).
			\end{cases}
		\end{align}
		Substituting the upper equation into the lower one, we obtain,
		\begin{align}\label{eq:thirdreq}
			0=&\,x_p(y_q+z_q+w_q+s_q)+y_p(z_q+x_q+w_q+s_q)\nonumber\\
			&+z_p(x_q+y_q+w_q+s_q)+w_p(x_q+y_q+z_q+s_q)\nonumber\\
			&+s_p(x_q+y_q+z_q+w_q).
		\end{align}
		If $\bm{x}=\bm{y}=\bm{z}=\bm{w}=\bm{s}$, the above equation clearly holds. Otherwise, we have two cases:\\
		(i) There exists $p'\in[N]$ such that two of $x_{p'},y_{p'},z_{p'},w_{p'},s_{p'}$ are $0~(1 \;{\rm resp.})$ and the rest are $1~(0)$, and $x_i=y_i=z_i=w_i=s_i$ for $i<p'$.\\
		(ii) There exists $p'\in[N]$ such that four of $x_{p'},y_{p'},z_{p'},w_{p'},s_{p'}$ are $0~(1\;{\rm resp.})$ and the other is $1~(0)$, and $x_i=y_i=z_i=w_i=s_i$ for $i<p'$.\\
		First, suppose case~(i). WLOG, when $x_{p'}=y_{p'}\;(*)$, then $z_{p'}=w_{p'}=s_{p'}=x_{p'}+1$, which together with Eq.~\eqref{eq:thirdreq}, implies that
		\begin{align}
			x_{p'}y_q+x_qy_{p'}+(x_{p'}+1)(x_q+y_q)=x_q+y_q=0\;({\rm mod\;2}) 
		\end{align}
		as~$x_{p'}=y_{p'}$. So, we obtain $\bm{x}=\bm{y}$. Substituting this back to Eq.~\eqref{eq:thirdsys},
		\begin{equation}
			\begin{cases}\label{eq:thirdsys2}
				z_p+w_p+s_p=t_p\;({\rm mod\;2}),\\
				z_pz_q+w_pw_q+s_ps_q+t_pt_q=0\;({\rm mod\;2}).
			\end{cases}
		\end{equation}
		This is exactly what we have obtained during the calculation of the second moment. Hence we conclude the same statement, two of $\bm{z},\bm{w},\bm{s}$ are equal. By the first equation of Eq.~\eqref{eq:thirdsys2}, the remaining one equals $\bm{t}$. Therefore, we conclude that each two of $\bm{x},\bm{y},\bm{z},\bm{w},\bm{s},\bm{t}$ must be equal. Note that even if we chose two other variables, not $\bm{x}$ and $\bm{y}$ in assumption$(*)$, we arrive at the same conclusion. 
		
		Next, let us suppose case~(ii). By the first equation of Eq.~\eqref{eq:thirdsys}, $x_i=y_i=z_i=w_i=s_i=t_i$ for $i<p'$. WLOG, assume that $x_{p'}=y_{p'}=z_{p'}=w_{p'}\ne s_{p'}$. Then, we obtain $s_{p'}=t_{p'}$ from Eq.~\eqref{eq:thirdsys}. Next, we rewrite the problem as 
		\begin{align}
			\begin{cases}
				y_p+z_p+w_p+s_p+t_p=x_p\;({\rm mod\;2}),\\
				x_px_q+y_py_q+z_pz_q+w_pw_q+s_ps_q+t_pt_q=0\;({\rm mod\;2}).
			\end{cases}
		\end{align}
		Similarly, substituting the first equation into the second,
		
		\begin{align}
			0=&\,t_p(y_q+z_q+w_q+s_q)+y_p(z_q+x_q+w_q+s_q)\nonumber\\
			&+z_p(t_q+y_q+w_q+s_q)+w_p(t_q+y_q+z_q+s_q)\nonumber\\
			&+s_p(t_q+y_q+z_q+w_q).
		\end{align}
		Remember we assumed that $t_{p'}=s_{p'}$, $t_{p'}\ne w_{p'}$, $y_{p'}=z_{p'}=w_{p'}$, and $y_i=z_i=w_i=s_i=t_i$ for $i<p'$. This is exactly the case of solving under the assumption (i). So we conclude the same statement that $\bm{t}=\bm{s}$, two of $\bm{y},\bm{z},\bm{w}$ are equal and hence remaining one equals $\bm{x}$.
		
		Throughout cases (i) and (ii), we finally conclude that the coefficient of $\ket{\bm{x}\bm{y}\bm{z}}\bra{\bm{w}\bm{s}\bm{t}}$ is nonzero if and only if each triplet in the variable set $\left\{\bm{x},\bm{y},\bm{z},\bm{w},\bm{s},\bm{t}\right\}$ must be equal and in all such cases, the coefficients $\eqmoment\big\{(-1)^{\bm{x}^\top\dyadic{A}\bm{x}+\bm{y}^\top\dyadic{A}\bm{y}+\bm{z}^\top\dyadic{A}\bm{z}+\bm{w}^\top\dyadic{A}\bm{w}+\bm{s}^\top\dyadic{A}\bm{s}+\bm{t}^\top\dyadic{A}\bm{t}}\big\}$ all equal~1. The proof for RESPOVM is now completed. 
		
		Now, we move on to $\mathcal{S}^{{\rm eq}}_n$, whence
		\begin{align}
			&\,\momentop^{\otimes 3}\nonumber\\
			=&\,\moment\frac{1}{8^n}\sum_{\bm{x},\bm{y},\bm{z},\bm{w},\bm{s},\bm{t}\in\bfZ^n_2}\ket{\bm{x}\bm{y}\bm{z}}\bra{\bm{w}\bm{s}\bm{t}}\nonumber\\
			&\times\I^{\bm{x}^\top\dyadic{A}\bm{x}+\bm{y}^\top\dyadic{A}\bm{y}+\bm{z}^\top\dyadic{A}\bm{z}-\bm{w}^\top\dyadic{A}\bm{w}-\bm{s}^\top\dyadic{A}\bm{s}-\bm{t}^\top\dyadic{A}\bm{t}}\nonumber\\
			=&\,\frac{1}{8^n}\sum_{\bm{x},\bm{y},\bm{z},\bm{w},\bm{s},\bm{t}\in\bfZ^n_2}\ket{\bm{x}\bm{y}\bm{z}}\bra{\bm{w}\bm{s}\bm{t}}\moment\bigg\{\nonumber\\
			&\times\I^{\bm{x}^\top\dyadic{A}\bm{x}+\bm{y}^\top\dyadic{A}\bm{y}+\bm{z}^\top\dyadic{A}\bm{z}-\bm{w}^\top\dyadic{A}\bm{w}-\bm{s}^\top\dyadic{A}\bm{s}-\bm{t}^\top\dyadic{A}\bm{t}}\bigg\}.
		\end{align}
		By the same argument, for the coefficient not to be 0, the following equation must hold. For all $p,q\in\bfZ^n_2\;(p<q)$, 
		\begin{align}\label{eq:thirdsys3}
			\begin{cases}
				x_p+y_p+z_p=w_p+s_p+t_p\;({\rm mod\;4}),\\
				x_px_q+y_py_q+z_pz_q+w_pw_q+s_ps_q+t_pt_q=0\;({\rm mod\;2}).
			\end{cases}
		\end{align}
		We add $w_p+s_p$ for both sides of the first equation of~\eqref{eq:thirdsys3}, we obtain that 
		\begin{align}
			\begin{cases}
				x_p+y_p+z_p+w_p+s_p=t_p\;({\rm mod\;2}),\\
				x_px_q+y_py_q+z_pz_q+w_pw_q+s_ps_q+t_pt_q=0\;({\rm mod\;2}).
			\end{cases}
		\end{align}
		This is equation for $\mathcal{S}^{{\rm req}}_n$-case. Hence we obtain the same statement, each $3$-couples in variable set $\left\{\bm{x},\bm{y},\bm{z},\bm{w},\bm{s},\bm{t}\right\}$ must be equal. However, in this case, we have a stricter condition, the first equation of Eq.~\eqref{eq:thirdsys3}. WLOG, we only consider two cases.\\
		(i) $\bm{x}=\bm{y}:2x_p+z_p=w_p+s_p+t_p\;({\rm mod\;4})\rightarrow z_p=w_p+s_p+t_p\;({\rm mod\;2})$. Hence two variables in $\left\{\bm{z},\bm{w},\bm{s},\bm{t}\right\}$ are equal each other.\\
		(ii) $\bm{x}=\bm{w}:y_p+z_p=s_p+t_p\;({\rm mod\;4})$. If $\bm{y}=\bm{z},2y_p=s_p+t_p\rightarrow \bm{y}=\bm{z}=\bm{s}=\bm{t}$. If $\bm{y}=\bm{s},\bm{z}=\bm{t}$. If $\bm{y}=\bm{w}$, all cases are equivalent to previous cases. Therefore the statement that coefficients are non-zero is equivalent to that if $\ket{\bm{x}\bm{y}\bm{z}}\bra{\bm{w}\bm{s}\bm{t}}$ has 2 same variables where each do not locate in the same position(bra and ket) and the other 2 variables in bra (or ket) are equal, remaining variables in ket (or bra) must be equal to previous 2 variables. Finally, in those cases, the coefficient cannot be other values but $1$.  
		
		Other cases can be managed similarly and are known to have similar effects of (i) or (ii). We thus proved Lem.~\ref{method:lem3}.
		
		\subsection{Proof of Lemma~\ref{method:lem4}}\label{proof_lem4}
		First, we note that 
		\begin{align}
			&\sum_{\bm{x}\in \bfZ^N_2}\tr{\ket{\bm{x}\bm{x}\bm{x}}\bra{\bm{x}\bm{x}\bm{x}}(\sigma\otimes O_0 \otimes O_0)}\nonumber\\
			=&\sum_{\bm{x}\in \bfZ^N_2}\bra{\bm{x}}\sigma\ket{\bm{x}}\bra{\bm{x}}O_0\ket{\bm{x}}\bra{\bm{x}}O_0\ket{\bm{x}}\nonumber\\
			\le&\sum_{\bm{x},\bm{y}\in\bfZ^N_2}\bra{\bm{x}}\sigma\ket{\bm{x}}\bra{\bm{x}}O_0\ket{\bm{y}}\bra{\bm{y}}O_0\ket{\bm{x}}\le {\rm tr}(O_0^2).
		\end{align}
		Here, we used the fact that $\bra{\bm{x}}O_0\ket{\bm{y}}\bra{\bm{y}}O_0\ket{\bm{x}}\ge 0$ for $\forall \bm{x},\bm{y}\in\bfZ^n_2$. This proves the Eq.~\eqref{method:lem4_1}.
		
		Also, we easily note that 
		\begin{align}
			&\sum_{\bm{x},\bm{y}\bm{z}\in\bfZ^{N}_2}{\rm tr}\Big((\ket{\bm{x}\bm{y}\bm{z}}\bra{\bm{x}\bm{y}\bm{z}}+\ket{\bm{x}\bm{y}\bm{z}}\bra{\bm{x}\bm{z}\bm{y}}\nonumber\\
			&\qquad\qquad+\ket{\bm{x}\bm{y}\bm{z}}\bra{\bm{y}\bm{x}\bm{z}}+\ket{\bm{x}\bm{y}\bm{z}}\bra{\bm{z}\bm{x}\bm{y}}\nonumber\\
			&\qquad\qquad+\ket{\bm{x}\bm{y}\bm{z}}\bra{\bm{z}\bm{y}\bm{x}}+\ket{\bm{x}\bm{y}\bm{z}}\bra{\bm{y}\bm{z}\bm{x}})(\sigma\otimes O_0 \otimes O_0)\Big)\nonumber\\
			=&\left({\rm tr}(O_0)\right)^2+{\rm tr}(O_0^2)+2{\rm tr}(O_0){\rm tr}(\sigma O_0)+2{\rm tr}(\sigma 
			O_0^2)\nonumber\\
			\le&\,({\rm tr}(O_0))^2+{\rm tr}(O_0^2)+2{\rm tr}(O_0)\|O_0\|_{\infty}+2\|O_0^2\|_{\infty}\nonumber\\
			\le&\,({\rm tr}(O_0))^2+3{\rm tr}(O_0^2)+2{\rm tr}(O_0)\|O_0\|_{\infty}\nonumber\\
			\le&\,3\tr{O_0^2},
		\end{align}
		where $\|O\|_{\infty}\equiv \max_{\ket{\psi}}\left\{\left|\bra{\psi}O\ket{\psi}\right|\right\}$. 
		Furthermore, we can consider another case when two variables are the same in ket. Let $\sigma=\sum_{\lambda}\lambda\ket{\lambda}\bra{\lambda}$ be spectral decomposition of $\sigma$, hence $\forall \lambda\ge0$ and $\sum_{\lambda}\lambda=1$. Then we obtain the following list of results:
		\begin{widetext}
        
		\begin{align}
			\left|\sum_{\bm{x},\bm{y},\bm{z}\in \bfZ^N_2}\tr{\ket{\bm{x}\bm{x}\bm{y}}\bra{\bm{z}\bm{z}\bm{y}}(\sigma\otimes O_0 \otimes O_0)}\right|=\left|\sum_{\bm{x},\bm{y},\bm{z}\in \bfZ^N_2}\bra{\bm{z}}\sigma\ket{\bm{x}}\bra{\bm{z}}O_0\ket{\bm{x}}\bra{\bm{y}}O_0\ket{\bm{y}}\right|
			&=\left|{\rm tr}(O_0){\rm tr}(\sigma O_0^\top)\right|=0;\\
	       \left|\sum_{\bm{x},\bm{y},\bm{z}\in \bfZ^N_2}\tr{\ket{\bm{x}\bm{x}\bm{y}}\bra{\bm{z}\bm{y}\bm{z}}(\sigma\otimes O_0 \otimes O_0)}\right|=\left|\sum_{\bm{x},\bm{y},\bm{z}\in \bfZ^N_2}\bra{\bm{z}}\sigma\ket{\bm{x}}\bra{\bm{y}}O_0\ket{\bm{x}}\bra{\bm{z}}O_0\ket{\bm{y}}\right|&=\left|{\rm tr}(\sigma (O_0^\top)^2)\right|\le\|(O_0^\top)^2\|_{\infty}\nonumber\\&\le {\rm tr}((O_0^\top)^2)={\rm tr}(O_0^2);\\					
			\left|\sum_{\bm{x},\bm{y},\bm{z}\in \bfZ^N_2}\tr{\ket{\bm{x}\bm{x}\bm{y}}\bra{\bm{y}\bm{z}\bm{z}}(\sigma\otimes O_0 \otimes O_0)}\right|=\left|\sum_{\bm{x},\bm{y},\bm{z}\in \bfZ^N_2}\bra{\bm{y}}\sigma\ket{\bm{x}}\bra{\bm{z}}O_0\ket{\bm{x}}\bra{\bm{z}}O_0\ket{\bm{y}}\right|
			&=\left|{\rm tr}(\sigma O_0^\top O_0)\right|\le \|O_0^\top O_0\|_{\infty};\label{proof_lem4_res1}\\
			\left|\sum_{\bm{x},\bm{y},\bm{z}\in \bfZ^N_2}\tr{\ket{\bm{y}\bm{x}\bm{x}}\bra{\bm{z}\bm{z}\bm{y}}(\sigma\otimes O_0 \otimes O_0)}\right|=\left|\sum_{\bm{x},\bm{y},\bm{z}\in \bfZ^N_2}\bra{\bm{z}}\sigma\ket{\bm{y}}\bra{\bm{z}}O_0\ket{\bm{x}}\bra{\bm{y}}O_0\ket{\bm{x}}\right|&=\left|{\rm tr}(\sigma O_0O_0^\top)\right|\le \|O_0O_0^\top\|_{\infty}; \label{proof_lem4_res2}\\
            \left|\sum_{\bm{x},\bm{y},\bm{z}\in \bfZ^N_2}\tr{\ket{\bm{y}\bm{x}\bm{x}}\bra{\bm{z}\bm{y}\bm{z}}(\sigma\otimes O_0 \otimes O_0)}\right|=\left|\sum_{\bm{x},\bm{y},\bm{z}\in \bfZ^N_2}\bra{\bm{z}}\sigma\ket{\bm{x}}\bra{\bm{y}}O_0\ket{\bm{x}}\bra{\bm{z}}O_0\ket{\bm{y}}\right|&={\rm tr}(\sigma (O_0^\top)^2)\le\|(O_0^\top)^2\|_{\infty}\nonumber\\&\le {\rm tr}((O_0^\top)^2)={\rm tr}(O_0^2);
        \end{align}
        \begin{align}
			\left|\sum_{\bm{x},\bm{y},\bm{z}\in \bfZ^N_2}\tr{\ket{\bm{y}\bm{x}\bm{x}}\bra{\bm{y}\bm{z}\bm{z}}(\sigma\otimes O_0 \otimes O_0)}\right|=\left|\sum_{\bm{x},\bm{y},\bm{z}\in \bfZ^N_2}\bra{\bm{y}}\sigma\ket{\bm{y}}\bra{\bm{z}}O_0\ket{\bm{x}}\bra{\bm{z}}O_0\ket{\bm{x}}\right|&=\left|{\rm tr}(O_0O_0^\top)\right|;\label{proof_lem4_res3}\\
			\left|\sum_{\bm{x},\bm{y},\bm{z}\in \bfZ^N_2}\tr{\ket{\bm{x}\bm{y}\bm{x}}\bra{\bm{z}\bm{z}\bm{y}}(\sigma\otimes O_0 \otimes O_0)}\right|=\left|\sum_{\bm{x},\bm{y},\bm{z}\in \bfZ^N_2}\bra{\bm{z}}\sigma\ket{\bm{x}}\bra{\bm{z}}O_0\ket{\bm{y}}\bra{\bm{y}}O_0\ket{\bm{x}}\right|&\le{\rm tr}(O_0^2);\\
			\left|\sum_{\bm{x},\bm{y},\bm{z}\in \bfZ^N_2}\tr{\ket{\bm{x}\bm{y}\bm{x}}\bra{\bm{z}\bm{y}\bm{z}}(\sigma\otimes O_0 \otimes O_0)}\right|=\left|\sum_{\bm{x},\bm{y},\bm{z}\in \bfZ^N_2}\bra{\bm{z}}\sigma\ket{\bm{x}}\bra{\bm{y}}O_0\ket{\bm{y}}\bra{\bm{z}}O_0\ket{\bm{x}}\right|&\le({\rm tr}(O_0))^2=0;\\
			\left|\sum_{\bm{x},\bm{y},\bm{z}\in \bfZ^N_2}\tr{\ket{\bm{x}\bm{y}\bm{x}}\bra{\bm{y}\bm{z}\bm{z}}(\sigma\otimes O_0 \otimes O_0)}\right|=\left|\sum_{\bm{x},\bm{y},\bm{z}\in \bfZ^N_2}\bra{\bm{y}}\sigma\ket{\bm{x}}\bra{\bm{z}}O_0\ket{\bm{y}}\bra{\bm{z}}O_0\ket{\bm{x}}\right|&\le\|O_0O_0^\top\|_{\infty};\label{proof_lem4_res4}\\
            \left|\sum_{\bm{x},\bm{y}\in \bfZ^N_2}\tr{\ket{\bm{x}\bm{x}\bm{y}}\bra{\bm{x}\bm{y}\bm{x}}(\sigma\otimes O_0 \otimes O_0)}\right|=\left|\sum_{\bm{x},\bm{y}\in \bfZ^N_2}\bra{\bm{x}}\sigma\ket{\bm{x}}\bra{\bm{y}}O_0\ket{\bm{x}}\bra{\bm{x}}O_0\ket{\bm{y}}\right|&\le\sum_{\bm{x},\bm{y}\in \bfZ^N_2}\bra{\bm{y}}O_0\ket{\bm{x}}\bra{\bm{x}}O_0\ket{\bm{y}}\nonumber\\&={\rm tr}(O_0^2)\\
              \left|\sum_{\bm{x},\bm{y}\in \bfZ^N_2}\tr{\ket{\bm{x}\bm{x}\bm{y}}\bra{\bm{x}\bm{y}\bm{x}}(\sigma\otimes O_0 \otimes O_0)}\right|=\left|\sum_{\bm{x},\bm{y}\in \bfZ^N_2}\bra{\bm{x}}\sigma\ket{\bm{x}}\bra{\bm{y}}O_0\ket{\bm{x}}\bra{\bm{x}}O_0\ket{\bm{y}}\right|&\le\sum_{\bm{x},\bm{y}\in \bfZ^N_2}\bra{\bm{y}}O_0\ket{\bm{x}}\bra{\bm{x}}O_0\ket{\bm{y}}\nonumber\\&={\rm tr}(O_0^2);\\
			\left|\sum_{\bm{x},\bm{y}\in \bfZ^N_2}\tr{\ket{\bm{x}\bm{x}\bm{y}}\bra{\bm{x}\bm{x}\bm{y}}(\sigma\otimes O_0 \otimes O_0)}\right|=\left|\sum_{\bm{x},\bm{y}\in \bfZ^N_2}\bra{\bm{x}}\sigma\ket{\bm{x}}\bra{\bm{x}}O_0\ket{\bm{x}}\bra{\bm{y}}O_0\ket{\bm{y}}\right|&\le\sum_{\bm{x}\in \bfZ^N_2}\left|\bra{\bm{x}}\sigma\ket{\bm{x}}\bra{\bm{x}}O_0\ket{\bm{x}}{\rm tr}(O_0)\right|\nonumber\\&=\left|{\rm tr}(O_0)\right|\|O_0\|_{\infty}=0;\\
            \left|\sum_{\bm{x},\bm{y}\in \bfZ^N_2}\tr{\ket{\bm{x}\bm{x}\bm{y}}\bra{\bm{y}\bm{x}\bm{x}}(\sigma\otimes O_0 \otimes O_0)}\right|=\left|\sum_{\bm{x},\bm{y}\in \bfZ^N_2}\bra{\bm{y}}\sigma^\top\ket{\bm{x}}\bra{\bm{y}}O_0\ket{\bm{y}}\bra{\bm{y}}O_0\ket{\bm{x}}\right|&\le {\rm tr}(O_0^2)\nonumber\\&(\because\;{\rm see}\;{\rm Eq.}~\eqref{proof_lem4_long} \;{\rm below});\\
            \left|\sum_{\bm{x},\bm{y}\in \bfZ^N_2}\tr{\ket{\bm{y}\bm{x}\bm{x}}\bra{\bm{y}\bm{y}\bm{y}}(\sigma\otimes O_0 \otimes O_0)}\right|=\left|\sum_{\bm{x},\bm{y}\in \bfZ^N_2}\bra{\bm{y}}\sigma\ket{\bm{y}}\bra{\bm{y}}O_0\ket{\bm{x}}\bra{\bm{x}}O_0^\top\ket{\bm{y}}\right|&\le \sum_{\bm{y}\in \bfZ^N_2}\bra{\bm{y}}\sigma\ket{\bm{y}}\left|\bra{\bm{y}}O_0O_0^\top\ket{\bm{y}}\right|\nonumber\\&\le\|O_0O_0^\top\|_{\infty}\label{proof_lem4_res5};\\
			\left|\sum_{\bm{x},\bm{y}\in \bfZ^N_2}\tr{\ket{\bm{x}\bm{y}\bm{x}}\bra{\bm{y}\bm{y}\bm{y}}(\sigma\otimes O_0 \otimes O_0)}\right|=\left|\sum_{\bm{x},\bm{y}\in \bfZ^N_2}\bra{\bm{y}}\sigma\ket{\bm{x}}\bra{\bm{y}}O_0\ket{\bm{y}}\bra{\bm{y}}O_0\ket{\bm{x}}\right|&\le {\rm tr}(O_0^2) \;(\because {\rm Eq.}~\eqref{proof_lem4_long});\\
			\left|\sum_{\bm{x},\bm{y}\in \bfZ^N_2}\tr{\ket{\bm{y}\bm{x}\bm{x}}\bra{\bm{x}\bm{x}\bm{y}}(\sigma\otimes O_0 \otimes O_0)}\right|=\left|\sum_{\bm{x},\bm{y}\in \bfZ^N_2}\bra{\bm{x}}\sigma\ket{\bm{y}}\bra{\bm{x}}O_0\ket{\bm{x}}\bra{\bm{y}}O_0\ket{\bm{x}}\right|&\le {\rm tr}(O_0^2)\;(\because {\rm Eq.}~\eqref{proof_lem4_long});\\
			\left|\sum_{\bm{x},\bm{y}\in \bfZ^N_2}\tr{\ket{\bm{y}\bm{x}\bm{x}}\bra{\bm{y}\bm{x}\bm{x}}(\sigma\otimes O_0 \otimes O_0)}\right|=\left|\sum_{\bm{x},\bm{y}\in \bfZ^N_2}\bra{\bm{y}}\sigma\ket{\bm{y}}\bra{\bm{x}}O_0\ket{\bm{x}}\bra{\bm{x}}O_0\ket{\bm{x}}\right|&\le\sum_{\bm{x},\bm{z}\in \bfZ^N_2}\bra{\bm{x}}O_0\ket{\bm{z}}\bra{\bm{z}}O_0\ket{\bm{x}}\nonumber\\&\le {\rm tr}(O_0^2);\\
			\left|\sum_{\bm{x},\bm{y}\in \bfZ^N_2}\tr{\ket{\bm{y}\bm{x}\bm{x}}\bra{\bm{x}\bm{y}\bm{x}}(\sigma\otimes O_0 \otimes O_0)}\right|=\left|\sum_{\bm{x},\bm{y}\in \bfZ^N_2}\bra{\bm{x}}\sigma\ket{\bm{y}}\bra{\bm{y}}O_0\ket{\bm{x}}\bra{\bm{x}}O_0\ket{\bm{x}}\right|&\le {\rm tr}(O_0^2)\;(\because {\rm Eq.}~\eqref{proof_lem4_long});\\
			\left|\sum_{\bm{x},\bm{y}\in \bfZ^N_2}\tr{\ket{\bm{x}\bm{y}\bm{x}}\bra{\bm{x}\bm{x}\bm{y}}(\sigma\otimes O_0 \otimes O_0)}\right|=\left|\sum_{\bm{x},\bm{y}\in \bfZ^N_2}\bra{\bm{x}}\sigma\ket{\bm{x}}\bra{\bm{x}}O_0\ket{\bm{y}}\bra{\bm{y}}O_0\ket{\bm{x}}\right|&\le {\rm tr}(O_0^2);\\ 
			\left|\sum_{\bm{x},\bm{y}\in \bfZ^N_2}\tr{\ket{\bm{x}\bm{y}\bm{x}}\bra{\bm{x}\bm{y}\bm{x}}(\sigma\otimes O_0 \otimes O_0)}\right|=\left|\sum_{\bm{x},\bm{y}\in \bfZ^N_2}\bra{\bm{x}}\sigma\ket{\bm{x}}\bra{\bm{y}}O_0\ket{\bm{y}}\bra{\bm{x}}O_0\ket{\bm{x}}\right|&\le ({\rm tr}(O_0))^2=0;\\ 
			\left|\sum_{\bm{x},\bm{y}\in \bfZ^N_2}\tr{\ket{\bm{x}\bm{y}\bm{x}}\bra{\bm{y}\bm{x}\bm{x}}(\sigma\otimes O_0 \otimes O_0)}\right|=\left|\sum_{\bm{x},\bm{y}\in \bfZ^N_2}\bra{\bm{y}}\sigma\ket{\bm{x}}\bra{\bm{x}}O_0\ket{\bm{y}}\bra{\bm{x}}O_0\ket{\bm{x}}\right|&\le {\rm tr}(O_0^2)\;(\because {\rm Eq.}~\eqref{proof_lem4_long}).
		\end{align}
        \begin{align}
			\left|\sum_{\bm{x},\bm{y}\in \bfZ^N_2}\tr{\ket{\bm{x}\bm{x}\bm{y}}\bra{\bm{y}\bm{y}\bm{y}}(\sigma\otimes O_0 \otimes O_0)}\right|&=\left|\sum_{\bm{x},\bm{y}\in \bfZ^N_2}\bra{\bm{y}}\sigma\ket{\bm{x}}\bra{\bm{y}}O_0\ket{\bm{x}}\bra{\bm{y}}O_0\ket{\bm{y}}\right|\le\sum_{\bm{y}\in\bfZ^N_2}\left|\bra{\bm{y}}\sigma O_0^\top\ket{\bm{y}}\bra{\bm{y}}O_0\ket{\bm{y}}\right|\nonumber\\&\le\sqrt{\sum_{\bm{y}\in\bfZ^N_2}\left|\bra{\bm{y}}\sigma O_0^\top\ket{\bm{y}}\right|^2\sum_{\bm{z}\in\bfZ^N_2}\bra{\bm{z}}O_0\ket{\bm{z}}\bra{\bm{z}}O_0\ket{\bm{z}}}\nonumber\\&\le\sqrt{\sum_{\bm{y}\in\bfZ^N_2}\left|\sum_{\lambda}\lambda\braket{\bm{y}|\lambda}\bra{\lambda} O_0^\top\ket{\bm{y}}\right|^2 \sum_{\bm{z},k\in\bfZ^N_2}\bra{\bm{z}}O_0\ket{\bm{k}}\bra{\bm{k}}O_0\ket{\bm{z}}}\nonumber\\
			&\le \sqrt{\sum_{\bm{y}\in\bfZ^N_2}\left|\bra{\lambda_{\bm{y}}} O_0^\top\ket{\bm{y}}\right|^2\left|\braket{\bm{y}|\lambda_{\bm{y}}}\right|^2{\rm tr}(O_0^2)}\nonumber\\&\left(\lambda_{\bm{y}}\equiv{\rm argmax}_{\lambda}\left\{\left|\bra{\lambda} O_0^\top\ket{\bm{y}}\braket{\bm{y}|\lambda}\right|\right\}\right)\nonumber
			\\&\le\sqrt{\sum_{\bm{y}\in\bfZ^N_2}\sum_{\lambda}\left|\braket{\bm{y}|\lambda}\right|^2\bra{\lambda} O_0^\top\ket{\bm{y}}\bra{\bm{y}} O_0^\top\ket{\lambda}{\rm tr}(O_0^2)}\nonumber\\
			&\le\sqrt{\sum_{\lambda}\bra{\lambda} O_0^\top\ket{\bm{y}_{\lambda}}\bra{\bm{y}_{\lambda}} O_0^\top\ket{\lambda}{\rm tr}(O_0^2)}\left(\bm{y}_{\lambda}\equiv {\rm argmax}_{\bm{y}\in\bfZ^n_2}\left\{\left|\bra{\lambda}O_0^\top\ket{\bm{y}}\right|^2\right\}\right)\nonumber\\
			&\le\sqrt{\sum_{\lambda}\sum_{\bm{z}\in\bfZ^N_2}\bra{\lambda} O_0^\top\ket{\bm{z}}\bra{\bm{z}} O_0^\top\ket{\lambda}{\rm tr}(O_0^2)}=\sqrt{\sum_{\lambda}\bra{\lambda} (O_0^\top)^2\ket{\lambda}{\rm tr}(O_0^2)}\nonumber\\
			&=\sqrt{{\rm tr}((O_0^\top)^2)}\sqrt{{\rm tr}(O_0^2)}={\rm tr}(O_0^2);\label{proof_lem4_long}
		\end{align}
		
		\end{widetext}
		
		Throughout the above equations, we used the fact that $\tr{O_0}=0$. Also, Eqs.~\eqref{proof_lem4_res1}, \eqref{proof_lem4_res2}, \eqref{proof_lem4_res3}, \eqref{proof_lem4_res4}, and \eqref{proof_lem4_res5} do not contribute to $\mathcal{C}^{\rm(i,ii,iv)}$, and are thus of no consequence to the ESPOVM analysis, for which we can assume $O_0$ is real and $O_0^{\top}=O_0$. So we can easily note that these can be also upper bounded by $\tr{O_0^2}$. Note that values of other cases can be calculated from the above-known cases, by re-parametrizing or transpose-conjugating both sides for example. After realizing that other cases are $0$ or upper bounded by $\tr{O_0^2}$, based on simple trigonometric inequalities, the remaining equations of Lem.~\ref{method:lem4} other than Eq.~\eqref{method:lem4_1} may also be proven. 
		
		\subsection{Proof of Lemma~\ref{method:lem5}}\label{sec:thm_average}
		
		If the averaged variance $\overline{{\rm Var}^{{\rm (r)eq}}(\widehat{\sigma}_{\dyadic{A},\bm{p'}})}^{\rho,\sigma}$ over uniformly chosen input and target state is proved to be $1$ ($1/2$ resp.) for (R)ESPOVM when $2^n\gg1$, by the result of Ref.~\cite{lerasle2019lecture,Blair:1985problem,Jerrum:1986random,huang2020}, Thm.~\ref{main:thm1} (ii),(iii) will be proved. 
		
		Moreover, we recall that the mean squared error (MSE) is defined by the sample mean value of the squares of the additive error between estimation and target value. If an estimation variance is ${\rm Var}(\hat{\sigma})$, then for a large sampling-copy number $N$, the averaged MSE $\overline{\epsilon^2}$ over many experiments of different input and target state becomes $\frac{2\overline{{\rm Var}^{{\rm (r)eq}}(\widehat{\sigma}_{\dyadic{A},\bm{p'}})}^{\rho,\sigma}}{N}$ (Note that sampling trial-number is $N/2$ that is why we have the factor 2). Therefore, the asymptotic lines of Fig.~\ref{fig:fig2} (a--c) are explained.
		
		Now, let us assume that $2^n\gg1$. We take the RESPOVM first. Recall the Eq.~\eqref{eq:sub_real_equatorial_tomography}.
		\begin{align}
			&\sum_{\ket{\phi^{{\rm req}}_\dyadic{A}}\in \mathcal{S}^{{\rm req}}_n}\frac{2^{2n-1}}{\left|\mathcal{S}^{{\rm req}}_n\right|}\bra{\phi^{{\rm req}}_{\dyadic{A}}}\rho\ket{\phi^{{\rm req}}_{\dyadic{A}}}\bra{\phi^{{\rm req}}_{\dyadic{A}}}O\ket{\phi^{{\rm req}}_{\dyadic{A}}}\nonumber\\
			=&\sum_{\bm{x}\in\bfZ^n_2}\bra{\bm{x}}\rho\ket{\bm{x}}\bra{\bm{x}}O\ket{\bm{x}}+\frac{1}{2}{\rm tr}(O)+{\rm tr}(O\rho).
		\end{align}
		Then we obtain that 
		\begin{align}
			\mathbb{E}(\hat{\sigma}^{{\rm req}}_{\dyadic{A}})^2=&\,\left({\rm tr}(\rho \sigma)+\frac{1}{2}{\rm tr}(\sigma)-\sum_{\bm{x}\in \bfZ^N_2}\braket{\bm{x}|\rho|\bm{x}}\braket{\bm{x}|\sigma|\bm{x}}\right)^2
			\nonumber\\
			=&\,\frac{1}{4}+{\rm tr}((\sigma\otimes\sigma)(\rho\otimes\rho))+{\rm tr}(\sigma \rho)\nonumber\\
			&+\sum_{\bm{x},\bm{y}\in\bfZ^n_2}{\rm tr}\left((\rho^{\otimes 2}\sigma^{\otimes 2})(\ket{\bm{x}\bm{y}}\bra{\bm{y}\bm{x}}^{\otimes 2})\right)\nonumber\\
			&\,-2{\rm tr}(\sigma\rho)\sum_{\bm{x}\in\bfZ^n_2}{\rm tr}\left((\rho\otimes \sigma)(\ket{\bm{x}\bm{x}}\bra{\bm{x}\bm{x}})\right).
		\end{align}
		
		Now, we use the following famous results: 
		\begin{lemma}[\cite{Pucha2017}]
			\label{supp:lem_design}
			For a general N-qubit pure state $\rho=\ket{\phi}\bra{\phi}$, the following identity holds:
			\begin{align}
				\int(d\phi)\ket{\phi}\bra{\phi}=&\,\dfrac{I}{2^n}\,,\label{eq:avg_cmplx1}\\
				\int(d\phi)\ket{\phi}\bra{\phi}^{\otimes 2}=&\,\dfrac{1}{2^n(2^n+1)}\left(I+\tau\right)\,,\label{eq:avg_cmplx2}
			\end{align}
			where $I$ is identity operator and $\tau$ is the swap operator between two tensor product parties. The volume measure $(d\phi)$ refers to the uniform distribution (Haar-measure) of all single-qubit pure states.
		\end{lemma}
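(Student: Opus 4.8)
The plan is to exploit the unitary invariance of the Haar measure together with Schur's lemma. Writing $d=2^n$ for the Hilbert-space dimension, I would first observe that for any fixed $V\in U(d)$ the substitution $\ket{\phi}\mapsto V\ket{\phi}$ leaves the measure $(d\phi)$ unchanged, so the first-moment operator $M_1\equiv\int(d\phi)\ket{\phi}\bra{\phi}$ satisfies $VM_1V^{\dagger}=M_1$ for every $V$. Since the defining representation of $U(d)$ on $\mathbb{C}^d$ is irreducible, Schur's lemma forces $M_1=c\,I$; taking the trace of both sides and using $\int(d\phi)\,\tr{\ket{\phi}\bra{\phi}}=1$ pins down $c=1/d$, which is Eq.~\eqref{eq:avg_cmplx1}.

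For the second moment I would apply the same invariance argument at the level of the two-fold tensor power. The operator $M_2\equiv\int(d\phi)(\ket{\phi}\bra{\phi})^{\otimes 2}$ commutes with $V^{\otimes 2}$ for all $V\in U(d)$, so by Schur--Weyl duality it lies in the algebra generated by the representation of the symmetric group $S_2$, i.e.\ $M_2=a\,I+b\,\tau$ for scalars $a,b$, where $\tau$ is the swap operator. Two linear conditions then fix $a$ and $b$. First, each integrand $(\ket{\phi}\bra{\phi})^{\otimes 2}$ is invariant under the swap, so $\tau M_2=M_2$; substituting $M_2=aI+b\tau$ and using $\tau^2=I$ gives $b=a$, hence $M_2=a(I+\tau)$. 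Second, taking the trace and using $\tr{I}=d^2$, $\tr{\tau}=d$, together with $\int(d\phi)\,\tr{(\ket{\phi}\bra{\phi})^{\otimes 2}}=1$, yields $a\,(d^2+d)=1$, so $a=1/[d(d+1)]=1/[2^n(2^n+1)]$. This is precisely Eq.~\eqref{eq:avg_cmplx2}.

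There is no serious obstacle here: the statement is the standard fact that the Haar ensemble of pure states is a state $2$-design, and both formulas follow mechanically once irreducibility (for the first moment) and the Schur--Weyl characterization of the commutant of $V^{\otimes 2}$ (for the second) are invoked. The only point requiring a little care is the evaluation $\tr{\tau}=d$ of the swap, which follows from $\tr{\tau}=\sum_{i,j}\bra{ij}\tau\ket{ij}=\sum_{i,j}\delta_{ij}=d$; everything else is bookkeeping. If one prefers to avoid quoting Schur--Weyl duality directly, an equivalent route is to note that $M_2$ is proportional to the projector onto the symmetric subspace $\mathrm{Sym}^2(\mathbb{C}^d)$, whose dimension $d(d+1)/2$ immediately reproduces the same normalization.
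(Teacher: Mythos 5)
Your proposal is correct. The paper does not actually prove this lemma---it imports it by citation from Ref.~\cite{Pucha2017}---so there is no in-paper argument to compare against; your derivation via unitary invariance, Schur's lemma for the first moment, and the Schur--Weyl characterization of the commutant of $V^{\otimes 2}$ (equivalently, the projector onto $\mathrm{Sym}^2(\mathbb{C}^d)$) for the second is the standard proof, and every step, including the normalizations $\tr{\tau}=d$ and $a=1/[d(d+1)]$, checks out.
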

		From this, when we averaged over all pure input states $\rho$, all terms except $\frac{1}{4}$ are at most the order of $\frac{1}{2^n}$ for all~$\sigma$. Hence, when $2^n\gg1$, 
		\begin{align}\label{sup:thm2_req4}
			\overline{\mathbb{E}(\hat{\sigma}^{\rm req}_\dyadic{A})^2}^{\rho,\sigma}\simeq \frac{1}{4}.
		\end{align}
		
		Next, 
	
		\begin{align}
			&\mathbb{E}((\hat{\sigma}^{\rm req}_{\dyadic{A}})^2)\nonumber\\&=2^{3n-2}\,\mathrm{tr}\left(\eqmomentop^{\otimes 3}\rho\otimes\sigma^{\otimes 2}\right)\\
			&\rightarrow \overline{\mathbb{E}((\hat{\sigma}^{\rm req}_{\dyadic{A}})^2)}^{\rho}\nonumber\\&=2^{2n-2}\,\mathrm{tr}\left(\eqmomentop^{\otimes 2}\sigma^{\otimes 2}\right)\nonumber\\&(\because {\rm Lem.~}\ref{supp:lem_design}).
		\end{align}
		
		Now, since $\sigma$ is pure and real, let us rewrite $\sigma=\ket{\phi}\bra{\phi}$ where $\ket{\phi}=\sum_{\bm{i}\in\bfZ^n_2}c_{\bm{i}}\ket{\bm{i}}$ and~$c_{\bm{i}}\in\mathbb{R}$.
		Then $\ket{\phi}\bra{\phi}^{\otimes 2}=\sum_{\bm{i},\bm{j},\bm{k},\bm{l}\in\bfZ^n_2}c_{\bm{i}}c_{\bm{j}}c_{\bm{k}}c_{\bm{l}}\ket{\bm{i}\bm{j}}\bra{\bm{k}\bm{l}}$. Since $\bm{c}$ lies on hypersphere $S^{2^n-1}$, we obtain that from Lem.~1,
		\begin{align}
			&\overline{\mathbb{E}((\hat{\sigma}^{\rm req}_{\dyadic{A}})^2)}^{\rho,\sigma}\nonumber\\
			=&\,\frac{1}{4A}\int_{S^{2^n-1}}d^{2^n-1}\Omega\,\sum_{\bm{i},\bm{j},\bm{k},\bm{l}}c_{\bm{i}}c_{\bm{j}}c_{\bm{k}}c_{\bm{l}}\ket{\bm{i}\bm{j}}\bra{\bm{k}\bm{l}}\nonumber\\
			&\times\!\!\sum_{\bm{x},\bm{y}\in\bfZ^n_2}\!\!\!\!\left(\ket{\bm{x}\bm{y}}\!\bra{\bm{x}\bm{y}}+\ket{\bm{x}\bm{y}}\!\bra{\bm{y}\bm{x}}+\ket{\bm{x}\bm{x}}\!\bra{\bm{y}\bm{y}}-2\ket{\bm{x}\bm{x}}\!\bra{\bm{x}\bm{x}}\right)\nonumber\\=&\,\frac{1}{4A}\sum_{\bm{x},\bm{y}}\int_{S^{2^n-1}}d^{2^n-1}\Omega\left(3c_{\bm{x}}^2c_{\bm{y}}^2-2c_{\bm{x}}^4\right),
		\end{align}
		where $A$ is the surface area of~$S^{2^n-1}$. Note that the sphere is invariant under permutation hence, 
		\begin{align}
			\label{sup:thm2_req_main}
			\overline{\mathbb{E}((\hat{\rho}^{\rm req}_{\dyadic{A}})^2)}^{\rho,\sigma}=&\,\frac{1}{4A}\Bigg\{2^n\int_{S^{2^n-1}}d^{2^n-1}\Omega\,c_1^4\nonumber\\
			&+3\cdot2^n(2^n-1)\int_{S^{2^n-1}}d^{2^n-1}\Omega\,c_1^2c_2^2\Bigg\}.
		\end{align}
		We will separately calculate the first and second terms of the right-hand side. By noting that $c_1=\cos(\phi_1)$, 
		\begin{align}
			&\int_{S^{2^n-1}}d^{2^n-1}\Omega\,c_1^4\nonumber\\
			=&\oint d\phi_1\,d\phi_2\cdots d\phi_{2^n-1}\cos^4(\phi_1)\nonumber\\
			&\qquad\times\sin^{2^n-2}(\phi_1)\sin^{2^n-3}(\phi_2)\cdots \sin(\phi_{2^n-2})\nonumber\\
			&{\rm (where\;0\le\phi_1,\phi_2\ldots,\phi_{2^n-2}\le\pi,\;0\le\phi_{2^n-1}\le2\pi)}\nonumber\\
			=&\int_{S^{2^n-1}} d^{2^n-1}\Omega\, \frac{\int_0^\pi d\phi_1 \cos^4(\phi_1)\sin^{2^n-2}(\phi_1)}{\int^\pi_0 d\phi_1 \sin^{2^n-2}(\phi_1)}\nonumber\\
			=&\,\frac{A\int_0^{\frac{\pi}{2}}d\phi_1\cos^4(\phi_1)\sin^{2^n-2}(\phi_1)}{\int^{\frac{\pi}{2}}_0 d\phi_1 \sin^{2^n-2}(\phi_1)}.
		\end{align}
		We now use the formula, 
		\begin{align}
			\int^{\frac{\pi}{2}}_0 \sin^m(\phi)d\phi=\begin{cases}
				\frac{\pi}{2}\left(\frac{m!}{2^m\left(\left(\frac{m}{2}!\right)\right)^2}\right) & \mbox{if }m\mbox{ is even,} \\
				\frac{2^{m-1}\left(\left(\frac{m-1}{2}\right)!\right)^2}{m!} & \mbox{if }m\mbox{ is odd,}
			\end{cases}
		\end{align}
		and the beta function $\frac{1}{2}B(p,q)=\frac{\Gamma(p)\Gamma(q)}{2\Gamma(p+q)}=\int^{\frac{\pi}{2}}_0 \sin^{2p-1}(\phi)\cos^{2q-1}(\phi)\;d\phi$, where $\Gamma(p)$ is gamma function. Then we obtain that with Stirling's formula,

		\begin{align}\label{sup:thm2_req1}
			&\,\int_{S^{2^n-1}}d^{2^n-1}\Omega\,c_1^4
			=\,\frac{A}{\pi}\frac{B(\frac{5}{2},2^{n-1}-\frac{1}{2})}{\left(\frac{(2^n-2)!}{2^{(2^n-2)\left((2^{n-1}-1)!\right)^2}}\right)}\nonumber\\
			\simeq&\,\frac{A\sqrt{2\pi e}}{\pi}\Gamma\!\left(\frac{5}{2}\right)\frac{2^{2^n-2}}{(2^{n-1}+1)2^{n-1}}\sqrt{\frac{(2^{n-1}-\frac{3}{2})(2^{n-1}-1)}{2^n-2}}\nonumber\\ &\cdot\frac{(2^{n-1}-\frac{3}{2})^{2^{n-1}-\frac{3}{2}}(2^{n-1}-1)^{2^{n-1}-1}}{(2^n-2)^{2^n-2}}\nonumber\\
			\simeq&\,\frac{A\sqrt{2\pi e}}{\pi}\Gamma\!\left(\frac{5}{2}\right)\nonumber\\ &\cdot\underbrace{2^{2^n-\frac{5n}{2}-1}\,\frac{(2^{n-1}-\frac{3}{2})^{2^{n-1}-\frac{3}{2}}(2^{n-1}-1)^{2^{n-1}-1}}{(2^n-2)^{2^n-2}}}_{\equiv C}.
		\end{align}

		Note that 
		\begin{align}\label{sup:thm2_req2}
			\log_2(C)\simeq&\,\left(2^n-\frac{5n}{2}-1\right)+\left(2^{n-1}-\frac{3}{2}\right)(n-1)\nonumber\\
			&+(2^{n-1}-1)(n-1)-(2^n-2)n\nonumber\\
			=&-3n+\frac{3}{2}\,,
		\end{align}
		giving us
		\begin{equation}
			\frac{1}{A}\int_{S^{2^n-1}}c_1^4\;d^{2^n-1}\Omega\simeq\frac{\sqrt{2\pi e}}{\pi}2^{-3n+\frac{3}{2}}\,\Gamma\!\left(\frac{5}{2}\right).
		\end{equation}
		Let us go to the second term: 
		\begin{align}\label{sup:thm2_req3}
			&\int_{S^{2^n-1}}d^{2^n-1}\Omega\,\bm{c}_1^2\bm{c}_2^2\nonumber\\
			=&\oint d\phi_1\,d\phi_2\cdots d\phi_{2^n-1} \cos^2(\phi_1)\cos^2(\phi_2)\nonumber\\
			&\,\qquad\times\sin^2(\phi_2)\sin^{2^n-2}(\phi_1)\sin^{2^n-3}(\phi_2)\cdots \sin(\phi_{2^n-2})\;\nonumber\\
			&{\rm (where\;0\le\phi_1,\phi_2\ldots,\phi_{2^n-2}\le\pi,\;0\le\phi_{2^n-1}\le2\pi)}\,.
		\end{align}
		This leads to
		\begin{align}
			&\int d^{2^n-1}\Omega\,\frac{\int_0^\pi \cos^2(\phi_1)\sin^{2^n-2}(\phi_1)d\phi_1}{\int^\pi_0 \sin^{2^n-2}(\phi_1)d\phi_1}\nonumber\\
			&\qquad\qquad\times\frac{\int_0^\pi \cos^2(\phi_2)\sin^{2^n-1}(\phi_2)d\phi_2}{\int^\pi_0 \sin^{2^n-3}(\phi_2)d\phi_2}\nonumber\\
			=&\,\frac{A}{2\pi}\frac{B(\frac{3}{2},2^{n-1}-\frac{1}{2})B(\frac{3}{2},2^{n-1})}{\left(\frac{(2^n-2)!}{2^{2^n-2}((2^{n-1}-1)!)^2}\right)\left(\frac{2^{2^n-4}((2^{n-1}-2)!)^2}{(2^n-3)!}\right)}\nonumber\\
			=&\,\frac{2A}{\pi}\frac{\Gamma\!\left(\frac{3}{2}\right)^2(2^{n-1}-\frac{3}{2})!(2^{n-1}-1)!(2^{n-1}-1)^2}{(2^n-2)(2^{n-1})!(2^{n-1}+\frac{1}{2})!}\nonumber\\
			\simeq&\,\frac{2A}{\pi}\Gamma\!\left(\frac{3}{2}\right)^2\frac{(2^{n-1})^2}{(2^n-2)2^{n-1}(2^{n-1}+\frac{1}{2})(2^{n-1}-\frac{1}{2})}\nonumber\\
			\simeq&\, 2^{-2n}\,A\qquad\left[\because \Gamma\!\left(\frac{3}{2}\right)=\frac{\sqrt{\pi}}{2}\right].
		\end{align}
		Therefore, substituting the Eq.~\eqref{sup:thm2_req1}, \eqref{sup:thm2_req2}, \eqref{sup:thm2_req4}, and \eqref{sup:thm2_req3} into Eq.~\eqref{sup:thm2_req_main} leads to 
		\begin{align}
			\overline{{\rm Var}(\hat{\sigma}^{{\rm req}}_{\dyadic{A}})}^{\rho,\sigma}=\overline{\mathbb{E}((\hat{\sigma}^{\rm req}_{\dyadic{A}})^2)}^{\rho,\sigma}-\overline{\mathbb{E}(\hat{\sigma}^{\rm req}_{\dyadic{A}})^2}^{\rho,\sigma}\simeq\frac{1}{2}.
		\end{align}
		Next, let us direct our attention to ESPOVM. As we proceed in a similar manner, we may skip most of the steps but note that 
		\begin{align}\label{sup:thm2_sameway}
			&\sum_{\ket{\phi^{{\rm req}}_{\dyadic{A}}}\in \mathcal{S}^{{\rm eq}}_N}\frac{2^{2n}}{\left|\mathcal{S}^{{\rm req}}_n\right|}\bra{\phi^{{\rm req}}_{\dyadic{A}}}\rho\ket{\phi^{{\rm req}}_{\dyadic{A}}}\bra{\phi^{{\rm req}}_{\dyadic{A}}}O\ket{\phi^{{\rm req}}_{\dyadic{A}}}\nonumber\\
			=&\,\sum_{\bm{x}\in\bfZ^N_2}\bra{\bm{x}}\rho\ket{\bm{x}}\bra{\bm{x}}O\ket{\bm{x}}+{\rm tr}(O)+{\rm tr}(O\rho)
		\end{align}
		and 
		\begin{align}
			\overline{\mathbb{E}((\hat{\sigma}^{\rm eq}_{\dyadic{A}})^2)}^{\rho,\sigma}=&\,\frac{1}{A}\int_{\bm{c}\in S^{2^n-1}}d^{2^n-1}\Omega\sum_{i,j,k,l}c_{\bm{i}}c_{\bm{j}}c_{\bm{k}}c_{\bm{l}}\ket{\bm{i}\bm{j}}\bra{\bm{k}\bm{l}}\nonumber\\
			&\sum_{\bm{\bm{x}},\bm{y}\in\bfZ^n_2}\left(\ket{\bm{x}\bm{y}}\bra{\bm{x}\bm{y}}+\ket{\bm{x}\bm{y}}\bra{\bm{y}\bm{x}}-\ket{\bm{x}\bm{x}}\bra{\bm{x}\bm{x}}\right)\nonumber\\
			=&\,\frac{1}{A}\sum_{\bm{x},\bm{y}}\int_{\bm{c}\in S^{2^n-1}}d^{2^n-1}\Omega\left(2c_{\bm{x}}^2c_{\bm{y}}^2-c_{\bm{x}}^4\right).
		\end{align}
		By the same method with RESPOVM, we can find that $\overline{\mathbb{E}((\hat{\sigma}^{\rm eq}_{\dyadic{A}})^2)}^{\rho,\sigma}\simeq2$ and $\overline{\mathbb{E}(\hat{\sigma}^{\rm eq}_{\dyadic{A}})^2}^{\rho,\sigma}\simeq1$. These result in $\overline{{\rm Var}(\hat{O}^{{\rm eq}}_{\dyadic{A}})}^{\rho,\sigma}=\overline{\mathbb{E}((\hat{\sigma}^{\rm eq}_{\dyadic{A}})^2)}^{\rho,\sigma}-\overline{\mathbb{E}(\hat{\sigma}^{\rm eq}_{\dyadic{A}})^2}^{\rho,\sigma}\simeq 1$.
		Lastly, for both ESPOVM and RESPOVM, 
		\begin{align}
			\mathbb{E}((\hat{\sigma}^{{\rm bin}})^2)=&\,\sum_{\bm{x}}\bra{\bm{x}}\rho\ket{\bm{x}}\left|\bra{\bm{x}}\sigma\ket{\bm{x}}\right|^2\nonumber\\
			=&\,\sum_{\bm{x}}\tr{(\rho\otimes\sigma\otimes\sigma)\ket{\bm{x}\bm{x}\bm{x}}\bra{\bm{x}\bm{x}\bm{x}}}\\
			\rightarrow \overline{\mathbb{E}((\hat{\sigma}^{{\rm bin}})^2)}^{\rho}=&\,\frac{1}{2^n}\sum_{\bm{x}}(\bra{\bm{x}}\sigma\ket{\bm{x}})^2\simeq 0.
		\end{align}
		Hence, even if we average over real pure states $\sigma$, the result will be nearly $0$. Lastly, let us consider randomized-Clifford tomography. Then, we can refer to Eq.~(S43) of Ref.~\cite{huang2020} stating that the estimation variance is equal to the \emph{shadow norm} without maximizing over $\rho$ and it is,
		\begin{align}
			{\rm Var}(\hat{\sigma}^{{\rm s}})=&\,\frac{2^n+1}{2^n+2}\Bigg[\tr{\rho}\tr{\left(\sigma-\frac{I}{2^n}\right)^2}\nonumber\\
			&+2\tr{\rho\left(\sigma-\frac{I}{2^n}\right)^2}\Bigg]\!-\!\left[\tr{\rho\left(\sigma-\frac{I}{2^n}\right) }\right]^2.
		\end{align}
		After we average over pure $\rho$, by Lem.~\ref{supp:lem_design}, we the second and last terms are ignored and 
		\begin{align}\label{sup:thm2_sameway2}
			\overline{{\rm Var}(\hat{\sigma}^{{\rm s}})}^{\rho,\sigma}\simeq \overline{\tr{\sigma^2-\frac{2\sigma}{2^n}+\frac{I}{4^n}}}^{\sigma}\simeq 1\,,
		\end{align}
		which settles the proof for Lem.~\ref{method:lem5}~(i). Note that RESPOVM gives half the average variance than even randomized-Clifford. Since we need two input states for a single sampling trial, the number of states is not improved but we can reduce the total gate count for the whole estimation process. This is because in RESPOVM, half of the input-state copies are used for direct $Z$-basis measurement.
		
		The proof of Lem.~\ref{method:lem5}~(ii) is much simpler. We start with 
		\begin{align}
			\overline{\mathbb{E}((\hat{\sigma}^{\rm eq}_{\dyadic{A}})^2)}^{\rho}&=2^{2n}\mathrm{tr}\left(\momentop^{\otimes 2}\sigma^{\otimes 2}\right).
		\end{align}
		By averaging over Haar-random pure state $\sigma$, by Lem.~\ref{supp:lem_design} and using the fact that $\tr{(\rho\otimes \sigma)\tau}=\tr{\rho\sigma}$, we conclude that  $\overline{\mathbb{E}((\hat{\sigma}^{\rm eq}_{\dyadic{A}})^2)}^{\rho,\sigma}=2$. Also, $\overline{\mathbb{E}(\hat{\sigma}^{\rm eq}_{\dyadic{A}})^2}^{\rho,\sigma}\simeq1$, in exactly the same way as with Eq.~\eqref{sup:thm2_sameway}. Hence $\overline{{\rm Var}(\hat{\sigma}^{{\rm eq}}_{\dyadic{A}})}^{\rho,\sigma}=\overline{\mathbb{E}((\hat{\sigma}^{\rm eq}_{\dyadic{A}})^2)}^{\rho,\sigma}-\overline{\mathbb{E}(\hat{\sigma}^{\rm eq}_{\dyadic{A}})^2}^{\rho,\sigma}\simeq 1$. Furthermore, similar to Eq.~\eqref{sup:thm2_sameway2}, the average variance of random Clifford tomography is also $1$. The proof is thus completed.

		\section{Informational completeness of ESPOVM+computational basis}
		
		In this section, we prove that the ESPOVM+computational basis is informationally complete~(IC) so that this POVM can fully extract all density-matrix information of any given input state. Explicitly, the condition for informational completeness of the POVM $\left\{\Pi_j\right\}_{j\in J}$, where $J$ is an index set and $\Pi_j>0$ for all $j\in J$, necessitates that any quantum state be expressible as a linear combination of the elements $\Pi_j$. 
		
		We now prove the following lemmas:
		
		\begin{lemma}\label{lem:lem_IC_V}
			For a $d$-dimensional Hilbert space, a POVM, comprising a set of $M\geq d^2$ positive-operator elements ${\Pi_j}$ that sum to the identity, is IC if the matrix
			\begin{equation}
				\dyadic{V}=\left(\sket{\Pi_1}\,\,\sket{\Pi_2}\,\,\cdots\,\,\sket{\Pi_M}\right)
			\end{equation}
			has rank~$d^2$, and only then. Here $\sket{\Pi_j}$ is a basis-dependent superket~(vectorization) of $\Pi_j$.
		\end{lemma}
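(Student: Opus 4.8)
The plan is to recognize that the IC criterion stated here---every quantum state lying in the complex-linear span of $\{\Pi_j\}$---is purely a statement about the span of the operator set $\{\Pi_j\}$ inside the $d^2$-dimensional operator space $\mathcal{L}(\mathcal{H})$, and then to translate ``spanning'' into the rank condition on $\dyadic{V}$ via the vectorization isomorphism. First I would recall that the superket map $\Pi\mapsto\sket{\Pi}$ is a $\mathbb{C}$-linear bijection from $\mathcal{L}(\mathcal{H})$ onto $\mathbb{C}^{d^2}$ (it merely stacks the matrix entries in a fixed basis) and that it carries the Hilbert--Schmidt inner product $\tr{A^\dagger B}$ to the Euclidean inner product $\sinner{A}{B}$. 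Because this map is a linear isomorphism, it sends $\mathrm{span}_{\mathbb{C}}\{\Pi_1,\dots,\Pi_M\}$ bijectively onto $\mathrm{span}_{\mathbb{C}}\{\sket{\Pi_1},\dots,\sket{\Pi_M}\}$, which is exactly the column space of $\dyadic{V}$; hence
\begin{equation}
\rank(\dyadic{V})=\dim_{\mathbb{C}}\mathrm{span}\{\Pi_1,\dots,\Pi_M\}.
\end{equation}

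Next I would observe that the density operators themselves span $\mathcal{L}(\mathcal{H})$ over $\mathbb{C}$: the $d$ projectors $\ket{i}\bra{i}$ together with the $d(d-1)$ rank-one states $\tfrac{1}{2}(\ket{i}+\ket{j})(\bra{i}+\bra{j})$ and $\tfrac{1}{2}(\ket{i}+\I\ket{j})(\bra{i}-\I\bra{j})$ for $i<j$ constitute $d^2$ linearly independent operators. Consequently the condition ``every state is a complex-linear combination of the $\Pi_j$'' is equivalent to $\mathrm{span}\{\text{states}\}\subseteq\mathrm{span}\{\Pi_j\}$, i.e. to $\mathcal{L}(\mathcal{H})\subseteq\mathrm{span}\{\Pi_j\}$. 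Since $\mathrm{span}\{\Pi_j\}$ is a subspace of the $d^2$-dimensional space $\mathcal{L}(\mathcal{H})$, this inclusion is an equality precisely when $\dim\mathrm{span}\{\Pi_j\}=d^2$, which by the displayed identity is exactly $\rank(\dyadic{V})=d^2$. This settles both implications simultaneously.

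For the ``only then'' (necessity) direction it is worth spelling out the contrapositive concretely, as this is the step most prone to a hidden gap. If $\rank(\dyadic{V})<d^2$ then $\mathrm{span}\{\Pi_j\}$ is a proper subspace, so within the real space of Hermitian operators there exists a nonzero Hermitian $\Delta$ orthogonal (in Hilbert--Schmidt sense) to every $\Pi_j$; summing $\sum_j\Pi_j=I$ forces $\tr{\Delta}=\sinner{I}{\Delta}=\sum_j\sinner{\Pi_j}{\Delta}=0$, so $\Delta$ is traceless. Then for any full-rank state $\rho$ and sufficiently small $t$ the operators $\rho\pm t\Delta$ are both legitimate density matrices yielding identical statistics $\tr{(\rho\pm t\Delta)\Pi_j}=\tr{\rho\Pi_j}$, so the measurement cannot distinguish them and fails to be IC; equivalently, $\Delta$ exhibits an operator, hence a state-difference, not expressible through the $\Pi_j$.

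The argument is essentially a single linear-algebra isomorphism, so I do not anticipate a deep obstacle. The one point requiring genuine care is reconciling the paper's ``states expressible as linear combinations'' formulation of IC with the operational ``outcome probabilities determine $\rho$'' formulation; bridging these relies on the density operators spanning all of $\mathcal{L}(\mathcal{H})$ and on the duality $\tr{\rho\Pi_j}=\sinner{\Pi_j}{\rho}$, both of which I would state explicitly so that the span of the $\Pi_j$ and the span of their vectorizations are never conflated across the two inner products.
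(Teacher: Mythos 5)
Your proposal is correct and follows essentially the same route as the paper: both rest on the fact that vectorization is a linear isomorphism, so that $\rank(\dyadic{V})=d^2$ is equivalent to $\mathrm{span}\{\Pi_j\}=\mathcal{L}(\mathcal{H})$, which in turn is equivalent to every state being a linear combination of the $\Pi_j$. You additionally make explicit two points the paper's one-line proof leaves implicit --- that density operators themselves span $\mathcal{L}(\mathcal{H})$ (needed for the ``only then'' direction) and the operational contrapositive via an undetectable traceless perturbation $\Delta$ --- which strengthens rather than changes the argument.
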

		
		\begin{proof}
			Since $\dyadic{V}$ has rank $d^2=\mathrm{dim}(\mathbb{C}^{d}\otimes \mathbb{C}^{d})$ if and only if it spans a set of vectorized quantum states. By reversing the vectorization (which is an injective map), we note that all quantum state can be expressed as a linear combination of $\left\{\Pi_j\right\}$.
		\end{proof}
		\begin{lemma}
			\label{lem:lem_IC_F}
			A POVM $\{\Pi_j\}$ is IC if the frame operator \cite{scott2006}	
			\begin{equation}
				\mathcal{F}\equiv\sum^M_{j=1}\dfrac{\sket{\Pi_j}\sbra{\Pi_j}}{\tr{\Pi_j}}
				\label{eq:frame_op}
			\end{equation}
			is invertible, and only then. 
		\end{lemma}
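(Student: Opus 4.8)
The plan is to reduce this lemma to Lemma~\ref{lem:lem_IC_V} by tying the invertibility of the frame operator directly to the rank of $\dyadic{V}$. The guiding observation is that $\mathcal{F}$ is nothing but a positively-weighted Gram-type operator built from the columns of $\dyadic{V}$, so its rank must coincide with that of $\dyadic{V}$ itself.

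First I would factorize $\mathcal{F}$. Introducing the diagonal weight matrix $\dyadic{D}\equiv\mathrm{diag}\!\left(\tr{\Pi_1}^{-1},\ldots,\tr{\Pi_M}^{-1}\right)$, which is well-defined and positive-definite because each $\Pi_j>0$ forces $\tr{\Pi_j}>0$, the definition in Eq.~\eqref{eq:frame_op} rewrites as $\mathcal{F}=\dyadic{V}\dyadic{D}\dyadic{V}^{\dagger}$. Setting $\dyadic{W}\equiv\dyadic{V}\dyadic{D}^{1/2}$, this becomes $\mathcal{F}=\dyadic{W}\dyadic{W}^{\dagger}$, which is manifestly positive-semidefinite and hence invertible if and only if it has full rank $d^2$.

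Next I would establish the rank identity $\rank(\mathcal{F})=\rank(\dyadic{V})$. The key step is $\ker(\dyadic{W}\dyadic{W}^{\dagger})=\ker(\dyadic{W}^{\dagger})$: if $\dyadic{W}\dyadic{W}^{\dagger}x=0$ then $x^{\dagger}\dyadic{W}\dyadic{W}^{\dagger}x=\|\dyadic{W}^{\dagger}x\|^{2}=0$, forcing $\dyadic{W}^{\dagger}x=0$, and the reverse inclusion is immediate. Thus $\rank(\mathcal{F})=\rank(\dyadic{W}^{\dagger})=\rank(\dyadic{W})$. Since $\dyadic{D}^{1/2}$ is invertible, right-multiplication by it preserves rank, giving $\rank(\dyadic{W})=\rank(\dyadic{V})$. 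Chaining these equalities, $\mathcal{F}$ is invertible precisely when $\rank(\dyadic{V})=d^{2}$, which by Lemma~\ref{lem:lem_IC_V} is exactly the condition for the POVM to be IC, completing the proof.

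I expect no serious obstacle here, as the argument is essentially routine linear algebra once the factorization $\mathcal{F}=\dyadic{V}\dyadic{D}\dyadic{V}^{\dagger}$ is written down. The only point requiring care is the strict positivity of the weights $\tr{\Pi_j}$, so that $\dyadic{D}$ is genuinely invertible and the rank-preservation step is legitimate; were some element allowed to vanish the frame operator would be ill-defined, but the standing assumption $\Pi_j>0$ rules this out.
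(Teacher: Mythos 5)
Your proposal is correct and follows essentially the same route as the paper: both factorize the frame operator as $\mathcal{F}=\dyadic{V}\,\mathrm{diag}(\tr{\Pi_j})^{-1}\dyadic{V}^{\dagger}$, deduce $\rank(\mathcal{F})=\rank(\dyadic{V})$ via the positive-definiteness of the weights and a kernel argument, and then invoke Lemma~\ref{lem:lem_IC_V}. Your explicit introduction of $\dyadic{W}=\dyadic{V}\dyadic{D}^{1/2}$ makes the rank bookkeeping slightly cleaner than the paper's presentation, but the underlying argument is identical.
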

		\begin{proof}
			Suppose we have the IC POVM. Clearly from Lem.~\ref{lem:lem_IC_V}, $\rank\{\dyadic{V}\}=d^2$. Note  that $\mathcal{F}\,\widehat{=}\,\dyadic{V}\bm{\Pi}^{-1}\dyadic{V}^\dag$, where
			\begin{equation}
				\bm{\Pi}=\begin{pmatrix}
					\tr{\Pi_1} & 0 & \cdots & 0\\
					0 & \tr{\Pi_2} & \cdots & 0\\
					\vdots & \vdots &\ddots & \vdots\\
					0 & 0 & \cdots & \tr{\Pi_M}
				\end{pmatrix}\,.
			\end{equation}
			Also, if $\bm{x}\in {\rm Ker}(\dyadic{V})$, it is easy to see that $\bm{x}\in {\rm Ker}(\mathcal{F})$, or $\bm{x}^{\top}\mathcal{F}\bm{x}=\bm{x}^{\top}\dyadic{V}^{\dag}\bm{\Pi}^{-1}\dyadic{V}\bm{x}=0\rightarrow \|\dyadic{V}\bm{x}\|^2=0$ (the inner product is defined by the Euclidean product with the weights $1/\tr{\Pi_j}>0\;(j\in[M])$). Hence $\dyadic{V}\bm{x}=0\rightarrow \bm{x}\in{\rm Ker}(\dyadic{V})$. So, $\rank(\mathcal{F})=M-{\rm dim}({\rm Ker}(\mathcal{F}))=M-{\rm dim}({\rm Ker}(\dyadic{V}))=\rank(\dyadic{V})$. 
			Since $\rank(\dyadic{V})=d^2$,  $\rank(\mathcal{F})=d^2$. As $\mathcal{F}$ is also a $d^2\times d^2$ matrix, $\mathcal{F}^{-1}$ exists. 
			
			Conversely, suppose that $\mathcal{F}^{-1}$ exists. Then $\rank\{\mathcal{F}\}=d^2$. Since $\rank\{\mathcal{F}\}\le \rank\{\dyadic{V}\}\\\le d^2$, we conclude that $\rank\{\dyadic{V}\}=d^2$. By Lem.~\ref{lem:lem_IC_V}, $\{\Pi_j\}_{j\in J}$ is IC. 
		\end{proof}
		Setting $d=2^n$ therefore yields the desirable result.
		\begin{proposition}\label{thm:thm_IC_ES_comp}
			The collection of ESPOVM and computational basis forms an IC measurement. However, ESPOVM itself is not an IC~POVM. 
		\end{proposition}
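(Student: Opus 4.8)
The plan is to establish informational completeness through the kernel characterization behind Lemma~\ref{lem:lem_IC_V}: a POVM whose elements sum to the identity is IC precisely when no nonzero Hermitian operator has vanishing expectation against \emph{every} element (equivalently, when $\mathrm{span}\{\Pi_j\}$ exhausts the $4^n$-dimensional operator space, i.e. $\dyadic{V}$ has full rank). So for the union of the ESPOVM $\{\ket{\phi^{\rm eq}_\dyadic{A}}\bra{\phi^{\rm eq}_\dyadic{A}}\}$ and the computational basis $\{\ket{\bm{x}}\bra{\bm{x}}\}$, I would take a Hermitian $O$ with $\bra{\bm{x}}O\ket{\bm{x}}=0$ for all $\bm{x}\in\bfZ^n_2$ and $\bra{\phi^{\rm eq}_\dyadic{A}}O\ket{\phi^{\rm eq}_\dyadic{A}}=0$ for all admissible $\dyadic{A}$, and show $O=0$. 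The computational-basis conditions immediately annihilate the entire diagonal of $O$, leaving only the off-diagonal entries $O_{\bm{xy}}=\bra{\bm{x}}O\ket{\bm{y}}$ ($\bm{x}\neq\bm{y}$) to be eliminated.

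First I would expand the equatorial condition in the computational basis using $\braket{\bm{x}|\phi^{\rm eq}_\dyadic{A}}=2^{-n/2}\I^{\bm{x}^\top\dyadic{A}\bm{x}}$, which (after the diagonal terms drop) gives
\begin{equation}
\bra{\phi^{\rm eq}_\dyadic{A}}O\ket{\phi^{\rm eq}_\dyadic{A}}=\frac{1}{2^n}\sum_{\bm{x}\neq\bm{y}}\I^{\bm{y}^\top\dyadic{A}\bm{y}-\bm{x}^\top\dyadic{A}\bm{x}}\,O_{\bm{xy}}=0 .
\end{equation}
The central idea is to \emph{isolate} one coefficient $O_{\bm{uv}}$ ($\bm{u}\neq\bm{v}$) by multiplying this identity by the conjugate phase $\I^{-(\bm{v}^\top\dyadic{A}\bm{v}-\bm{u}^\top\dyadic{A}\bm{u})}$ and averaging over a uniformly random $\dyadic{A}$. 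Since the diagonal entries $a_{ii}\in\bfZ_4$ and the off-diagonal entries $a_{ij}\in\bfZ_2$ are independent, the expectation factorizes into one character sum per matrix entry and equals $1$ exactly when the (linear-in-$\dyadic{A}$) exponent vanishes identically, and $0$ otherwise. Vanishing of the exponent produces two families of conditions tying $(\bm{x},\bm{y})$ to $(\bm{u},\bm{v})$: the diagonal entries force $y_i-x_i=v_i-u_i$ for every $i$, while the off-diagonal entries force $y_iy_j+x_ix_j\equiv u_iu_j+v_iv_j\pmod 2$ for every $i<j$.

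The hard part will be proving that, for $\bm{u}\neq\bm{v}$, these conditions admit $(\bm{x},\bm{y})=(\bm{u},\bm{v})$ as their \emph{only} solution with $\bm{x}\neq\bm{y}$, so that the average projects cleanly onto $O_{\bm{uv}}$. I would argue as follows: the first family determines $(x_i,y_i)=(u_i,v_i)$ on the support $S=\{i:u_i\neq v_i\}$ and forces $x_i=y_i$ on $S^{c}$, leaving those coordinates a priori free; the second family, evaluated on a cross pair with $i\in S$ and $j\in S^{c}$, collapses to $x_j\equiv u_j\pmod 2$ because $u_i+v_i\equiv 1$ on $S$. Since $\bm{u}\neq\bm{v}$ guarantees $S\neq\varnothing$, this pins every free coordinate to $u_j=v_j$, yielding $(\bm{x},\bm{y})=(\bm{u},\bm{v})$. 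Hence the averaged identity reduces to $O_{\bm{uv}}=0$, and ranging over all $\bm{u}\neq\bm{v}$ together with the vanishing diagonal gives $O=0$; by Lemma~\ref{lem:lem_IC_V} the union is IC. (One could instead assemble the frame operator from Lemma~\ref{main:lem1}(iii), but the realignment relating the second moment to $\sum_\dyadic{A}\sket{\Pi_\dyadic{A}}\sbra{\Pi_\dyadic{A}}$ does not preserve rank, so the direct kernel computation is cleaner.)

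Finally, for the negative statement I would exploit the flatness $\lvert\braket{\bm{x}|\phi^{\rm eq}_\dyadic{A}}\rvert^{2}=2^{-n}$, which is the same for every $\bm{x}$ and every $\dyadic{A}$. Consequently every operator in $\mathrm{span}\{\ket{\phi^{\rm eq}_\dyadic{A}}\bra{\phi^{\rm eq}_\dyadic{A}}\}$ has a \emph{constant} computational-basis diagonal, so each traceless diagonal operator $\ket{\bm{x}}\bra{\bm{x}}-\ket{\bm{y}}\bra{\bm{y}}$ ($\bm{x}\neq\bm{y}$) has vanishing expectation against the whole ESPOVM. These operators span a $(2^n-1)$-dimensional subspace lying in the kernel, so the matrix $\dyadic{V}$ of Lemma~\ref{lem:lem_IC_V} has rank at most $4^n-(2^n-1)<4^n$; thus the ESPOVM alone fails to be IC, and the missing diagonal information is exactly what the computational-basis measurement restores.
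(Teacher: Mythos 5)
Your proof is correct, and it takes a genuinely different route from the paper's. The paper works entirely through the frame-operator formalism of its Lemmas~\ref{lem:lem_IC_V} and~\ref{lem:lem_IC_F}: it computes $\mathcal{F}_{\rm es}=2^n\left(\momentop^{\otimes 2}\right)^{\top_2}$ from the already-established second moment [Lem.~\ref{main:lem1}(ii)], exhibits a $-1$ eigenvalue of the operator $W=\sum_{\bm{x}\neq\bm{y}}\ket{\bm{x}\bm{x}}\bra{\bm{y}\bm{y}}$ to conclude $\mathcal{F}_{\rm es}$ is singular (non-IC), and then shows $\mathcal{F}_{\rm es}+\mathcal{F}_{\rm comp}>0$ to get informational completeness of the union. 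You instead work with the equivalent kernel characterization and re-derive the needed orthogonality relations from scratch by a character-sum average over $\dyadic{A}$, using the $\bfZ_4$ structure of the diagonal entries to pin $(\bm{x},\bm{y})=(\bm{u},\bm{v})$ exactly (this is the step that would fail for RESPOVM, where only $\bfZ_2$ phases are available and one would pick up the conjugate term $O_{\bm{vu}}$ as well — consistent with RESPOVM only recovering real parts); for the negative statement you exhibit the kernel directly as the traceless diagonal operators via the flatness $|\braket{\bm{x}|\phi^{\rm eq}_{\dyadic{A}}}|^2=2^{-n}$. The two arguments identify the same obstruction — the paper's $-1$-eigenkets $(\ket{\bm{z}_1\bm{z}_1}-\ket{\bm{z}_2\bm{z}_2})/\sqrt{2}$ are precisely the vectorizations of your operators $\ket{\bm{z}_1}\bra{\bm{z}_1}-\ket{\bm{z}_2}\bra{\bm{z}_2}$ — but yours makes physically transparent exactly what the ESPOVM misses (computational-basis diagonal differences) and why the $Z$-basis measurement is the natural complement, while the paper's buys reuse of the moment lemma and, via positivity of the full frame operator, the explicit invertibility needed for reconstruction. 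Your parenthetical caution that the partial transpose relating the second moment to the frame operator does not preserve rank is well taken; the paper avoids this pitfall by computing the transposed operator explicitly rather than inferring its rank.
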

		
		\begin{proof}
			First, we calculate the frame operator of the ESPOVM. 
			\begin{align}
				\mathcal{F}_{\rm es}=&\,\sum_{\ket{\phi^{{\rm eq}}_\dyadic{A}}\in\mathcal{S}^{{\rm eq}}_n}\frac{2^n}{\left|\mathcal{S}^{{\rm eq}}_n\right|}\ket{\phi^{{\rm eq}}_\dyadic{A}}(\bra{\phi^{{\rm eq}}_\dyadic{A}})^{\top}\bra{\phi^{{\rm eq}}_\dyadic{A}}(\bra{\phi^{{\rm eq}}_\dyadic{A}})^{*}\nonumber\\
				=&\,2^n\left(\momentop^{\otimes 2}\right)^{\top_2},
			\end{align}
			where $\top_2$ means partial transpose for the second part. From Lem.~1, we note that 
			\begin{align}
				\mathcal{F}_\mathrm{es}=&\,\frac{1}{2^n}\left(1+\sum_{\bm{x},\bm{y}\in\mathbb{Z}_2^n}\ket{\bm{x}\bm{x}}\bra{\bm{y}\bm{y}}-\sum_{\bm{x}\in\mathbb{Z}_2^n}\ket{\bm{x}\bm{x}}\bra{\bm{x}\bm{x}}\right)\nonumber\\
				=&\,\frac{1}{2^n}\left(1+\sum_{\bm{x}\in\mathbb{Z}_2^n}\sum_{\bm{y}\neq \bm{x}\in\mathbb{Z}_2^n}\ket{\bm{x}\bm{x}}\bra{\bm{y}\bm{y}}\right)\,.
			\end{align}
			One can show that the operator $W=\sum_{\bm{x}\in\mathbb{Z}_2^n}\sum_{\bm{y}\neq \bm{x}\in\mathbb{Z}_2^n}\ket{\bm{x}\bm{x}}\bra{\bm{y}\bm{y}}$ has at least one eigenvalue equal to $-1$ by inspecting its action on $\ket{\bm{z}\bm{z}}$ for any $\bm{z}\in\mathbb{Z}_2^n$,
			\begin{equation}
				W\ket{\bm{z}\bm{z}}=\sum_{\bm{x}\neq \bm{z}\in\mathbb{Z}_2^n}\ket{\bm{x}\bm{x}}=\sum_{\bm{x}\in\mathbb{Z}_2^n}\ket{\bm{x}\bm{x}}-\ket{\bm{z}\bm{z}}\,.
			\end{equation}
			It follows that any ket $(\ket{z_1,z_1}-\ket{z_2,z_2})/\sqrt{2}$ with $z_1,z_2\in\mathbb{Z}_2^n$ is an eigenket in the ``$-1$''-eigenvalue subspace.
			As a side note, it is easy to check that the superposition $\ket{\phi}\equiv\sum_{\bm{z}\in\mathbb{Z}_2^n}\ket{z,z}2^{-n/2}$ is an eigenket that gives a positive eigenvalue of $2^n-1$,
			\begin{align}
				W\sum_{\bm{z}\in\mathbb{Z}_2^n}\ket{\bm{z}\bm{z}}2^{-n/2}=&\sum_{\bm{x}\in\mathbb{Z}_2^n}\ket{\bm{x}\bm{x}}\sum_{\bm{z}\neq \bm{x}\in\mathbb{Z}_2^n}2^{-n/2}\nonumber\\
				=&\sum_{\bm{x}\in\mathbb{Z}_2^n}\ket{\bm{x}\bm{x}}2^{-n/2}(2^n-1)\,.
			\end{align}
			We may then conclude that $\mathcal{F}_\mathrm{es}$ has at least one null eigenvalue. Based on Lem.~\ref{lem:lem_IC_F}, it follows that the ESPOVM is non-IC.
			
			Upon recognizing that  $\sum_{\bm{x}\in\mathbb{Z}_2^n}\ket{\bm{x}\bm{x}}\bra{\bm{x}\bm{x}}$ is the frame operator $\mathcal{F}_{{\rm comp}}$ for the computational basis, the total frame operator may be written as 
			\begin{align}
				\mathcal{F}=&\,\,\mathcal{F}_\mathrm{es}+\mathcal{F}_\mathrm{comp}\nonumber\\
				=&\,\frac{1}{2^n}+\ket{\phi}\bra{\phi}+\left(1-\frac{1}{2^n}\right)\sum_{\bm{x}\in\bfZ^n_2}\ket{\bm{x}\bm{x}}\bra{\bm{x}\bm{x}}> 0.
			\end{align} 
			Therefore, $\mathcal{F}$ has full rank (with all eigenvalues positive) and is invertible. By Lem.~\ref{lem:lem_IC_F}, \mbox{ESPOVM+computational} basis is IC~POVM.
		\end{proof}
		
		\section{Linear nearest-neighboring (LNN) architecture of CZ~circuits~\cite{maslov2018}}
		
		In this section, we explain how to transform the arbitrary CZ~circuit to the $2n+2$-depth LNN form, which has only neighboring CNOT~gates. We encourage readers who are interested in Clifford circuit synthesis to read Ref.~\cite{maslov2018}, from which some techniques were borrowed. This section briefly discusses some relevant background content in~Ref.~\cite{maslov2018}.
		
		The actual circuit we transform is the so-called $\widehat{\mathrm{CZ}}$~circuit. This is defined by CZ~circuit followed by \emph{qubit-reversal} layers, $\textsc{rev}:\, \ket{x_1,x_2,\ldots,x_n}\mapsto\ket{x_n,x_{n-1},\ldots,x_1}$. However, the CZ~section is equal to \textsc{cz}-\textsc{rev}-\textsc{rev} and the last $\textsc{rev}$ can be ignored by reversing the measurement outcome if we only do the non-adaptive measurement. Hence, if $\widehat{\mathrm{CZ}}$ can be transformed to a ($2n+2$)-depth LNN architecture, the proof is~complete. 
		
		Now, consider some arbitrary $\widehat{\mathrm{CZ}}$~circuit. We will prove this only when $n$ is odd, since that for even $n$ follows similarly. Suppose we take $\ket{x_1,x_2,\ldots,x_n}$ as the input and define the two unitaries 
		\begin{widetext}
		\begin{align}
			&C_1\equiv \mathrm{CNOT}_{1\rightarrow 2}\mathrm{CNOT}_{3\rightarrow 4}\ldots \mathrm{CNOT}_{{n-2}\rightarrow {n-1}}\mathrm{CNOT}_{3\rightarrow 2}\mathrm{CNOT}_{5\rightarrow 4}\ldots \mathrm{CNOT}_{{n}\rightarrow {n-1}}\,,\\
			&C_2\equiv \mathrm{CNOT}_{2\rightarrow 1}\mathrm{CNOT}_{4\rightarrow 3}\ldots \mathrm{CNOT}_{{n-1}\rightarrow {n-2}}\mathrm{CNOT}_{2\rightarrow 3}\mathrm{CNOT}_{4\rightarrow 5}\ldots \mathrm{CNOT}_{{n-1}\rightarrow {n}}
		\end{align}
		\end{widetext}
		and $C\equiv C_1C_2$. Now, let $n=2m+1$. If we operate $C$ on the input $t\le m$~times, the output state becomes
		\begin{align}
			&\,C^{t}\ket{x_1,x_2,\ldots,x_n}\nonumber\\
			=&\,\bigotimes_{i=1}^{n}\ket{[Pj(n-3-2(t-1)+i),Pk(2(t-1)+i)]}\,,
		\end{align}
		where $[j,k](j\ne k)\equiv x_j\oplus x_{j+1}\oplus\cdots\oplus x_{k}$ ($\oplus$ is the ${\rm mod} \;2$ summation), $[j,j]\equiv x_j$ and the \emph{patterns},

		\begin{align}
			&Pj\equiv(n-1,n-3,n-3,\ldots,4,4,2,2,1,1,3,3,\ldots\nonumber\\&n-2,n-2)\,,\\
			&Pk\equiv(3,3,5,5,\ldots,n,n,n-1,n-1,n-3,n-3,\ldots,\nonumber\\&6,6,4,4,2).
		\end{align}
		
		Note that both are of length $2n-3$. Also, $P(j \;{\rm or} \;k)(a)$ where $a\in [2n-3]$ means the $a$-th element of such patterns. We note that given $t\le m$ and $i\in [n]$, there is no case when $[Pj(n-3-2(t-1)+i),Pk(2(t-1)+i)]$ is equal with $[Pj(n-3-2(t'-1)+i'),Pk(2(t'-1)+i')]$ for the other $t'\le m $ and $i'\in[n]$. This is schematically seen in Ref.~\cite{maslov2018}. Also, the number of tuples obtained during $t=0,1,\ldots,m$ is $(m+1)n=\frac{n(n+1)}{2}=\binom{n}{2}+n$. These mean that after we recorded all elements of output ket at each step we operate $S$, for $m$~times, we can see all $[j,k]\;(j,k\in[n])$. Note that $[j,j]$ can be seen at the starting point ($t=0$).
		
		Now, we show that $C^{m+1}$ is a qubit-reversal operation. After the $m$-th operation, we operate $C$ once more and calculate the $p$-th qubit output of the output state for an even $p$. By the  neighboring property, we just need to sum the $\left\{p-1,p,p+1\right\}$-th inputs, resulting in $[n-(2\lfloor\frac{p-1}{2}\rfloor+1),n-(2\lfloor\frac{p-2}{2}\rfloor-1)]\oplus[n-(2\lfloor\frac{p}{2}\rfloor+1),n-(2\lfloor\frac{p-1}{2}\rfloor+1)]\oplus[n-(2\lfloor\frac{p+1}{2}\rfloor+1),n-(2\lfloor\frac{p}{2}\rfloor+1)]=n-p+1$, which is a reversal of~$p$. We can also note that the result for an odd $p$ is also a reversal. Similarly, we conclude that $S^{m+1}$ is a qubit~reversal. 
		
		Next, we recall Eq.~\eqref{eq:syseqn}, that is $y_i=x_1\oplus x_2\oplus\ldots\oplus x_i$, in the main text. In this context, $y_i=[1,i]$ and $y_j\oplus y_k=[j+1,k]$. Hence, one can achieve $i^{u_jy_j}$ and $i^{u_{j,k}(y_j\oplus y_k)}$ ($u_j,u_{j,k}\in\bfZ_4$) by inserting phase gates in each section of the $C$-operated output, or at the starting point. Also, the depth of the LNN architecture is $4(m+1)=2n+2$. The 1-to-1 mapping $(t,i)\leftrightarrow [Pj(n-3-2(t-1)+i),Pk(2(t-1)+i)]$ must be recorded in $\mathcal{O}(n^2)$-memory before the circuit sampling so that we do not need to be bothered with the time to find the right location for phase-gate insertion. 
		A similar consideration applies when $n$ is even but with slightly different $C_1, C_2$, and patterns~\cite{maslov2018}. In conclusion, we can realize the $\widehat{\mathrm{CZ}}$~operation with only nearest-neighboring CNOT and phase gates of depth $2n+2$. Lastly, we note that $Y$-basis measurements can be carried with $X$-basis measurements after phase gates. If these phase operations are performed before the ${\rm CZ}$ action to transform the ${\rm CZ}$ circuit into an LNN architecture, the double-layered rightmost CNOT~gates can simply be regarded as classical post-processing of the $X$-basis outcome of the equatorial stabilizer measurement. Therefore, only $2n$-depth circuits are needed.

\bibliographystyle{apsrev4-2-titles}
\bibliography{rev-bibliography}

\end{document}